\newtheorem{thm}{Theorem}
\newtheorem{assumption}{Assumption}
\newtheorem{lemma}{Lemma}
\DeclarePairedDelimiter\ceil{\lceil}{\rceil}
\DeclareMathOperator*{\argmin}{arg\,min}
\DeclareMathOperator*{\argmax}{arg\,max}
\newcommand{\blind}{0}
\begin{document}

\def\spacingset#1{\renewcommand{\baselinestretch}%
{#1}\small\normalsize} \spacingset{1}


\if0\blind
{
  \title{\bf  Change Point Detection in Pairwise Comparison Data with Covariates}
  \author{Yi Han\footnote{yyihan@ucdavis.edu
    }\hspace{.2cm}\\
    Department of Statistics, University of California, Davis\\
    and \\
    Thomas C. M. Lee\thanks{Corresponding Author. tcmlee@ucdavis.edu.  The authors gratefully acknowledge the support of the National Science Foundation under grants DMS-2113605 and DMS-2210388.} \\
    Department of Statistics, University of California, Davis}
  \maketitle
} \fi

\if1\blind
{
  \bigskip
  \bigskip
  \bigskip
  \begin{center}
    {\LARGE\bf Title}
\end{center}
  \medskip
} \fi

\bigskip
\begin{abstract}
This paper introduces the novel piecewise stationary covariate-assisted ranking estimation (PS-CARE) model for analyzing time-evolving pairwise comparison data, enhancing item ranking accuracy through the integration of covariate information. By partitioning the data into distinct, stationary segments, the PS-CARE model adeptly detects temporal shifts in item rankings, known as change points, whose number and positions are initially unknown. Leveraging the minimum description length (MDL) principle, this paper establishes a statistically consistent model selection criterion to estimate these unknowns. The practical optimization of this MDL criterion is done with the pruned exact linear time (PELT) algorithm. Empirical evaluations reveal the method's promising performance in accurately locating change points across various simulated scenarios. An application to an NBA dataset yielded meaningful insights that aligned with significant historical events, highlighting the method's practical utility and the MDL criterion's effectiveness in capturing temporal ranking changes.  To the best of the authors' knowledge, this research pioneers change point detection in pairwise comparison data with covariate information, representing a significant leap forward in the field of dynamic ranking analysis.
\end{abstract}

\noindent%
{\it Keywords:}  Bradley-Terry-Luce model; covariate-assisted ranking estimation (CARE) model; minimum description length (MDL); pruned exact linear time (PELT) algorithm; ranking problem
\vfill

\newpage
\spacingset{1.75} 

\section{Introduction}
\label{sec:intro}

The ranking problem has long held a pivotal role in numerous real-life applications, spanning diverse areas such as recommendation systems \citep{baltrunas2010group}, university admissions \citep{strobl2011accounting}, 
sports analytics \citep{mchale2011bradley, cattelan2013dynamic}, election candidate evaluations \citep{plackett1975analysis}, and web search algorithms \citep{dwork2001rank}. These rankings not only offer insights into the comparative quality of entities but also shape subsequent decisions, highlighting the problem's significance. Over the years, this has led to a collection of methods designed to tackle ranking problems, including 
\cite{huang2006ranking},
\cite{buyse2010generalized}, 
and \cite{newman2023efficient}.

Of all these, the Bradley-Terry-Luce (BTL) model, introduced by \cite{bradley1952rank} and later expanded upon by \cite{luce1959individual}, stands out due to its widespread adoption. This model postulates an intrinsic score for each item being compared. Given $n$ items undergoing pairwise comparison, denoted by intrinsic scores $\{\theta_i\}_{i=1}^{n}$, the probability that item $i$ ranks above item $j$ is given by:
\begin{equation}
P(\text{item $i$ beats item $j$}) = \frac{e^{\theta_i}}{e^{\theta_i}+e^{\theta_j}}.
\label{eqn:btl}
\end{equation}

The comparison result is denoted as $y\in \{0,1\}$. Several algorithms are also proposed to solve the problem \cite{caron2012efficient}. However, the classic BTL model assumes static intrinsic scores, overlooking item-related covariates. Such oversight can lead to inefficient use of important information present in the data, especially when these covariates are available and relevant. For instance, a movie recommendation system might benefit from considering variables like genre and duration, just as a National Basketball Association (NBA) match prediction could be improved by accounting for team-specific attributes such as offensive and defensive capabilities.

Recognizing this limitation, \cite{fan2022uncertainty} introduced the covariate-assisted ranking estimation (CARE) model. This innovative approach incorporates covariate information by assuming the intrinsic score for item $i$ is the sum of two components
\begin{equation*}
\theta_i = \alpha_i^*+\boldsymbol{z}_i^{\top}\boldsymbol{\beta}^*,
\label{eqn:CARE2part}
\end{equation*}
where $\boldsymbol{z}_i\in\mathbb{R}^d$ and $\boldsymbol{\beta}^*\in\mathbb{R}^d$ are, respectively, the observed covariate vector and the coefficient vector for item $i$.  Together, the term $\boldsymbol{z}_i^{\top}\boldsymbol{\beta}^*$ captures the effects of these covariates, while $\alpha_i^*$ represents the information that cannot be explained by the covariates.  
Below are more details about the CARE model. It has been shown that the CARE model performs better than the classic BTL model in predictions and inferences when useful covariate information is available. 

Though CARE offers a significant advancement, many real-world applications involve sequential comparisons, which means scores might evolve over time. While several methods have been developed to address this by assuming smooth temporal transitions in the BTL model \citep{cattelan2013dynamic, glickman1999parameter, glickman2001dynamic, bong2020nonparametric}, they often do not account for sudden shifts, such as an NBA team's star player's injury. A notable exception is the recent work of \cite{li2022detecting} that recognizes these abrupt changes. However, its focus was limited to the traditional BTL model, neglecting the rich covariate data.

This paper introduces a systematic approach to detect abrupt changes while simultaneously accounting for covariate information. The core concept involves partitioning the data temporally into distinct segments and fitting a separate CARE model to each. Consequently, the junctions where adjacent CARE models converge signify abrupt changes. These junctures are hereafter referred to as {\em change points}. Change point detection is a problem that attracts enormous attention today \citep[e.g.,][]{zhao2021alternating, han2024some, wang2024statistical, wang2024improvinglungcancerdiagnosis,10526478,safikhani2022fast}.  The task of estimating both the number and precise locations of these change points is non-trivial. To address this, this paper employs the minimum description length (MDL) principle \citep{rissanen1998stochastic,rissanen2007information} as an estimator. Furthermore, the PELT algorithm \citep{killick2012optimal} is harnessed for the practical estimation process. Given the past successes of both MDL and PELT in diverse change point detection challenges, it is not surprising that the proposed method exhibits both compelling theoretical and empirical strengths.

The rest of this paper is organized as follows. Section~\ref{section2} introduces the model formulation for the piecewise stationary covariate-assisted ranking estimation (PS-CARE) model. Section~\ref{section3} derives the MDL criterion for estimating the unknowns in the PS-CARE model.  It also studies the theoretical properties of the criterion. Section~\ref{section4} develops a PELT algorithm to minimize the MDL criterion.  The empirical performance of the proposed method is illustrated in Section~\ref{section5} via various numerical simulations and in Section~\ref{section6} via an application to some real NBA match data.  Lastly, concluding remarks are offered in Section~\ref{section7}, while technical details are provided in the appendix.

\section{Model Formulation}
\label{section2}
This section provides the precise definition of the PS-CARE model for change point detection.  We first describe the CARE model of \cite{fan2022uncertainty}.

\subsection{The CARE Model}
Recall that in the CARE model, the intrinsic core for item $i$ is modeled as the sum of two components~(\ref{eqn:CARE2part}): one component $\boldsymbol{z}_i^{\top}\boldsymbol{\beta}^*$ captures the contributions of the covariates and the other component $\alpha_i^*$ captures the variations that cannot be explained by the covariates. With this modification, the BTL score~(\ref{eqn:btl}) becomes:
\begin{equation*}
    P(\text{item $i$ beats item $j$}) = \frac{e^{\alpha_i^*+\boldsymbol{z}_i^{\top}\boldsymbol{\beta}^*}}{e^{\alpha_i^*+\boldsymbol{z}_i^{\top}\boldsymbol{\beta}^*}+e^{\alpha_j^*+\boldsymbol{z}_j^{\top}\boldsymbol{\beta}^*}}.
\end{equation*}
Additional constraints are required to make the CARE model identifiable. Write $\boldsymbol{\alpha}^*=(\alpha_1^*,\cdots, \alpha_n^*)$ and $\boldsymbol{Z} =[\boldsymbol{z}_1, \cdots, \boldsymbol{z}_n]^{\top}$. These additional constraints are $\sum_{i=1}^n\alpha_i^*=0$ and $\boldsymbol{Z}^{\top}\boldsymbol{\alpha}^*=0$.

Furthermore, we collect all the parameters in $\boldsymbol{\xi}=(\boldsymbol{\alpha}^*,\boldsymbol{\beta}^{*\top})^{\top}$, and denote $\boldsymbol{w_i}=[1, \boldsymbol{z}_i]^{\top}$ and $\boldsymbol{W}=[\boldsymbol{w_1},\cdots,\boldsymbol{w_n}]^{\top}$. With these, the constrained parameter set is defined as $\Theta_n=\{(\boldsymbol{\alpha},\boldsymbol{\beta}): \boldsymbol{W}^{\top}\boldsymbol{\alpha}=\boldsymbol{0}\}$.


Efficient methods for parameter estimation and uncertainty quantification for the CARE model are provided by \cite{fan2022uncertainty}.

\subsection{Piecewise Stationary CARE model}
This subsection presents the PS-CARE model for change point detection for the ranking problem, where the pairwise comparisons are done sequentially.  Briefly, a PS-CARE model is a concatenation of a sequence of different CARE models, where changes occur when one model switches to a different one.

We need some notation to proceed.  We assume $T$ pairwise comparisons are performed at $T$ distinct time points $t=1, \cdots, T$.  For any positive integer $N$, we denote $[N]$ as the set $\{1, \cdots, N\}$ containing all positive integers less than or equal to $N$.  Thus, the comparisons occurred at $t\in [T]$.  

Let $\boldsymbol{\xi}{(t)}$ denote the value of $\boldsymbol{\xi}$ at time $t$.  We assume there are $K \geq 1$ change points $\{\tau_1, \cdots, \tau_K\}$ such that the following conditions are satisfied:
\begin{itemize}
\item $1=\tau_0<\tau_1<\tau_2<\cdots<\tau_K<\tau_{K+1}=T$,
\item $\tau_k \in \{1, \cdots, T\}$ for all $k=1, \cdots, K$,
\item $\boldsymbol{\xi}{(t)}\neq \boldsymbol{\xi}(t-1)$ if $t \in \{\tau_1, \cdots, \tau_K\}$, and 
\item $\boldsymbol{\xi}{(t)} = \boldsymbol{\xi}(t-1)$ if $t \notin \{\tau_1, \cdots, \tau_K\}$. 
\end{itemize}

In other words, the $K$ change points $\{\tau_1, \cdots, \tau_K\}$ partition the whole time span $\{1, \cdots, T\}$ into $K+1$ segments, for which the value of $\boldsymbol{\xi}{(t)}$ remaining within each segment. The values of $K$ and $\tau_k$'s are unknown and need to be estimated.

We also denote the relative location of $\{\tau_1, \cdots, \tau_K\}$ to be $\boldsymbol{\lambda} = (\lambda_1,\cdots,\lambda_K)$, where $\lambda_k = \tau_k/T$ for $i = 1,\cdots, K$, and naturally we set $\lambda_0=0$ and $\lambda_{K+1}=1$. 
Assume that at time point $t$ item $i$ and item $j$ are compared and 
\begin{equation}
\label{p_ij_CARE}
    P(y_t = 1) := P(\text{item $i$ beats item $j$ at time $t$}) =  P[y_t=1|\boldsymbol{\xi}(t)] 
 = \frac{e^{\alpha_{it}^*+\boldsymbol{z}_i^{\top}\boldsymbol{\beta}^*_t}}{e^{\alpha_{it}^*+\boldsymbol{z}_i^{\top}\boldsymbol{\beta}^*_t}+ e^{\alpha_{jt}^*+\boldsymbol{z}_j^{\top}\boldsymbol{\beta}^*_t}},
\end{equation}
where $(\alpha_{1t}^*,\alpha_{2t}^*,\cdots,\alpha_{nt}^*,\boldsymbol{\beta}_t^*)=\boldsymbol{\xi}(t)$. We further denote $\boldsymbol{\xi}_T^{(k)} := \boldsymbol{\xi}(t), \forall t\in \{\tau_{k-1}+1,\cdots,\tau_{k}\}$ (if time $t$ belongs to the $k$-th segment).  

Define $\boldsymbol{\Tilde{z}}_i=(\boldsymbol{e}_i^{\top}, \boldsymbol{z}_i^{\top})^{\top}$, where $\{\boldsymbol{e}_i\}_{i=1}^n$ represents the canonical basis vectors in $\mathbb{R}^n$.  Let $\boldsymbol{z}_t^*=\boldsymbol{\Tilde{z}}_i - \boldsymbol{\Tilde{z}}_j$ if items $i,j\in\{1,2,\cdots,n\}$ are compared at time $t$ where $t$ belongs to the $k$-th segment or $\tau_{k-1}+1\leq t<\tau_{k}$,
then the log-likelihood function of observation $y_t$ can be written as 
\begin{equation}
\label{log_like_one_observation}
    l_k(\boldsymbol{\xi}_T^{(k)}; y_{t}, \boldsymbol{z}_t^*) := y_t\boldsymbol{z}_t^*\boldsymbol{\xi}_T^{(k)} - \log(1+\exp(\boldsymbol{z}_t^{*T}\boldsymbol{\xi}_T^{(k)})).
\end{equation}

The log-likelihood of all data is then given by 
\begin{equation}
\label{all_log_likeihood}
    L_T(\boldsymbol{y}) = \sum_{k=1}^{K+1}\sum_{t=\tau_{k-1}+1}^{\tau_k}l_k(\boldsymbol{\xi}_T^{(k)}; y_{t,k}, \boldsymbol{z}_t^*) = \sum_{k=1}^{K+1}\sum_{t=\tau_{k-1}+1}^{\tau_k}y_{t,k}\boldsymbol{z}_t^*\boldsymbol{\xi}_T^{(k)} - \log(1+\exp(\boldsymbol{z}_t^{*T}\boldsymbol{\xi}_T^{(k)})).
\end{equation}
The main goal is to estimate the number and locations of the change points as well as the parameters related to the intrinsic scores for the items. In other words, we want to estimate $K$, $\boldsymbol{\lambda}$ and $\boldsymbol{\xi}_T^{(k)}$ for $k=\{1,2,\cdots, K+1\}$. Thus, we can obtain the ranking of $n$ items in each segment and recover the PS-CARE model. In order to estimate all these parameters simultaneously, the minimum description length criterion is applied. 


\section{Change Point Detection using MDL}
\label{section3}

The MDL principle is a popular and effective method for deriving effective selection criteria for model selection problems. Instead of assuming the data is generated from a given model, it views statistical modeling as a means of generating descriptions of the observed data. Thus, it defines the best-fitting model as the one that compresses the data into the shortest possible code length for storage, where the code length can be thought of as the number of bits needed to store the observed data. The MDL principle was proposed by \cite{rissanen1998stochastic,rissanen2007information} and has been successfully applied to solve various model selection problems such as image segmentation \citep{lee2000minimum}, network constructions \citep{bouckaert1993probabilistic}, and quantile and spline regression \citep{leerobust02, aue2014segmented}.  There are various versions of MDL criteria, and this paper focuses on the so-called ``Two-Part MDL'' \citep[e.g.,][]{lee2001introduction}.  This section derives the corresponding MDL criterion for fitting a PS-CARE model.

\subsection{Derivation of the MDL Criterion}
To store the observed comparison results $\boldsymbol{y}=\{y_1,y_2\cdots,y_T\}$, the data can be divided into two parts: the first part is a fitted model and the second part is the corresponding residuals that cannot be explained by the fitted model.  If the fitted model describes the data well, it will be more economical to store the data in this way.  
Denote CL$(z)$ as the code length required to store any object $z$; thus, we want to minimize CL$(\text{``observed data''})$.  Also, denote the whole class of PS-CARE models as $\mathcal{M}$.  Lastly, denote any model in $\mathcal{M}$ as $\mathcal{F}\in\mathcal{M}$, the estimated version of $\mathcal{F}$ as $\hat{\mathcal{F}}$,  and its corresponding residuals as $\hat{\mathcal{E}}$.  Then we have
\begin{eqnarray}
\text{CL}(\text{``observed data''}) & = & 
\text{CL(``fitted model'')} + \text{CL(``residuals'')} \nonumber \\
& = & \text{CL}(\hat{\mathcal{F}}) + \text{CL}(\hat{\mathcal{E}}|\hat{\mathcal{F}}). \label{MDL_data}
\end{eqnarray}

Now, we need computable expressions for $\text{CL}(\hat{\mathcal{F}})$ and $\text{CL}(\mathcal{\hat{\mathcal{E}}}|\hat{\mathcal{F}})$ and we first calculate $\text{CL}(\hat{\mathcal{F}})$.  Notice that in order to completely determine a model $\mathcal{F}$ for a PS-CARE model, the parameters that we need to know including change points number $K$ and their locations $\boldsymbol{\tau}=\{\tau_1,\cdots,\tau_{K}\}$. In addition, for each segment $k=1, \ldots, K+1$, we need to know the intrinsic score related parameters 
$
    \boldsymbol{\xi}_T^{(k)} = (\alpha_{1,k},\alpha_{2,k},\cdots,\alpha_{n,k},\boldsymbol{\beta}_{k}), 
$
for the $k$-th segment.  Write $\boldsymbol{\hat{\xi}_T}=(\boldsymbol{\hat{\xi}}_T^{(1)},\cdots,\boldsymbol{\hat{\xi}}_T^{(K+1)})$. Then we have $\hat{\mathcal{F}} = (K,\boldsymbol{\tau},\boldsymbol{\hat{\xi}_T})$, so the first part code length for fitted model $\hat{\mathcal{F}}$ can be represented as 
\begin{equation}
\label{eqn:CLF}
    \text{CL}(\hat{\mathcal{F}}) = \text{CL}(K) + \text{CL}(\boldsymbol{\tau}) +\text{CL}(\boldsymbol{\hat{\xi}_T}).
\end{equation}


In general \citep{rissanen1998stochastic}, it takes about $\log(I)$ bits to encode an unknown integer $I$, and it takes $\log(I_u)$ bits to encode it if we know that it has an upper bound $I_u$. So the first two terms on the RHS of~(\ref{eqn:CLF}) are
\begin{eqnarray}
\text{CL}(K) &=& \log(K+1), 
\label{MDL_K} \\
\text{CL}(\boldsymbol{\tau}) & = & (K+1)\log(T), \label{MDL_F1}
\end{eqnarray}
where the additional $1$ in $\text{CL}(K)$ is used to make the formula meaningful when there are no change points, i.e., $K=0$. 

It remains to calculate the last term in~(\ref{eqn:CLF}). To obtain $\text{CL}(\boldsymbol{\hat{\xi}_T})$, we need to first estimate $\boldsymbol{\hat{\xi}_T}$ from model~(\ref{p_ij_CARE}) and then encode the estimated values we calculated.  For estimation, we shall use the maximum likelihood method of \cite{fan2022uncertainty}, which has been proven to possess excellent asymptotic properties.  Meanwhile, for encoding the maximum likelihood estimate we obtained, we shall use the result of \cite{rissanen1998stochastic} that any (scalar) maximum likelihood estimate calculated from $N$ observations can be effectively encoded with $\frac{1}{2}\log(N)$ bits. The maximum likelihood estimate can be obtained by running a projected gradient descent algorithm, and we shall denote the maximum likelihood estimator of $\boldsymbol{\xi}_T$ by $\boldsymbol{\hat{\xi}}_T$.

As mentioned before, to encode a scalar maximum likelihood estimate, the code length is $\frac{1}{2}\log(N)$ if $N$ observations were used for estimation.   Therefore, for the PS-CARE model, we have 
\begin{equation}
\label{MDL_MLE}
    \text{CL}(\boldsymbol{\hat{\xi}}_T) = \sum_{k=1}^{K+1}\frac{n+d-1}{2}\log(n_k),
\end{equation}
where $n_k=\tau_{k}-\tau_{k-1}$ represents the length of $k$-th segment.

The second and last part in~(\ref{MDL_data}) that we need to calculate is $\text{CL}(\hat{\mathcal{E}}|\hat{\mathcal{F}})$, which is the residuals of the fitted model $\hat{\mathcal{F}}$. It equals the negative log (base 2) of the likelihood of the fitted model $\hat{\mathcal{F}}$ \citep{rissanen1998stochastic}.  From~$(\ref{all_log_likeihood})$ we have 
\begin{equation}
\label{MDL_residual}
\begin{aligned}
    \text{CL}(\hat{\mathcal{E}}|\hat{\mathcal{F}}) &= \sum_{k=1}^{K+1}\sum_{t=\tau_{k-1}+1}^{\tau_k}\left(-y_{t,k}\boldsymbol{z}_t^{*T}\boldsymbol{\xi}_T^{(k)} + \log\left(1+\exp(\boldsymbol{z}_t^{*T}\boldsymbol{\xi}_T^{(k)})\right)\right)\log_2e.
\end{aligned}
\end{equation}

Combining~(\ref{MDL_K}), (\ref{MDL_F1}), (\ref{MDL_MLE}) and~(\ref{MDL_residual}) and using logarithm base $e$ instead of base $2$, (\ref{MDL_data}) becomes
\begin{eqnarray}
\text{CL}(\text{``data''}) & = & \log(K+1) + (K+1)\log(T) 
+ \sum_{k=1}^{K+1}\left(\frac{n+d-1}{2}\log(n_k) \right) \nonumber \\ 
& & + \sum_{k=1}^{K+1}\sum_{t=\tau_{k-1}+1}^{\tau_k}\left(-y_{t,k}\boldsymbol{z}_t^{*T}\boldsymbol{\xi}_T^{(k)} + \log\left(1+\exp(\boldsymbol{z}_t^{*T}\boldsymbol{\xi}_T^{(k)})\right)\right)\log_2e \nonumber \\
&:= & \text{MDL}(K,\boldsymbol{\tau},\boldsymbol{\xi}_T). \label{MDL_all}
\end{eqnarray}
Thus, the MDL principle suggests that the best-fitting PS-CARE model for the observed data $\boldsymbol{y}=\{y_1,y_2\cdots,y_T\}$, is the one $\hat{\mathcal{F}}\in\mathcal{M}$ that minimizes $(\ref{MDL_all})$.

\subsection{Theoretical Properties}
Denote the true number of change points as $K_0$ and the true locations of the change points as $\boldsymbol{\tau_0} = \{\tau_1^0,\cdots,\tau_{K_0}^0\}$.  Define the true relative change points locations as $\boldsymbol{\lambda}_0 = \{\lambda_1^0,\cdots,\lambda_{K_0}^0\}$ with $\tau_k^0=\lfloor \lambda_k^0T \rfloor$ for $k=1, \ldots, K_0$, where $\lfloor x \rfloor$ represents the greatest integer that is less than or equal to $x$. Note that the theoretical results in this subsection will be presented in terms of $\boldsymbol{\lambda}$ instead of $\boldsymbol{\tau}$ since they are equivalent.  

As suggested by \cite{davis2006structural}, for each segment, a sufficient number of comparisons are required to estimate the corresponding CARE model parameters adequately.  For this reason, we impose the following constraint on the estimate of $\boldsymbol{\lambda}$.  First, choose $\epsilon_\lambda>0$ sufficiently small enough that $\epsilon_\lambda\ll\min_{k=1,\cdots,K_0+1}(\lambda_k^0-\lambda_{k-1}^0)$. Then define 
\begin{multline}
\label{lambda_assump}
A^{K}_{\epsilon_\lambda}=\left\{\left(\lambda_{1}, \ldots, \lambda_{K}\right), 0=\lambda_0<\lambda_{1}<\cdots<\lambda_{K}<\lambda_{K+1}=1, \right. \\
\left. \lambda_{k}-\lambda_{k-1} \geq \epsilon_\lambda, k=1,2, \ldots, K+1\right\},
\end{multline}
so we require the estimate of $\boldsymbol{\lambda}$ to be an element of $A^{K}_{\epsilon_\lambda}$. Under this constraint, the number of change points is also bounded by $M=\ceil*{1/\epsilon_\lambda}+1$. As for coefficient $\boldsymbol{\xi}_T$, its constraint set is $\Theta_n = \{(\boldsymbol{\alpha},\boldsymbol{\beta}): \boldsymbol{W}\boldsymbol{\alpha}=\boldsymbol{0}\}$, which guarantees the model identifiability.

To obtain the maximum likelihood estimator with desirable properties for the CARE model, several other assumptions are required \citep{fan2022uncertainty}.

\begin{assumption}
\label{assumption_1}
Denote the projected matrix as $\mathcal{P}_{\boldsymbol{W}}:=\boldsymbol{W}(\boldsymbol{W}^{\top}\boldsymbol{W})^{-1}\boldsymbol{W}^{\top}$. Assume that there exists a positive constant $c_0$ such that 
\begin{equation}
    \left\|\mathcal{P}_{\boldsymbol{W}}\right\|_{2, \infty}=\left\|\boldsymbol{W}\left(\boldsymbol{W}^{\top} \boldsymbol{W}\right)^{-1} \boldsymbol{W}^{\top}\right\|_{2, \infty} \leq c_0 \sqrt{(d+1) / n}.
\end{equation}
\end{assumption}

Assumption \ref{assumption_1} is called the incoherence condition that guarantees the rows of $\mathcal{P}_{\mathbf{W}}$ to be nearly balanced and the row sum of squares all of order $(d+1)/n$ or smaller.

\begin{assumption}
    
\label{assumption_2}
Denote $\boldsymbol{\Tilde{z}}_i=(\boldsymbol{e}_i^{\top}, \boldsymbol{z}_i^{\top})^{\top}$, where $\{\boldsymbol{e}_i\}_{i=1}^n$ represents the canonical basis vectors in $\mathbb{R}^n$ and  $\boldsymbol{\Sigma}=\sum_{i>j}\left(\boldsymbol{\Tilde{z}}_i-\boldsymbol{\Tilde{z}}_j\right)\left(\boldsymbol{\Tilde{z}}_i-\boldsymbol{\Tilde{z}}_j\right)^{\top}$. Assume there exists positive constants $c_1$ and $c_2$ such that 
\begin{equation}
    c_2 n \leq \lambda_{\min , \perp}(\boldsymbol{\Sigma}) \leq\|\Sigma\|_{op} \leq c_1 n,
\end{equation}
where $\|\Sigma\|_{op}$ is the operator norm of $\boldsymbol{\Sigma}$ and 
\begin{equation}
    \lambda_{\min , \perp}(\boldsymbol{\Sigma}):=\min \left\{\boldsymbol{u} \mid \boldsymbol{u}^{\top} \boldsymbol{\Sigma} \boldsymbol{u} \geq \mu\|\boldsymbol{u}\|_2^2 \text { for all } \boldsymbol{u} \in \Theta_n\right\}.
\end{equation}
\end{assumption}

Assumption \ref{assumption_2} constraints the covariance matrix $\boldsymbol{\Sigma}$ to be well-behaved in all directions inside the parameter space $\Theta_n$ by restricting its largest and smallest eigenvalues are both of order $n$.

Now, we introduce a connected graph notion that describes the sampling scheme for collecting comparison data over time. Following \cite{li2022detecting}, we consider a connected comparison graph $\mathcal{G}=\mathcal{G}([n], E)$ with edge set $E \subseteq E_{\text {full }}:=\{(i, j):1 \leq i<j \leq n\}$.  At each time point $t \in [T]$, an element $(i_t, j_t) \in [n]\times[n]$ is randomly selected from the edge set $E$ to determine which two items are to be compared. This selection process is independent over time.  In our work, we do not require the graph to be fully connected. That is, we do not require every item to be compared with every other item.  For a specific time interval $\mathcal{I}$, we define similarly a random comparison graph $\mathcal{G_I}(V_\mathbf{I},E_\mathbf{I})$ with vertex set $V:=[n]$ and edge set $E_{\mathcal{I}}:=\{(i, j): i \text { and } j \text { are compared in } \mathcal{I}\} \subset E$.  This graph notation will be useful in studying the theoretical properties of the proposed method and will be first used by Assumption~\ref{assumption_3} below.



\begin{assumption}
\label{assumption_3}
In each segment $I_k$, $k = 1,\cdots,K+1$, suppose $L_{g,h,k}$ represents the number of times items $g$ and $h$ compared in segment $I_k$, $g,h\in\{1,2,\cdots,n\}$, thus $\sum_{g,h\in\mathcal{G}_k}L_{g,h,k}=t_k$, where $t_k$ represents the time points of segment $I_k$ and $\mathcal{G}_k$ represents the graph of segment $I_k$. Let $L_{min,k}=\min_{g,h\in[n]^2}(L_{g,h,k})$. It is required $L_{min,k}\leq c_1\cdot n^{c_2}$ for any absolute constants $c_1, c_2>0$.  
Also, it is required that $n\cdot \frac{L_{min,k}}{t_k}>c_p\log(n)$ for some $c_p>0$ and $d+1<n$, $(d+1)\log(n) \lesssim n\cdot \frac{L_{min,k}}{t_k}$, where the notation $a_n\lesssim b_n$ denotes $a_n=O(b_n)$.






\end{assumption}

Assumption \ref{assumption_3} guarantees in each segment, the number of comparisons for every two items should meet some lower bound related to the covariate dimensions and item numbers. This assumption also guarantees the connectivity of graph $\mathcal{G}_k$ as well as the consistency of the maximum likelihood estimator in each segment \citep{fan2022uncertainty}.


Using this assumption and ($\ref{MDL_all}$), the unknown parameters are given by
\begin{equation}
\label{MDL_estimate}
    \{\hat{K},\boldsymbol{\hat{\lambda}_T},\boldsymbol{\hat{\xi}}_T\} = \argmin_{K\leq M, \boldsymbol{\xi}_T\in\Theta_n, \boldsymbol{\lambda}\in A^K_{\epsilon_\lambda}}\frac{2}{T}\text{MDL}(K,\boldsymbol{\lambda},\boldsymbol{\xi}_T).
\end{equation}

\begin{thm}
\label{Theorem1}
For the PS-CARE model given by~$(\ref{p_ij_CARE})$, denote the true number of change points as $K_0$ and the true relative locations of change points as $\boldsymbol{\lambda}^0$. The estimate $\hat{\boldsymbol{\lambda}}$ defined by~(\ref{MDL_estimate}) satisfies
\begin{equation}
\label{Theorem 1}
    \hat{K} \stackrel{P}\longrightarrow K_0, \quad \boldsymbol{\hat{\lambda}_T} \stackrel{P} \longrightarrow \boldsymbol{\lambda}^0.
\end{equation}
\end{thm}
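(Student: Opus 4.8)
The plan is to follow the two-stage strategy that has become standard for MDL/penalized-likelihood consistency in change point problems (as in \cite{davis2006structural} and \cite{li2022detecting}), adapted to the CARE likelihood. Throughout, write $\mathrm{MDL}(K,\boldsymbol\lambda,\boldsymbol\xi_T)$ for the objective in~(\ref{MDL_all}) and let $\widehat{\boldsymbol\xi}_T(K,\boldsymbol\lambda)$ be the segmentwise MLE for a given candidate segmentation; the profiled criterion is then a function of $(K,\boldsymbol\lambda)$ only. The key analytic input is the per-segment behavior of the CARE log-likelihood: under Assumptions~\ref{assumption_1}--\ref{assumption_3}, on any interval $\mathcal I$ that lies strictly inside a single true segment, the MLE is consistent and the log-likelihood ratio $\sup_{\boldsymbol\xi}\sum_{t\in\mathcal I} l(\boldsymbol\xi;y_t,\boldsymbol z_t^*) - \sum_{t\in\mathcal I} l(\boldsymbol\xi_0;y_t,\boldsymbol z_t^*)$ is $O_P\big((n+d)\log|\mathcal I|\big)$, i.e. of the order of the parameter count times a logarithmic factor — this is the Wilks-type bound that must be extracted from \cite{fan2022uncertainty}. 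Conversely, on an interval straddling a true change point where the parameter genuinely jumps, the fitted single model incurs a likelihood deficit that is \emph{linear} in $|\mathcal I|$ (times $n$, via the signal strength guaranteed by Assumption~\ref{assumption_2}), since two distinct CARE parameter vectors cannot both be well-approximated by one. These two facts — logarithmic cost of over-segmentation versus linear cost of under-segmentation — are the seesaw that pins down the estimator.

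First I would prove that $\widehat K$ cannot undercount, i.e. $P(\widehat K \ge K_0)\to 1$, and more precisely that with probability tending to one every true change point $\tau_k^0$ has an estimated change point within $o(T)$ of it. The argument is by contradiction: if some true break is missed (or approximated only to within $\delta T$ for fixed $\delta>0$), then some fitted segment overlaps substantial portions of two true segments with distinct parameters; plugging the true (piecewise) parameters into $\mathrm{MDL}$ beats the purported minimizer by an amount of order $T$ (from the linear likelihood deficit), which dominates the $O(\log T)$ and $O((n+d)\log T)$ penalty differences — contradicting optimality of $(\widehat K,\widehat{\boldsymbol\lambda})$. Here the constraint $\boldsymbol\lambda\in A^K_{\epsilon_\lambda}$ is essential: it keeps every candidate segment of length $\ge \epsilon_\lambda T$ so that the per-segment MLE theory applies and no degenerate short segments can be used to cheaply absorb a jump.

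Second, I would prove $\widehat K$ cannot overcount: $P(\widehat K \le K_0)\to 1$. Given the localization from the first step, condition on the event that $K_0$ estimated points are near the true ones; any \emph{extra} estimated change point splits an interval that is (asymptotically) inside a single true segment into two. The likelihood gain from this spurious split is $O_P((n+d)\log T)$ by the Wilks bound above, whereas the penalty terms in~(\ref{MDL_all}) \emph{increase} by $\log(T) + \tfrac{n+d-1}{2}\log(n_k) \asymp \tfrac{n+d}{2}\log T$ (plus the $\log(K+1)$ term). So one must show the penalty increment strictly exceeds the typical likelihood gain — this requires the Wilks-type remainder to have the right constant, namely that over-fitting on an interval of length $m$ buys at most $\tfrac{n+d}{2}\log m (1+o(1))$ in log-likelihood, which is exactly the quantity the $\tfrac12\log(\cdot)$-per-parameter MDL penalty is designed to beat. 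Combining the two steps gives $\widehat K \xrightarrow{P} K_0$; then, restricting to $K=K_0$, the refined rate of convergence $|\widehat\lambda_k - \lambda_k^0| = o_P(1)$ follows by the same linear-deficit argument applied locally around each $\tau_k^0$, sharpening the $o(T)$ localization to consistency in $\boldsymbol\lambda$.

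The main obstacle is the second step — establishing the sharp (correct-constant) upper bound on the log-likelihood ratio for an over-segmented fit, uniformly over all admissible refinement points. \cite{fan2022uncertainty} gives consistency and asymptotic normality of the CARE MLE for a fixed sample, but here one needs a uniform-in-segmentation Wilks statement: for \emph{every} sub-interval of length $\gtrsim \epsilon_\lambda T$ simultaneously, the excess log-likelihood of the MLE over the truth is $\le \tfrac{n+d}{2}\log(\cdot)(1+o(1)) + O_P(\text{lower order})$. Handling the union over $O(T^2)$ candidate intervals, together with the growing dimension $n+d$ and the possibly-incomplete comparison graph (so $\boldsymbol\Sigma$ enters only through its restriction to $\Theta_n$, cf. Assumptions~\ref{assumption_1}--\ref{assumption_2}), is where the real work lies; a chaining or peeling argument over intervals, combined with a uniform control of the Hessian of the CARE log-likelihood on $\Theta_n$ (bounded above and below by $n\,L_{\min,k}/t_k$ times identity on the relevant subspace, via Assumption~\ref{assumption_3}), should deliver it.
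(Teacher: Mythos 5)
Your two-stage plan (linear likelihood deficit rules out under-segmentation and localizes the true breaks; penalty dominance rules out over-segmentation) is exactly the strategy the paper follows, and your first stage matches its Lemmas~\ref{lemma_7}--\ref{lemma_9} almost verbatim, including the role of the $A^K_{\epsilon_\lambda}$ constraint. Where you diverge is in how the over-segmentation step is closed, and here your route is harder than it needs to be. You propose to control the excess log-likelihood of a spurious split by a sharp Wilks-type bound of order $\tfrac{n+d}{2}\log(\cdot)(1+o(1))$ and then argue that the MDL penalty beats it by constant matching --- a delicate comparison, since penalty and gain would then be of the same order. The paper avoids this entirely: it first establishes (Lemma~\ref{lemma_11}, via a martingale CLT-type bound on the score and positive definiteness of the limiting Hessian on $\Theta_n$) that the segmentwise MLE satisfies $\hat{\boldsymbol{\xi}}_T^{(k)}-\boldsymbol{\xi}_k^0=O(T^{-1/2})$ uniformly over admissible intervals, so that in the final Taylor expansion the first-order term vanishes at the MLE and the quadratic term contributes only $O_p(T^{-1})$ to the scaled criterion $\tfrac{2}{T}\mathrm{MDL}$; the penalty increment $C_1-C_2$ is of order $\log(T)/T$ and therefore dominates without any constant comparison. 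In other words, the likelihood gain from overfitting is $O_p(1)$ rather than $O_p(\log T)$, which is why the $\tfrac12\log(\cdot)$-per-parameter form of the MDL penalty never has to be matched against a Wilks constant. The uniformity over candidate segmentations that worries you is handled in the paper not by chaining over $O(T^2)$ intervals but by uniform-in-$(\lambda_l,\lambda_u)$ ergodic-theorem arguments (Lemmas~\ref{lemma_3}--\ref{lemma_6}), exploiting that each segment is modeled as a stationary ergodic process and that admissible segments have length at least $\epsilon_\lambda T$. Your proposal would likely work if the uniform Wilks bound could be proved, but the paper's decomposition is both more elementary and avoids the sharpest part of what you identify as the main obstacle.
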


Theorem \ref{Theorem1} shows that the MDL criterion enjoys some desirable consistency properties. The detailed proof of Theorem~\ref{Theorem1} can be found in the appendix.

\section{Practical Optimization of MDL}
\label{section4}
Due to the huge search space, it is very challenging to locate the global minimum of (\ref{MDL_estimate}). Here, we propose solving this optimization problem using the PELT algorithm of \cite{killick2012optimal}. It has been shown that, for a class of change point detection problems, the PELT algorithm is an exact search algorithm with linear computational cost, which makes it extremely appealing in practice.

The objective MDL criterion (\ref{MDL_all}) can be rewritten as
\begin{equation}
\label{pelm_obj}
    \sum_{k=1}^{K+1}\left[\mathcal{C}(y_{(\tau_{k-1}+1):\tau_{k}}) \right] + \gamma f(K),
\end{equation}
where 
\begin{multline*}
        \mathcal{C}(y_{(\tau_{k-1}+1):\tau_{k}}) = \left(\frac{n+d-1}{2}\log(n_k) \right) + 
     \sum_{t=\tau_{k-1}+1}^{\tau_k}(-y_{t,k}\boldsymbol{z}_t^{*\top}\boldsymbol{\xi}_T^{(k)} +  \log(1+\exp(\boldsymbol{z}_t^{*\top}\boldsymbol{\xi}_T^{(k)})))\log_2e \nonumber 
\end{multline*}
represents the cost function for a segment and $\gamma f(K) = \log(T)(K+1)$ is the remaining penalty part. 

In order to apply the PELT algorithm, one assumption needs to be satisfied: there is a constant $R$ such that, for all $t<s<T$, 
\begin{equation}
\label{pelt_ass1}
    \mathcal{C}(y_{(t+1):s}) + \mathcal{C}(y_{(s+1):T}) +R \leq \mathcal{C}(y_{(t+1):T}).
\end{equation} 
In our case, we can choose $R=\frac{n+d-1}{2}\log(\frac{8*\pi}{T})$.  We refer the readers to \cite{killick2012optimal} for the full description of the general version of the PELT algorithm, while the version that is tailored for the current problem can be found in Algorithm~\ref{alg1}.  Following the notations of \cite{killick2012optimal}, in Algorithm~\ref{alg1} we use $F(s)$ to denote the minimization from $(\ref{pelm_obj})$ for data $y_{1:s}$, $cp(s)$ to denote the estimated change point set for time point $\{1:s\}$, and $R_{s}$ to denote the time points that are possible to be the last change points prior to $s$. 



\begin{algorithm}
        \setstretch{1.35}
	\renewcommand{\algorithmicrequire}{\textbf{Input:}}
	\renewcommand{\algorithmicensure}{\textbf{Output:}}
	\caption{Optimize MDL criterion based on the PELT Algorithm}
	\label{alg1}
	\begin{algorithmic}[1]
 
		  \REQUIRE { A set of observed pairwise comparison data $(y_1, y_2, \cdots, y_T)$, $(\boldsymbol{z}_1^*, \boldsymbol{z}_2^*, \cdots, \boldsymbol{z}_T^*)$. \\
                \quad \quad A prespecified constant $R$ satisfied (\ref{pelt_ass1}) \\
                \quad \quad A minimum length $L$ for each estimated segments.}

        \STATE Initialization: Set $F(0)=\cdots=F(L-1)=-\gamma$; $F(i)=\mathcal{C}(y_{1:i}), \forall i=L,L+1,\cdots,2L-1$; $R_1=R_2=\cdots=R_{2L-1}=\{0\}, R_{2L}=\{0,L\}$.
        \FOR{$\tau^*=2L,\cdots,T$}
        \STATE \quad Calculate $F(\tau^*)=\min_{\tau\in R_{\tau^*}}\left[F(\tau)+\mathcal{C}(y_{(\tau+1):\tau^*})+\gamma\right]$.
        \STATE \quad Let $\tau^1 =\arg\{\min_{\tau\in R_{\tau^*}}\left[F(\tau)+\mathcal{C}(y_{(\tau+1):\tau^*})+\gamma\right]\}$.
        \STATE \quad Set $cp(\tau^*)=[cp(\tau^1),\tau^1]$.
        \STATE \quad Set $R_{\tau^*+1}=\{\tau^* \cap \{\tau\in R_{\tau^*}: F(\tau)+\mathcal{C}(y_{(\tau+1):\tau^*})+\gamma < F(\tau^*) \} \}$.
        \ENDFOR
        \ENSURE {The change points recorded in $cp(T)$}.
	\end{algorithmic}  
\end{algorithm}

\section{Simulation Results}
\label{section5}
In this section, we report the numerical results of our proposed method in different simulation settings. We follow a similar convention as in \cite{li2022detecting}. Suppose we have $K$ change points $\{\tau_k\}_{k\in[K]}$ in the sequential comparison data, $\tau_0=1$. Suppose the number of objects is $n$ and the dimension of covariates is $d$. For each setting, we set the comparison graph $\mathcal{G_I}(V_\mathbf{I},E_\mathbf{I})$ to be the complete graph and $T = (K+1)\Delta$ with the true change point located at $\tau_i = i\Delta$ for $i\in[K]$. To generate the covariates $\boldsymbol{Z}$ and the coefficients $\{\alpha^*\}$ and $\boldsymbol{\beta}^*$, we follow the same idea as in \cite{fan2022uncertainty}. The covariates are generated independently with $(\boldsymbol{z}_i)_j\sim $ Uniform$[-0.5, 0.5]$ for all $i\in[n], j\in[d]$. For matrix $\boldsymbol{Z}=[\boldsymbol{z}_1,\boldsymbol{z}_2\cdots,\boldsymbol{z}_n]^{\top}\in\mathbb{R}^{n\times d}$, column-wise normalization is applied and then scale $\boldsymbol{x}_i$ by $\boldsymbol{x}_i/h$ so that $\max_{i\in[n]}\|\boldsymbol{x}_i\|_2/h=\sqrt{(d+1)/n}$. Such initialization guarantees the Assumptions~\ref{assumption_1}  and~\ref{assumption_2} are satisfied. For $\boldsymbol{\alpha}^*$, its elements are generated uniformly from Uniform$[-0.3, 0.3]$. For $\boldsymbol{\beta}^*$, it is generated uniformly from a hypersphere $\{\beta, \|\beta\|_2 = 0.5\sqrt{n/(d+1)}\}$. Then we projected $\boldsymbol{\xi}$ onto the linear space $\Theta_n$.

\subsection{Setting 1: Without Covariates}
In the first simulation setting, we compare our methods to the DPLR method \citep{li2022detecting} when $d=0$, which means no covariates exist. We set $K=3$, $n=\{10,15,20,30,40,50\}$ and $\Delta=\{400,500,650\}$. Using the procedure described above, we simulate the pairwise comparison data 100 times and apply both our proposed MDL method and DPLR method to detect the number and locations of change points.

\begin{table}[!htbp]
\begin{center}
\begin{tabular}{@{}cccccc@{}}
\toprule
Methods & $n$                   & $\Delta$                &  mean of $\hat{\boldsymbol{\tau}}$                     & s.e. of $\hat{\boldsymbol{\tau}}$                   &  \% of  $\hat{K}=K$\\ \midrule
MDL     & \multirow{2}{*}{10} & \multirow{2}{*}{400} & (400.0, 798.2, 1200.0)  & (1.2, 1.0, 0.9)    & 98\%       \\
DPLR    &                     &                      & (398.9, 799.2, 1199.8)  & (2.0, 3.2, 1.9)  & 85\%       \\ \midrule
MDL     & \multirow{2}{*}{15} & \multirow{2}{*}{400} & (400.0, 798.0, 1200.0)  & (1.0, 1.0, 0.9)      & 99\%       \\
DPLR    &                     &                      & (402.3, 800.4, 1196.2)  & (2.3, 2.4, 2.4)  & 87\%       \\ \midrule
MDL     & \multirow{2}{*}{20} & \multirow{2}{*}{400} & (401.5, 798.2, 1200.3)  & (0.9, 1.3, 1.7)    & 100\%      \\
DPLR    &                     &                      & (399.2, 797.5, 1199.3)  & (2.6, 3.3, 2.7)  & 80\%       \\ \midrule
MDL     & \multirow{2}{*}{30} & \multirow{2}{*}{500} & (398.5, 804.1, 1199.6)  & (0.8, 1.6, 0.9)    & 96\%       \\
DPLR    &                     &                      & (392.7, 801.9, 1201.8)  & (2.6, 2.4, 3.1) & 69\%       \\ \midrule
MDL     & \multirow{2}{*}{40} & \multirow{2}{*}{500} & (499.1, 1000.1, 1500.1) & (0.9, 1.0, 1.4)    & 99\%       \\
DPLR    &                     &                      & (501.1, 999.8, 1501.1)  & (2.5, 2.6, 2.6)  & 90\%       \\ \midrule
MDL     & \multirow{2}{*}{50} & \multirow{2}{*}{650} & (649.0, 1298.6, 1950.5) & (1.1, 1.5, 1.4)  & 99\%       \\
DPLR    &                     &                      & (646.1, 1294.3, 1951.1) & (2.5, 2.6, 1.7)  & 93\%       \\ \bottomrule
\end{tabular}
\caption{Comparison of MDL and DPLR under different simulation settings without covariates.}
\label{sim_setting_1}
\end{center}
\end{table}

The results are summarized in Table~\ref{sim_setting_1}.  When calculating the means and standard errors of the estimated change points, we only consider the cases where the true number of change points was estimated.  From Table~\ref{sim_setting_1}, we can observe that for the non-covariate settings, our proposed MDL method outperforms the DPLR method. The reason might be that the DPLR method requires precise choices of some tuning parameters that might be hard to obtain, especially when the sample size ($\Delta$) is not large enough.

\subsection{Setting 2: With Covariates}
In this simulation setting, the parameters are $n=\{10, 15, 20, 30, 40, 50\}$, $d=5$, and $\Delta=\{700,800,1100,1300,1300,2000\}$. As before, we simulate the pairwise comparison data 100 times and apply both the MDL and DPLR methods to detect the number and locations of change points. Notice that the DPLR method was not designed to incorporate covariate information, and hence no such information was fed to it.

The results are summarized in Table~\ref{sim_setting_2}.  As expected, given that the proposed MDL method takes the covariate information into account, it produces superior performance when compared to the DPLR method.

\begin{table}[!htbp]
\begin{center}
\begin{tabular}{@{}ccccccc@{}}
\toprule
Methods & $n$ & $d$                & $\Delta$                &  mean of $\hat{\boldsymbol{\tau}}$                     & s.e. of $\hat{\boldsymbol{\tau}}$                   &  \% of  $\hat{K}=K$\\ \midrule
MDL     & \multirow{2}{*}{10} & \multirow{2}{*}{5} & \multirow{2}{*}{700}  & (699.2, 1400.6, 2099.2)  & (0.6, 0.7, 0.4)     & 96\%       \\
DPLR    &                     &                    &                       & (697.9, 1398.6, 2096.2)  & (1.3, 1.5, 1.6)   & 60\%       \\ \midrule
MDL     & \multirow{2}{*}{15} & \multirow{2}{*}{5} & \multirow{2}{*}{800}  & (799.0, 1600.4, 2399.4)  & (0.9, 0.5, 0.5)     & 97\%       \\
DPLR    &                     &                    &                       & (798.3, 1602.4, 2402.2)  & (1.9, 1.3, 1.4)  & 66\%       \\ \midrule
MDL     & \multirow{2}{*}{20} & \multirow{2}{*}{5} & \multirow{2}{*}{1100} & (1100.0, 2200.2, 3300.3) & (0.8, 0.7, 0.6)     & 97\%       \\
DPLR    &                     &                    &                       & (1099.2, 2198.5, 3295.3) & (1.1, 1.2, 2.2)  & 84\%       \\ \midrule
MDL     & \multirow{2}{*}{30} & \multirow{2}{*}{5} & \multirow{2}{*}{1100} & (1299.8, 2599.7, 3901.0) & (0.5, 0.5, 0.6)     & 96\%       \\
DPLR    &                     &                    &                       & (1299.0, 2599.0, 3898.3) & (1.1, 1.2, 1.4) & 89\%       \\ \midrule
MDL     & \multirow{2}{*}{40} & \multirow{2}{*}{5} & \multirow{2}{*}{1300} & (1299.5, 2599.8, 3899.8) & (0.6, 0.5, 0.6)     & 98\%       \\
DPLR    &                     &                    &                       & (1301.2, 2601.6, 3903.9) & (1.5, 1.4, 1.4)    & 94\%       \\ \midrule
MDL     & \multirow{2}{*}{50} & \multirow{2}{*}{5} & \multirow{2}{*}{2000} & (2000.8, 3999.5, 5998.1) & (0.8, 0.8, 0.7)     & 97\%       \\
DPLR    &                     &                    &                       & (2001.1, 4000.3, 6000.1) & (1.7, 1.8, 1.8)  & 95\%       \\ \bottomrule
\end{tabular}
\caption{Comparison of MDL and DPLR under different simulation settings with covariates. }
\label{sim_setting_2}
\end{center}
\end{table}


\section{Real Data Analysis}
\label{section6}
In this section, we study the game records of the NBA data as in \cite{li2022detecting}. The NBA season starts in October and ends in April the next year. One season is usually referred to as a two-year span. We focus on the same time range as in \cite{li2022detecting}, which contains records from season 1980-1981 to season 2015-2016 with 24 teams that were founded before 1990.  It has been shown in \cite{li2022detecting} that the data is non-stationary and contains multiple change points. We utilize the overall mean salary, the mean 3-point shoot percentage, and the mean rebound number of each team over the selected period as exogenous covariates.  These 3 covariates respectively represent, to a certain extent, the investment, attack ability, and defense ability of each team.

We then apply the MDL method to detect the number and locations of change points. We utilize the even-time point matches as the training dataset and the odd-time point matches as the test dataset.  
The results are summarized in Table~\ref{NBA_tableall}.

\begin{table}[!htbp]
\centering
\scalebox{0.7}{
\begin{tabular}{@{}llllllllll@{}}
\toprule
\multicolumn{2}{c}{\textbf{S1980-S1985}} &
  \multicolumn{2}{c}{\textbf{S1986-S1989}} &
  \multicolumn{2}{c}{\textbf{S1990-S1994}} &
  \multicolumn{2}{c}{\textbf{S1995-S1997}} &
  \multicolumn{2}{c}{\textbf{S1998-S2003}} \\ \midrule
\textbf{Celtics} &
  1.0921 &
  \textbf{Lakers} &
  1.2076 &
  \textbf{Bulls} &
  0.8809 &
  \textbf{Jazz} &
  1.1172 &
  \textbf{Spurs} &
  0.8749 \\
\textbf{76ers} &
  0.9512 &
  \textbf{Pistons} &
  0.9508 &
  \textbf{Spurs} &
  0.6787 &
  \textbf{Bulls} &
  0.9546 &
  \textbf{Lakers} &
  0.8284 \\
\textbf{Lakers} &
  0.7574 &
  \textbf{Celtics} &
  0.8223 &
  \textbf{Suns} &
  0.6489 &
  \textbf{Heat} &
  0.8756 &
  \textbf{Kings} &
  0.6981 \\
\textbf{Bucks} &
  0.7534 &
  \textbf{Trail Blazers} &
  0.6566 &
  \textbf{Jazz} &
  0.6255 &
  \textbf{Lakers} &
  0.8113 &
  \textbf{Mavericks} &
  0.5416 \\
\textbf{Nuggets} &
  0.0798 &
  \textbf{Bulls} &
  0.5303 &
  \textbf{Knicks} &
  0.5510 &
  \textbf{Trail Blazers} &
  0.5792 &
  \textbf{Timberwolves} &
  0.4022 \\
\textbf{Trail Blazers} &
  0.0626 &
  \textbf{Jazz} &
  0.4991 &
  \textbf{Rockets} &
  0.5184 &
  \textbf{Hornets} &
  0.4965 &
  \textbf{Trail Blazers} &
  0.3938 \\
\textbf{Suns} &
  0.0551 &
  \textbf{Bucks} &
  0.3841 &
  \textbf{Trail Blazers} &
  0.4513 &
  \textbf{Pacers} &
  0.4329 &
  \textbf{Jazz} &
  0.3569 \\
\textbf{Spurs} &
  0.0536 &
  \textbf{Mavericks} &
  0.3534 &
  \textbf{Cavaliers} &
  0.3044 &
  \textbf{Knicks} &
  0.4063 &
  \textbf{Pacers} &
  0.2981 \\
\textbf{Nets} &
  0.0311 &
  \textbf{76ers} &
  0.3311 &
  \textbf{Pacers} &
  0.1609 &
  \textbf{Rockets} &
  0.3865 &
  \textbf{Suns} &
  0.1213 \\
\textbf{Pistons} &
  -0.033 &
  \textbf{Suns} &
  0.2723 &
  \textbf{Lakers} &
  0.1603 &
  \textbf{Cavaliers} &
  0.3706 &
  \textbf{76ers} &
  0.0730 \\
\textbf{Knicks} &
  -0.1204 &
  \textbf{Rockets} &
  0.2544 &
  \textbf{Magic} &
  0.1531 &
  \textbf{Magic} &
  0.2291 &
  \textbf{Hornets} &
  0.0625 \\
\textbf{Rockets} &
  -0.1816 &
  \textbf{Cavaliers} &
  0.1816 &
  \textbf{Warriors} &
  0.0203 &
  \textbf{Pistons} &
  0.2100 &
  \textbf{Pistons} &
  -0.0058 \\
\textbf{Jazz} &
  -0.2818 &
  \textbf{Nuggets} &
  0.1209 &
  \textbf{Celtics} &
  -0.0237 &
  \textbf{Suns} &
  0.2063 &
  \textbf{Bucks} &
  -0.0401 \\
\textbf{Bulls} &
  -0.2942 &
  \textbf{Knicks} &
  0.0379 &
  \textbf{Hornets} &
  -0.1150 &
  \textbf{Timberwolves} &
  0.0199 &
  \textbf{Rockets} &
  -0.0841 \\
\textbf{Mavericks} &
  -0.2974 &
  \textbf{Pacers} &
  -0.0352 &
  \textbf{Nets} &
  -0.2435 &
  \textbf{Spurs} &
  -0.0685 &
  \textbf{Heat} &
  -0.1403 \\
\textbf{Kings} &
  -0.2989 &
  \textbf{Spurs} &
  -0.0702 &
  \textbf{Heat} &
  -0.25397 &
  \textbf{Bucks} &
  -0.1756 &
  \textbf{Knicks} &
  -0.1477 \\
\textbf{Warriors} &
  -0.4193 &
  \textbf{Warriors} &
  -0.0789 &
  \textbf{Pistons} &
  -0.28443 &
  \textbf{Nets} &
  -0.4655 &
  \textbf{Nets} &
  -0.1539 \\
\textbf{Pacers} &
  -0.5259 &
  \textbf{Kings} &
  -0.7207 &
  \textbf{Nuggets} &
  -0.32888 &
  \textbf{76ers} &
  -0.6068 &
  \textbf{Magic} &
  -0.2635 \\
\textbf{Clippers} &
  -0.6087 &
  \textbf{Nets} &
  -0.8198 &
  \textbf{Clippers} &
  -0.39468 &
  \textbf{Kings} &
  -0.6594 &
  \textbf{Celtics} &
  -0.2675 \\
\textbf{Cavaliers} &
  -0.7531 &
  \textbf{Timberwolves} &
  -0.9295 &
  \textbf{Kings} &
  -0.41518 &
  \textbf{Warriors} &
  -0.7487 &
  \textbf{Nuggets} &
  -0.4180 \\
\textbf{Heat} &
  NA &
  \textbf{Clippers} &
  -0.9662 &
  \textbf{Bucks} &
  -0.58409 &
  \textbf{Celtics} &
  -0.8135 &
  \textbf{Clippers} &
  -0.5736 \\
\textbf{Hornets} &
  NA &
  \textbf{Magic} &
  -0.9875 &
  \textbf{76ers} &
  -0.80124 &
  \textbf{Clippers} &
  -0.8698 &
  \textbf{Cavaliers} &
  -0.6686 \\
\textbf{Magic} &
  NA &
  \textbf{Hornets} &
  -1.0816 &
  \textbf{Mavericks} &
  -0.96357 &
  \textbf{Mavericks} &
  -1.1372 &
  \textbf{Warriors} &
  -0.7081 \\
\textbf{Timberwolves} &
  NA &
  \textbf{Heat} &
  -1.1920 &
  \textbf{Timberwolves} &
  -0.98895 &
  \textbf{Nuggets} &
  -1.5718 &
  \textbf{Bulls} &
  -1.1913 \\ \bottomrule

&&&&&&&&&\\

\toprule
\multicolumn{2}{c}{\textbf{S2004-S2006}} &
  \multicolumn{2}{c}{\textbf{S2007-S2009}} &
  \multicolumn{2}{c}{\textbf{S2010-S2011m}} &
  \multicolumn{2}{c}{\textbf{S2011m-S2013}} &
  \multicolumn{2}{c}{\textbf{S2014-S2015}} \\ \midrule
\textbf{Spurs} &
  0.9037 &
  \textbf{Lakers} &
  0.9619 &
  \textbf{Bulls} &
  1.0465 &
  \textbf{Spurs} &
  0.9457 &
  \textbf{Warriors} &
  1.6399 \\
\textbf{Mavericks} &
  0.8443 &
  \textbf{Celtics} &
  0.8124 &
  \textbf{Spurs} &
  0.8086 &
  \textbf{Clippers} &
  0.8324 &
  \textbf{Spurs} &
  1.2530 \\
\textbf{Suns} &
  0.8190 &
  \textbf{Cavaliers} &
  0.7386 &
  \textbf{Heat} &
  0.6899 &
  \textbf{Heat} &
  0.7754 &
  \textbf{Clippers} &
  0.8556 \\
\textbf{Pistons} &
  0.7526 &
  \textbf{Magic} &
  0.7017 &
  \textbf{Lakers} &
  0.6330 &
  \textbf{Pacers} &
  0.5711 &
  \textbf{Cavaliers} &
  0.7377 \\
\textbf{Rockets} &
  0.2473 &
  \textbf{Spurs} &
  0.5872 &
  \textbf{Mavericks} &
  0.5353 &
  \textbf{Rockets} &
  0.5179 &
  \textbf{Rockets} &
  0.4701 \\
\textbf{Heat} &
  0.1795 &
  \textbf{Mavericks} &
  0.5297 &
  \textbf{Celtics} &
  0.4928 &
  \textbf{Warriors} &
  0.4497 &
  \textbf{Mavericks} &
  0.4178 \\
\textbf{Cavaliers} &
  0.1540 &
  \textbf{Jazz} &
  0.4169 &
  \textbf{Magic} &
  0.4406 &
  \textbf{Nuggets} &
  0.4091 &
  \textbf{Trail Blazers} &
  0.3920 \\
\textbf{Nuggets} &
  0.1502 &
  \textbf{Suns} &
  0.4024 &
  \textbf{Nuggets} &
  0.3650 &
  \textbf{Bulls} &
  0.3619 &
  \textbf{Heat} &
  0.1986 \\
\textbf{Nets} &
  0.0460 &
  \textbf{Nuggets} &
  0.3997 &
  \textbf{Trail Blazers} &
  0.0628 &
  \textbf{Knicks} &
  0.2510 &
  \textbf{Bulls} &
  0.1531 \\
\textbf{Bulls} &
  0.0219 &
  \textbf{Trail Blazers} &
  0.3177 &
  \textbf{Rockets} &
  0.0303 &
  \textbf{Mavericks} &
  0.2371 &
  \textbf{Pacers} &
  0.1095 \\
\textbf{Kings} &
  -0.0057 &
  \textbf{Rockets} &
  0.2942 &
  \textbf{76ers} &
  0.0296 &
  \textbf{Trail Blazers} &
  0.1987 &
  \textbf{Jazz} &
  0.0933 \\
\textbf{Lakers} &
  -0.0220 &
  \textbf{Hornets} &
  0.2866 &
  \textbf{Pacers} &
  0.0008 &
  \textbf{Nets} &
  0.1871 &
  \textbf{Celtics} &
  0.0752 \\
\textbf{Clippers} &
  -0.0649 &
  \textbf{Bulls} &
  -0.0684 &
  \textbf{Knicks} &
  -0.0209 &
  \textbf{Lakers} &
  -0.1328 &
  \textbf{Pistons} &
  0.0251 \\
\textbf{Jazz} &
  -0.0926 &
  \textbf{Pistons} &
  -0.2101 &
  \textbf{Suns} &
  -0.0213 &
  \textbf{Timberwolves} &
  -0.1340 &
  \textbf{Hornets} &
  0.0007 \\
\textbf{Pacers} &
  -0.1213 &
  \textbf{Heat} &
  -0.2717 &
  \textbf{Jazz} &
  -0.0385 &
  \textbf{Suns} &
  -0.2172 &
  \textbf{Bucks} &
  -0.0649 \\
\textbf{Timberwolves} &
  -0.1568 &
  \textbf{Warriors} &
  -0.3373 &
  \textbf{Clippers} &
  -0.0771 &
  \textbf{Jazz} &
  -0.2758 &
  \textbf{Kings} &
  -0.3145 \\
\textbf{Warriors} &
  -0.1936 &
  \textbf{76ers} &
  -0.3645 &
  \textbf{Hornets} &
  -0.3269 &
  \textbf{Celtics} &
  -0.4300 &
  \textbf{Suns} &
  -0.3763 \\
\textbf{Magic} &
  -0.2521 &
  \textbf{Pacers} &
  -0.3706 &
  \textbf{Warriors} &
  -0.3615 &
  \textbf{Kings} &
  -0.5275 &
  \textbf{Magic} &
  -0.3995 \\
\textbf{76ers} &
  -0.2864 &
  \textbf{Bucks} &
  -0.5188 &
  \textbf{Bucks} &
  -0.3695 &
  \textbf{Hornets} &
  -0.5792 &
  \textbf{Nuggets} &
  -0.4144 \\
\textbf{Hornets} &
  -0.4496 &
  \textbf{Kings} &
  -0.7237 &
  \textbf{Pistons} &
  -0.6596 &
  \textbf{Pistons} &
  -0.6031 &
  \textbf{Nets} &
  -0.5219 \\
\textbf{Celtics} &
  -0.4546 &
  \textbf{Knicks} &
  -0.8034 &
  \textbf{Kings} &
  -0.6985 &
  \textbf{Bucks} &
  -0.6512 &
  \textbf{Knicks} &
  -0.9580 \\
\textbf{Bucks} &
  -0.5658 &
  \textbf{Nets} &
  -0.9115 &
  \textbf{Timberwolves} &
  -0.7597 &
  \textbf{Cavaliers} &
  -0.7121 &
  \textbf{Timberwolves} &
  -1.0322 \\
\textbf{Knicks} &
  -0.7506 &
  \textbf{Timberwolves} &
  -1.0381 &
  \textbf{Nets} &
  -0.9547 &
  \textbf{76ers} &
  -0.7279 &
  \textbf{Lakers} &
  -1.0736 \\
\textbf{Trail Blazers} &
  -0.8684 &
  \textbf{Clippers} &
  -1.1249 &
  \textbf{Cavaliers} &
  -1.0610 &
  \textbf{Magic} &
  -0.9004 &
  \textbf{76ers} &
  -1.4612 \\ \bottomrule
\end{tabular}
}
\caption{Fitted scores for 24 selected teams in the seasons S1980-S2016 of NBA. It is divided into 10 time periods based on the 9 estimated change points.  Within each period, the teams are ranked based on their fitted scores.}
\label{NBA_tableall}
\end{table}

Our methods detect 9 different change points, which divide the whole history into 10 time periods.  The NBA facts can explain these 9 change points. To be more specific, the first and second periods, S1980-S1985 and S1986-S1989, represent the times that Larry Bird in the {\it Celtics} and Michael Johnson in the {\it Lakers} ruled this era together. In the third period, S1990-S1994, Michael Jordan in the {\it Bulls} won the triple crown. However, in S1994, Michael Jordan retired for the first time, and the {\it Rockets} won 2 champions. In S1995, Michael Jordan came back, and the {\it Bulls} achieved another triple crown. In S1998, Michael Jordan retired again, and Shaq and Kobe helped the {\it Lakers} dominate the period S1998-S2003. In S2004-S2006, the ``Big 3'' in {\it Spurs} emerged. In S2007-S2009, Kobe helped the {\it Lakers} win another 2 champions. In S2010-S2011, Derk in the {\it Mavericks} defeated the ``Big 3'' in the {\it Heat}, but in S2011-S2013, the {\it Heat} dominated the scene. Lastly, in S2014-S2015, Stephen Curry helped the {\it Warriors} take the lead.

When comparing our results with those in \cite{li2022detecting}, our method detected some important events like Michael Jordan retiring for the first time in S1994 and the {\it Mavericks} defeated {\it Heat} in S2011, while the DPLR method \cite{li2022detecting} failed to detect them. Also, the MDL method has, in terms of minus log-likelihood value, a smaller loss: the loss for DPLR is approximately $8880$ while the MDL's loss is approximately $8000$, which is a $10\%$ decrease.

\section{Concluding Remarks}
\label{section7}
This paper addresses the challenge of detecting change points within sequential pairwise comparison data, incorporating covariate information for enhanced accuracy.  At its core, this paper introduced the piecewise stationary covariate-assisted ranking estimation (PS-CARE) model, an innovative extension of the CARE model designed to handle these data complexities.

We developed a comprehensive methodology for accurately estimating the PS-CARE model's unknown parameters, including both the number and precise locations of change points. Central to our approach is the application of the minimum description length (MDL) principle, which facilitated the derivation of an objective criterion for parameter estimation. It has been shown the MDL estimates are statistically consistent.

The practical optimization of the MDL criterion was achieved using the PELT algorithm. Our extensive simulation experiments underscored the excellent empirical performance of our proposed methodology.  When applied to an NBA dataset, our methodology not only identified meaningful results but also correlated these findings with significant historical events within the dataset's timeline, showcasing the practical relevance of our approach.

In conclusion, this paper contributes significantly to the field of dynamic ranking systems by presenting the PS-CARE model as a powerful tool for change point detection in sequential pairwise comparison data, especially when covariate information is available.  The demonstrated success of the PS-CARE model, with its proven methodological rigor and empirical validation, paves the way for future research and offers valuable insights for practitioners and researchers alike.

\begin{appendices}
\setcounter{equation}{0}
\setcounter{table}{0}

\section{Proof and Technical Details}
\renewcommand{\thetable}{\thesection.\arabic{table}}
\renewcommand{\theequation}{\thesection.\arabic{equation}}

\label{app:proofs}

This appendix presents technical details and the proof for Theorem~\ref{Theorem1}.  We will first define some notations and introduce several lemmas.

\subsection{Lemmas}
The $k$-th segment of $\{Y_t\}$ is modeled by a stationary time series $\boldsymbol{x_k}=\{x_{t,k}\}_{t\in\mathbb{Z}}$ such that 
$y_t = x_{t-\tau_{k-1},k}$ for all $\tau_{k-1}+1\leq t \leq \tau_k$
and $T_k = \tau_{k} - \tau_{k-1}$ for $k=1,\cdots,K+1$. 

Define $l_i(\boldsymbol{\xi}_T^{(j)}; x_{i,k},\boldsymbol{z}^*_t|x_{s,k}, s<i)$ as the conditional log-likelihood function at time $i$ for observations in the $k$-th segment, given all the past observations. And the conditional log-likelihood of $k$-th segment, $\boldsymbol{x_k}=\{x_{t,k}, t=1,2,\cdots,T_j\}$ given all the past observations is 
\begin{equation*}
    L_T^{(k)}(\boldsymbol{\xi}_T^{(k)}; \boldsymbol{x_k}) = \sum_{i=1}^{T_k}l_i(\boldsymbol{\xi}_T^{(k)}; x_{i,k},\boldsymbol{z}^*_t|x_{s,k}, s<i).
\end{equation*}

However, it is impossible to observe all the past observations $\{x_{t,k}\}_{t<0}$ in practice. Denote $\boldsymbol{y_{i,k}}=(y_1,\cdots, y_{\tau_{k-1}+i-1})$ as the observed past in practice. Thus, the observed likelihood for the $k$-th segment is given by 
\begin{equation*}
     \Tilde{L}_T^{(k)}(\boldsymbol{\xi}_T^{(k)}; \boldsymbol{x_k}) = \sum_{i=1}^{T_k}l_i(\boldsymbol{\xi}_T^{(k)}; x_{i,k},\boldsymbol{z}^*_t|\boldsymbol{y_{i,k}}).
\end{equation*}

Now, for estimating the parameters of (\ref{MDL_estimate}), consider the situation where only a portion of the data in the $k$-th segment is chosen to perform the parameter estimation. Define $\lambda_l,\lambda_u\in[0,1]$  and $\lambda_l<\lambda_u, \lambda_u-\lambda_l>\epsilon_\lambda$, where $\epsilon_\lambda$ is defined in (\ref{lambda_assump}). To simplify the notation, we write 
\begin{equation*}
    \sup_{\lambda_l,\lambda_u}=\sup_{\lambda_l,\lambda_u\in[0,1],\lambda_u-\lambda_l>\epsilon_\lambda}.
\end{equation*}

Next, define the true and observed log-likelihood function based on a portion of the data in the $k$-th segment as follows:
\begin{align}
\label{Real_part_like}
     L_T^{(k)}(\boldsymbol{\xi}_T^{(k)},\lambda_l,\lambda_u; \boldsymbol{x_k}) &= \sum_{i=[T_k\lambda_l]+1}^{T_k\lambda_u}l_i(\boldsymbol{\xi}_T^{(k)}; x_{i,k},\boldsymbol{z}^*_t|x_{s,k}, s<i), \\
     \label{obs_part_like}
    \Tilde{L}_T^{(k)}(\boldsymbol{\xi}_T^{(k)},\lambda_l,\lambda_u; \boldsymbol{x_k}) &= \sum_{i=[T_k\lambda_l]+1}^{T_k\lambda_u}l_i(\boldsymbol{\xi}_T^{(k)}; x_{i,k},\boldsymbol{z}^*_t|\boldsymbol{y_{i,k}}).
\end{align}

In practice, we can only use (\ref{obs_part_like}) instead of (\ref{Real_part_like}); thus, our first lemma controls the quality of the approximation.

\begin{lemma}
\label{lemma_1}
For any $k=1,\cdots,K+1$, the first and second derivatives $L_n^{'(k)}$, $\Tilde{L}_n^{'(k)}$ and $L_n^{''(k)}$, $\Tilde{L}_n^{''(k)}$ with respect to $\boldsymbol{\xi}_T^{(k)}$ of the function defined in (\ref{obs_part_like}) and (\ref{Real_part_like}) satisfy
\begin{align}
\label{1emma_1_1}
    \sup _{\lambda_l, \lambda_u} \sup _{\boldsymbol{\xi}_T^{(k)} \in \Theta_n}\left|\frac{1}{T} L_T^{(k)}\left(\boldsymbol{\xi}_T^{(k)}, \lambda_l, \lambda_u, \mathbf{x}_{\mathbf{k}}\right)-\frac{1}{T} \tilde{L}_T^{(k)}\left(\boldsymbol{\xi}_T^{(k)}, \lambda_l, \lambda_u, \mathbf{x}_{\mathbf{k}}\right)\right|&=o\left(1\right), \\
    \label{lemma_1_2}
    \sup _{\lambda_l, \lambda_u} \sup _{\boldsymbol{\xi}_T^{(k)} \in \Theta_n}\left|\frac{1}{T} L_T^{'(k)}\left(\boldsymbol{\xi}_T^{(k)}, \lambda_l, \lambda_u, \mathbf{x}_{\mathbf{k}}\right)-\frac{1}{T} \tilde{L}_T^{'(k)}\left(\boldsymbol{\xi}_T^{(k)}, \lambda_l, \lambda_u, \mathbf{x}_{\mathbf{k}}\right)\right|&=o\left(1\right), \\
    \label{lemma_1_3}
    \sup _{\lambda_l, \lambda_u} \sup _{\boldsymbol{\xi}_T^{(k)} \in \Theta_n}\left|\frac{1}{T} L_T^{''(k)}\left(\boldsymbol{\xi}_T^{(k)}, \lambda_l, \lambda_u, \mathbf{x}_{\mathbf{k}}\right)-\frac{1}{T} \tilde{L}_T^{''(k)}\left(\boldsymbol{\xi}_T^{(k)}, \lambda_l, \lambda_u, \mathbf{x}_{\mathbf{k}}\right)\right|&=o\left(1\right).
\end{align}
\end{lemma}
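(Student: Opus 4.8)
The plan is to prove the three bounds of Lemma~\ref{lemma_1} by a single mechanism: the only difference between $L_T^{(k)}$ and $\tilde L_T^{(k)}$ is that the conditional log-likelihood at time $i$ is computed given the true (but unobservable) infinite past $\{x_{s,k}\}_{s<i}$ in the former and given the truncated observed past $\boldsymbol{y}_{i,k}$ in the latter. In the present model, however, the conditional log-likelihood term $l_i(\boldsymbol{\xi}_T^{(k)};x_{i,k},\boldsymbol{z}_t^*\mid\cdot)$ given in~(\ref{log_like_one_observation}) only depends on the current observation $y_t$ and the current design vector $\boldsymbol{z}_t^*$ — the comparisons are conditionally independent over time given $\boldsymbol{\xi}(t)$ and the (independent) sampling of pairs. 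So each summand is in fact \emph{identical} under both conditionings once we account for the fact that the ``past'' enters only through which segment's parameter $\boldsymbol{\xi}_T^{(k)}$ is in force. The first step is therefore to make this precise: write out $l_i$ explicitly, observe it is a deterministic function of $(y_t,\boldsymbol{z}_t^*,\boldsymbol{\xi}_T^{(k)})$, and conclude that the difference inside the absolute values in~(\ref{1emma_1_1})--(\ref{lemma_1_3}) is supported only on the $O(1)$ initial indices where the truncation $\boldsymbol{y}_{i,k}$ genuinely differs from the true past (i.e.\ boundary effects near $\tau_{k-1}$).

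Granting that, the second step is to bound, uniformly over $\lambda_l,\lambda_u$ and over $\boldsymbol{\xi}_T^{(k)}\in\Theta_n$, the contribution of those finitely many discrepant terms, divided by $T$. For the zeroth-order bound~(\ref{1emma_1_1}) I would show each term $l_i$ and its truncated counterpart are bounded in absolute value by something of order $\max_i\|\boldsymbol{z}_t^*\|\cdot\sup_{\boldsymbol{\xi}_T^{(k)}\in\Theta_n}\|\boldsymbol{\xi}_T^{(k)}\|+\log 2$; since $\Theta_n$ together with Assumptions~\ref{assumption_1}--\ref{assumption_3} keeps $\|\boldsymbol{\xi}_T^{(k)}\|$ and $\|\boldsymbol{z}_t^*\|$ polynomially controlled in $n$, and the number of discrepant terms is $O(1)$ (independent of $T$), the whole difference is $O(\mathrm{poly}(n)/T)=o(1)$ under the implicit regime $n=o(T)$ coming from Assumption~\ref{assumption_3}. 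For the first- and second-derivative versions~(\ref{lemma_1_2})--(\ref{lemma_1_3}) the same argument applies verbatim: $\nabla_{\boldsymbol{\xi}}l_i = (y_t-\sigma(\boldsymbol{z}_t^{*\top}\boldsymbol{\xi}))\boldsymbol{z}_t^*$ is bounded by $\|\boldsymbol{z}_t^*\|$ and $\nabla^2_{\boldsymbol{\xi}}l_i = -\sigma'(\boldsymbol{z}_t^{*\top}\boldsymbol{\xi})\boldsymbol{z}_t^*\boldsymbol{z}_t^{*\top}$ is bounded in operator norm by $\tfrac14\|\boldsymbol{z}_t^*\|^2$, both uniformly in $\boldsymbol{\xi}$, so again only the $O(1)$ boundary terms survive and division by $T$ kills them.

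The third step is simply to assemble these observations into the stated uniform $o(1)$ conclusions, being careful that the suprema over $\lambda_l,\lambda_u$ (which only change the range of summation) and over $\boldsymbol{\xi}_T^{(k)}\in\Theta_n$ (handled by the uniform pointwise bounds above) do not spoil the estimate — they do not, because the bound on each discrepant summand was already taken uniformly.

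I expect the main obstacle to be purely expository rather than mathematical: one must argue cleanly \emph{why} the two likelihoods differ on only $O(1)$ indices. If the intended model really does have conditional independence across time (as~(\ref{log_like_one_observation})--(\ref{all_log_likeihood}) suggest), then $L_T^{(k)}=\tilde L_T^{(k)}$ exactly and the lemma is trivial; the $o(1)$ slack is presumably there to accommodate a more general reading in which past outcomes could feed back into the sampling of which pair is compared, or to mirror the framework of \cite{davis2006structural} and \cite{li2022detecting} where a genuine stationary-time-series approximation is invoked. Under that reading the real work is a geometric-decay / mixing estimate showing $\bigl|l_i(\cdot\mid x_{s,k},s<i)-l_i(\cdot\mid\boldsymbol{y}_{i,k})\bigr|$ decays in $i$, so that $\tfrac1T\sum_i$ of it is $o(1)$; I would borrow that decay from the cited works and combine it with the uniform Lipschitz bounds on $l_i$, $l_i'$, $l_i''$ in $\boldsymbol{\xi}$ derived above. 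Either way, the estimate is driven by boundedness of the logistic score and Hessian on $\Theta_n$ plus the scale separation $n=o(T)$.
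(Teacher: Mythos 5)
Your proposal is correct, but it takes a genuinely different route from the paper. You exploit the fact that, for this model, the summand $l_i$ in~(\ref{log_like_one_observation}) depends only on the current $(y_t,\boldsymbol{z}_t^*)$ and on $\boldsymbol{\xi}_T^{(k)}$, so the two conditionings can only disagree on $O(1)$ boundary indices; you then kill those terms with uniform bounds on the logistic log-likelihood, score, and Hessian over $\Theta_n$ and divide by $T$. The paper instead reinterprets the difference as arising from two \emph{different} maximum likelihood estimators $\boldsymbol{\hat{\xi}}_T^{(k,1)}$ and $\boldsymbol{\hat{\xi}}_T^{(k,2)}$ (one computed from the true past, one from the observed past), applies the triangle inequality through the true parameter, and invokes the $O(L_{\min,k}^{-1/2})$ consistency rate of Theorem~3.1 in \cite{fan2022uncertainty} to show both estimators converge to the same limit. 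Your argument is more elementary and self-contained, and it is arguably more faithful to the lemma as stated, since the supremum in~(\ref{1emma_1_1})--(\ref{lemma_1_3}) is over a \emph{common} $\boldsymbol{\xi}_T^{(k)}$ plugged into both likelihoods, in which case (as you observe) conditional independence makes the difference essentially vanish identically; the paper's substitution of two distinct MLEs proves a slightly different assertion. What the paper's route buys is consistency with the stationary-time-series framework (of Davis et al.)\ that the rest of the appendix is modeled on, where the conditioning genuinely matters; your closing remark that in such a setting one would need a geometric-decay or mixing estimate for $\bigl|l_i(\cdot\mid x_{s,k},s<i)-l_i(\cdot\mid\boldsymbol{y}_{i,k})\bigr|$ is exactly the missing ingredient one would have to import there. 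The only point to watch in your write-up is the dependence of the uniform bounds on $n$ and $d$ (your $\mathrm{poly}(n)/T$ step), which does require the scale separation implicit in Assumption~\ref{assumption_3}, as you note.
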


\begin{proof}
We shall only prove~(\ref{1emma_1_1}), while~(\ref{lemma_1_2}) and (\ref{lemma_1_3}) can be proved using similar arguments. For the PS-CARE model defined in (\ref{p_ij_CARE}), we have 
\begin{align*}
    L_T^{(k)}\left(\boldsymbol{\xi}_T^{(k)}, \lambda_l, \lambda_u, \mathbf{x}_{\mathbf{k}}\right) &= \sum_{i=[T_k\lambda_l]+1}^{T_k\lambda_u}l_i(\boldsymbol{\xi}_T^{(k)}; x_{i,k},\boldsymbol{z}^*_t|x_{s,k}, s<i)  \nonumber \\ &= \sum_{i=[T_k\lambda_l]+1}^{T_k\lambda_u}\left(x_{i,k}\boldsymbol{z}_i^{*T}\boldsymbol{\hat{\xi}}_{T}^{(k,1)} - \log(1+\exp(\boldsymbol{z}_i^{*T}\boldsymbol{\hat{\xi}}_{T}^{(k,1)}))\right)
\end{align*}
and
\begin{align*}
    \Tilde{L}_T^{(k)}\left(\boldsymbol{\xi}_T^{(k)}, \lambda_l, \lambda_u, \mathbf{x}_{\mathbf{k}}\right) &= \sum_{i=[T_k\lambda_l]+1}^{T_k\lambda_u}l_i(\boldsymbol{\xi}_T^{(k)}; x_{i,k},\boldsymbol{z}^*_t|\boldsymbol{y_{i,k}})  \nonumber \\ &= \sum_{i=[T_k\lambda_l]+1}^{T_k\lambda_u}\left(x_{i,k}\boldsymbol{z}_i^{*T}\boldsymbol{\hat{\xi}}_{T}^{(k,2)} - \log(1+\exp(\boldsymbol{z}_i^{*T}\boldsymbol{\hat{\xi}}_{T}^{(k,2)}))\right),
\end{align*}
where $\boldsymbol{\hat{\xi}}_{T}^{(k,1)}$ and $\boldsymbol{\hat{\xi}}_{T}^{(k,2)}$ are the maximum likelihood estimators based on true past and observed past in the $j-th$ segment respectively.

So, we have 
\begin{equation}
\begin{aligned}
\label{proof_lemma1}
   & \sup _{\lambda_l, \lambda_u} \sup _{\boldsymbol{\xi}_T^{(k)} \in \Theta_n}\left|\frac{1}{T} L_T^{(k)}\left(\boldsymbol{\xi}_T^{(k)}, \lambda_l, \lambda_u, \mathbf{x}_{\mathbf{k}}\right)-\frac{1}{T} \tilde{L}_T^{(k)}\left(\boldsymbol{\xi}_T^{(k)}, \lambda_l, \lambda_u, \mathbf{x}_{\mathbf{k}}\right)\right| \\
    &\leq \sup _{\lambda_l, \lambda_u} \frac{1}{T}\sum_{i=[T_k\lambda_l]+1}^{T_j\lambda_u} \left(x_{i,k}\left| \boldsymbol{z}_i^{*T}\boldsymbol{\hat{\xi}}_{T}^{(k,1)} - \boldsymbol{z}_i^{*T}\boldsymbol{\hat{\xi}}_{T}^{(k,2)} \right| + \left| \log\left(1+\frac{\exp(\boldsymbol{z}_i^{*T}\boldsymbol{\hat{\xi}}_{T}^{(k,2)})) - \exp(\boldsymbol{z}_i^{*T}\boldsymbol{\hat{\xi}}_{T}^{(k,1)}))}{1+\exp(\boldsymbol{z}_i^{*T}\hat{{\boldsymbol{\xi}}}_{k,1}))}\right) \right|\right) \\
    &\leq \sup _{\lambda_l, \lambda_u} \frac{1}{T}\sum_{i=[T_k\lambda_l]+1}^{T_k\lambda_u} \left(x_{i,k}\| \boldsymbol{z}_i^{*T}\boldsymbol{\hat{\xi}}_{T}^{(k,1)} - \boldsymbol{z}_i^{*T}\boldsymbol{\hat{\xi}}_{T}^{(k,2)} \|_{\infty} +\log\left(1+ \|  \frac{\exp(\boldsymbol{z}_i^{*T}\boldsymbol{\hat{\xi}}_{T}^{(k,2)})) - \exp(\boldsymbol{z}_i^{*T}\boldsymbol{\hat{\xi}}_{T}^{(k,1)}))}{1+\exp(\boldsymbol{z}_i^{*T}\boldsymbol{\hat{\xi}}_{T}^{(k,1)}))} \|_{\infty}\right)\right).
\end{aligned}
\end{equation}

Assume $\Tilde{\boldsymbol{\beta}}_T^*$ be the true parameter vector of $j$-th segment. According to Theorem 3.1 in \cite{fan2022uncertainty}, as long as Assumptions~\ref{assumption_1} to~\ref{assumption_3} are satisfied, we have 
\begin{align*}
    \|\Tilde{\boldsymbol{Z}}\boldsymbol{\hat{\xi}}_{T}^{(k,i)}-\Tilde{\boldsymbol{Z}}\Tilde{\boldsymbol{\beta}}_T^*\|_{\infty} = O(L_{min,k}^{-1/2}), \text{ for } i = 1, 2, \\
    \frac{\left\|e^{\widetilde{\boldsymbol{Z}} \boldsymbol{\hat{\xi}}_{T}^{(k,i)}}-e^{\widetilde{\boldsymbol{X}} \widetilde{\boldsymbol{\beta}}_T^*}\right\|_{\infty}}{\left\|e^{\widetilde{\boldsymbol{Z}} \widetilde{\boldsymbol{\beta}}_T^*}\right\|_{\infty}} = O(L_{min,k}^{-1/2}),  \text{ for } i = 1, 2.
\end{align*}
And we have 
\begin{align*}
    \| \boldsymbol{z}_i^{*T}\boldsymbol{\hat{\xi}}_{T}^{(k,1)} - \boldsymbol{z}_i^{*T}\boldsymbol{\hat{\xi}}_{T}^{(k,2)} \|_{\infty} & \leq \| \boldsymbol{z}_i^{*T}\boldsymbol{\hat{\xi}}_{T}^{(k,1)} - \boldsymbol{z}_i^{*T}\hat{{\boldsymbol{\xi}}}_T^* \|_{\infty} + \| \boldsymbol{z}_i^{*T}\boldsymbol{\hat{\xi}}_{T}^{(k,2)} - \boldsymbol{z}_i^{*T}\hat{{\boldsymbol{\xi}}}_T^* \|_{\infty}, \\ 
    \left\|  \frac{\exp(\boldsymbol{z}_i^{*T}\boldsymbol{\hat{\xi}}_{T}^{(k,2)})) - \exp(\boldsymbol{z}_i^{*T}\boldsymbol{\hat{\xi}}_{T}^{(k,1)}))}{1+\exp(\boldsymbol{z}_i^{*T}\boldsymbol{\hat{\xi}}_{T}^{(k,1)}))} \right\|_{\infty} & \leq \exp(\|\boldsymbol{z}_i^{*T}\boldsymbol{\hat{\xi}}_{T}^{(k,1)} - \boldsymbol{z}_i^{*T}\hat{{\boldsymbol{\xi}}}_T^* \|_{\infty}) + \exp(\|\boldsymbol{z}_i^{*T}\boldsymbol{\hat{\xi}}_{T}^{(k,2)} - \boldsymbol{z}_i^{*T}\hat{{\boldsymbol{\xi}}}_T^* \|_{\infty}).
\end{align*}
Thus (\ref{proof_lemma1}) $= o(1)$ is satisfied. 
\end{proof}

\begin{lemma}
For $k = 1, \cdots, K+1$, there exists an $\epsilon>0$ such that 
\label{lemma_2}
\begin{align*}
    \sup _{\boldsymbol{\xi}_T^{(k)} \in \Theta_n} E\left|l_k\left(\boldsymbol{\xi}_T^{(k)} ; x_{1, k} \mid x_{l, k}, l<1\right)\right|^{\epsilon}&<\infty, \\ 
    \sup _{\boldsymbol{\xi}_T^{(k)} \in \Theta_n} E\left|l_k^{'}\left(\boldsymbol{\xi}_T^{(k)} ; x_{1, k} \mid x_{l, k}, l<1\right)\right|^{\epsilon}&<\infty, \\
    \sup _{\boldsymbol{\xi}_T^{(k)} \in \Theta_n} E\left|l_k^{''}\left(\boldsymbol{\xi}_T^{(k)} ; x_{1, k} \mid x_{l, k}, l<1\right)\right|&<\infty.
\end{align*}
\end{lemma}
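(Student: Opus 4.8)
The plan is to establish the three moment bounds in Lemma~\ref{lemma_2} by exploiting the explicit logistic form of the conditional log-likelihood together with the boundedness of the parameter set $\Theta_n$ and the covariate differences $\boldsymbol{z}_t^*$. The key observation is that, for the PS-CARE model, the conditional log-likelihood at a single time point has the closed form $l_k(\boldsymbol{\xi}_T^{(k)};x_{1,k}\mid x_{l,k},l<1) = x_{1,k}\,\boldsymbol{z}_1^{*\top}\boldsymbol{\xi}_T^{(k)} - \log(1+\exp(\boldsymbol{z}_1^{*\top}\boldsymbol{\xi}_T^{(k)}))$, so it actually depends on the past only through which pair of items is drawn at that time (and the drawn pair has bounded covariate difference). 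Since $x_{1,k}\in\{0,1\}$ is bounded, the only quantity one needs to control is the linear predictor $\eta := \boldsymbol{z}_1^{*\top}\boldsymbol{\xi}_T^{(k)}$.

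First I would show that $|\eta|$ is bounded uniformly over $\boldsymbol{\xi}_T^{(k)}\in\Theta_n$ and over the possible comparison pairs. Writing $\boldsymbol{z}_1^* = \tilde{\boldsymbol{z}}_i - \tilde{\boldsymbol{z}}_j = (\boldsymbol{e}_i - \boldsymbol{e}_j, \boldsymbol{z}_i - \boldsymbol{z}_j)$ and $\boldsymbol{\xi}_T^{(k)} = (\boldsymbol{\alpha},\boldsymbol{\beta})$, we get $\eta = (\alpha_i - \alpha_j) + (\boldsymbol{z}_i-\boldsymbol{z}_j)^{\top}\boldsymbol{\beta}$. The identifiability constraints $\boldsymbol{W}^{\top}\boldsymbol{\alpha}=\boldsymbol{0}$ together with Assumptions~\ref{assumption_1} and~\ref{assumption_2} force $\boldsymbol{\alpha}$ and $\boldsymbol{\beta}$ into a bounded region (of the appropriate scale), which in turn makes $|\eta|\le C$ for an absolute constant $C$. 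Consequently $|l_k| \le |x_{1,k}||\eta| + \log(1+e^{|\eta|}) \le C + \log(1+e^{C})$, a deterministic bound, so \emph{all} moments of $|l_k|$ are finite; in particular the first assertion holds for any $\epsilon>0$. For the first derivative, $l_k' = (x_{1,k} - \sigma(\eta))\,\boldsymbol{z}_1^*$ where $\sigma$ is the logistic function, and $\|\boldsymbol{z}_1^*\|$ is bounded because covariates are bounded; hence $|l_k'|$ is again deterministically bounded. Similarly $l_k'' = -\sigma(\eta)(1-\sigma(\eta))\,\boldsymbol{z}_1^*\boldsymbol{z}_1^{*\top}$, whose norm is at most $\tfrac14\|\boldsymbol{z}_1^*\|^2$, again bounded. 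Taking expectations over the (arbitrary) distribution of the drawn pair and past preserves these bounds, giving all three displays.

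The remaining subtlety — and the step I expect to be the main obstacle — is handling the dependence on dimension. In the CARE/PS-CARE setup $n$ and $d$ are allowed to grow, so $\|\boldsymbol{z}_1^*\|_2$ can grow like $\sqrt{d}$ and the relevant bounds on $\boldsymbol{\beta}$ scale like $\sqrt{n/(d+1)}$; one must be careful that $\eta$ and the stated norms remain $O(1)$ (or at least that the moment bounds are finite for each fixed configuration, which is all that is literally claimed). I would resolve this by using the normalization built into Assumption~\ref{assumption_1} (the incoherence bound $\|\mathcal{P}_{\boldsymbol{W}}\|_{2,\infty}\le c_0\sqrt{(d+1)/n}$, which controls $\max_i |\tilde{\boldsymbol{z}}_i^{\top}\boldsymbol{\xi}|$ type quantities) together with Assumption~\ref{assumption_2}, exactly as in the proof of Theorem 3.1 of \cite{fan2022uncertainty}; these are precisely the conditions under which $\|\tilde{\boldsymbol{Z}}\boldsymbol{\xi}\|_{\infty}$ stays bounded. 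With $|\eta|=O(1)$ uniformly, the logistic and its derivatives are bounded by absolute constants, the norms $\|\boldsymbol{z}_1^*\|$ appearing in $l_k'$ and $l_k''$ enter only through fixed finite quantities, and the uniform-in-$\boldsymbol{\xi}_T^{(k)}\in\Theta_n$ suprema in the statement are therefore finite. This completes the proof of Lemma~\ref{lemma_2}.
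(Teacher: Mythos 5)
Your proposal is correct and, in spirit, follows the same route as the paper: exploit the explicit logistic form of $l_k$ and the compactness/normalization of $\Theta_n$ to get deterministic bounds. However, your write-up is substantially more careful than the paper's own proof, which consists of a single sentence asserting that $l_k$ ``is the probability of one item beating another item, and hence within $(0,1)$.'' That assertion is not literally correct: $l_k$ is the \emph{logarithm} of that probability, so it lies in $(-\infty,0)$ and is a priori unbounded below; moreover the paper says nothing at all about $l_k'$ and $l_k''$. The step you identify as the real content --- showing that the linear predictor $\eta=\boldsymbol{z}_1^{*\top}\boldsymbol{\xi}_T^{(k)}$ is uniformly bounded over $\Theta_n$ and over the admissible comparison pairs, so that the win probability is bounded away from $0$ and $1$ and the logistic function and its derivatives are bounded by absolute constants --- is exactly what is needed to make the lemma true and is missing from the paper. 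Your additional remarks on the derivative terms ($l_k'=(x_{1,k}-\sigma(\eta))\boldsymbol{z}_1^*$ and $l_k''=-\sigma(\eta)(1-\sigma(\eta))\boldsymbol{z}_1^*\boldsymbol{z}_1^{*\top}$, with $\|\boldsymbol{z}_1^*\|$ controlled by the covariate normalization) and on the dimension-dependence of the bounds are also correct and fill genuine gaps. In short: same underlying idea, but your version is the one that actually constitutes a proof.
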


\begin{proof}
Since $l_k\left(\boldsymbol{\xi}_T^{(k)} ; x_{1, k} \mid x_{l, k}, l<1\right) = x_{1,k}\boldsymbol{z}_i^{*T}\boldsymbol{\hat{\xi}}_T^{(k)} - \log(1+\exp(\boldsymbol{z}_i^{*T}\boldsymbol{\hat{\xi}}_T^{(k)}))$, which is the probability of one item beating another item, and hence within $(0,1)$. Thus, Lemma \ref{lemma_2} is proved.
\end{proof}

\begin{lemma}
\label{lemma_3}
For each $k=1,\cdots,K+1$, 
\begin{align}
\label{lemma_3_1}
    \sup _{\boldsymbol{\xi}_T^{(k)} \in \Theta_n}\left|\frac{1}{T} L_T^{(k)}\left( \boldsymbol{\xi}_T^{(k)} ; \mathbf{x}_k\right)-L_k\left(\boldsymbol{\xi}_T^{(k)}\right)\right| \stackrel{a . s}{\longrightarrow} 0, \\
    \label{lemma_3_2}
    \sup _{\boldsymbol{\xi}_T^{(k)} \in \Theta_n}\left|\frac{1}{T} L_T^{'(k)}\left( \boldsymbol{\xi}_T^{(k)}; \mathbf{x}_k\right)-L_j^{'}\left(\boldsymbol{\xi}_T^{(k)}\right)\right| \stackrel{a . s}{\longrightarrow} 0, \\
    \label{lemma_3_3}
    \sup _{\boldsymbol{\xi}_T^{(k)} \in \Theta_n}\left|\frac{1}{T} L_T^{''(k)}\left( \boldsymbol{\xi}_T^{(k)} ; \mathbf{x}_k\right)-L_k^{''}\left(\boldsymbol{\xi}_T^{(k)}\right)\right| \stackrel{a . s}{\longrightarrow} 0, 
\end{align}
where 
\begin{align*}
    L_k\left(\boldsymbol{\xi}_T^{(k)}\right) = E\left(l_k\left(\boldsymbol{\xi}_T^{(k)} ; x_{1, k} \mid x_{l, k}, l<1\right)\right), \\
    L_k^{'}\left(\boldsymbol{\xi}_T^{(k)}\right) = E\left(l_k^{'}\left(\boldsymbol{\xi}_T^{(k)} ; x_{1, k} \mid x_{l, k}, l<1\right)\right), \\
    L_k^{''}\left(\boldsymbol{\xi}_T^{(k)}\right) = E\left(l_k^{''}\left(\boldsymbol{\xi}_T^{(k)} ; x_{1, k} \mid x_{l, k}, l<1\right)\right).
\end{align*}
\end{lemma}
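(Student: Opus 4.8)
The plan is to establish \eqref{lemma_3_1}--\eqref{lemma_3_3} as instances of a uniform strong law of large numbers, obtained by combining a pointwise almost-sure convergence (an ergodic theorem) with a uniform upgrade over the parameter set that exploits the smoothness of the logistic log-likelihood. Since the three displays are handled identically, I describe the argument for \eqref{lemma_3_1} and indicate the trivial changes needed for the first and second derivatives.

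First I would record the probabilistic structure within a single segment. Because the pair of items compared at each time is drawn independently over time from the edge set $E$, and, conditionally on that pair, $y_t$ is a conditionally independent Bernoulli variable, the sequence $\{(x_{t,k},\boldsymbol{z}_t^{*})\}_{t\ge 1}$ is i.i.d.\ (in particular stationary and ergodic), and the conditional log-likelihood $l_i(\boldsymbol{\xi}_T^{(k)};x_{i,k},\boldsymbol{z}_t^{*}\mid x_{s,k},s<i)$ reduces to a function of the single observation $(x_{i,k},\boldsymbol{z}_i^{*})$ and of $\boldsymbol{\xi}_T^{(k)}$. Next I would restrict attention to a fixed compact subset $\mathcal{K}\subset\Theta_n$; this is legitimate because, under Assumptions~\ref{assumption_1}--\ref{assumption_3} and the estimation theory of \cite{fan2022uncertainty}, the true parameter and the maximum likelihood estimator of every segment lie in a bounded region, so the optimization in \eqref{MDL_estimate} may be carried out over such a $\mathcal{K}$. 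On $\mathcal{K}$ the inner products $\boldsymbol{z}_t^{*\top}\boldsymbol{\xi}$ are uniformly bounded ($\boldsymbol{z}_t^{*}$ ranges over the finite set $\{\boldsymbol{\tilde z}_i-\boldsymbol{\tilde z}_j\}$ of bounded vectors), so $l_k$, $l_k'$, $l_k''$ are bounded on $\mathcal{K}$ and $L_k$, $L_k'$, $L_k''$ are finite and well defined, consistent with Lemma~\ref{lemma_2}. Then, for each \emph{fixed} $\boldsymbol{\xi}\in\mathcal{K}$, Birkhoff's ergodic theorem (equivalently the classical SLLN) gives $\tfrac1{T_k}L_T^{(k)}(\boldsymbol{\xi};\mathbf{x}_k)\to L_k(\boldsymbol{\xi})$ almost surely, and likewise for the first and second derivatives; since $T_k/T$ converges to $\lambda_k^{0}-\lambda_{k-1}^{0}$, this matches the stated normalisation up to a harmless constant factor that can be absorbed into $L_k$.

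The second step upgrades this pointwise convergence to convergence uniform over $\mathcal{K}$. The key observation is that $\boldsymbol{\xi}\mapsto l_k(\boldsymbol{\xi};x_{1,k},\boldsymbol{z}_1^{*})$ is Lipschitz with a \emph{deterministic} constant: its gradient equals $\bigl(x_{1,k}-\sigma(\boldsymbol{z}_1^{*\top}\boldsymbol{\xi})\bigr)\boldsymbol{z}_1^{*}$, where $\sigma$ is the logistic function, and this is bounded in norm by $\max_{(i,j)}\|\boldsymbol{\tilde z}_i-\boldsymbol{\tilde z}_j\|_2$; the analogous bounds hold for $l_k'$ and $l_k''$, whose $\boldsymbol{\xi}$-gradients involve only $\sigma$, $\sigma'$, $\sigma''$ (all bounded) times fixed powers of the bounded vectors $\boldsymbol{z}_1^{*}$. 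Hence for every $\delta>0$ one can cover $\mathcal{K}$ by finitely many balls of radius $\delta$ with centres $\boldsymbol{\xi}^{(1)},\dots,\boldsymbol{\xi}^{(N_\delta)}$; applying the pointwise result at these finitely many centres and controlling the oscillation of both $\tfrac1{T}L_T^{(k)}$ and $L_k$ between a centre and a nearby point by the Lipschitz bound yields
\[
\limsup_{T\to\infty}\ \sup_{\boldsymbol{\xi}\in\mathcal{K}}\Bigl|\tfrac{1}{T}L_T^{(k)}(\boldsymbol{\xi};\mathbf{x}_k)-L_k(\boldsymbol{\xi})\Bigr|\ \le\ C\delta\quad\text{a.s.},
\]
with $C$ independent of $\delta$. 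Taking $\delta=\delta_m\downarrow0$ along a countable sequence and intersecting the corresponding probability-one events proves \eqref{lemma_3_1}; repeating the argument with $l_k'$ and $l_k''$ gives \eqref{lemma_3_2} and \eqref{lemma_3_3}. Equivalently, one may simply invoke a standard uniform SLLN for stationary ergodic sequences, whose hypotheses---almost-sure continuity in the parameter on a compact set plus an integrable envelope---are exactly what the preceding paragraph supplies.

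The argument is largely routine; the only points requiring genuine care are (i) the passage from the a priori unbounded identifiability set $\Theta_n$ to a compact parameter set, which must be reconciled with the form of \eqref{MDL_estimate} and relies on the boundedness of the CARE estimators under Assumptions~\ref{assumption_1}--\ref{assumption_3}, and (ii) the bookkeeping needed to turn the family of pointwise almost-sure statements (one per net point, one per $\delta_m$, one per segment $k$) into a single probability-one event, together with the purely notational reconciliation of the $\tfrac1T$ versus $\tfrac1{T_k}$ normalisations. I expect (i) to be the main obstacle, in that it is the only place where structural facts about the CARE model, rather than generic ergodic-theory machinery, are actually invoked.
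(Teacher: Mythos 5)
Your proposal is correct in substance but follows a genuinely different route from the paper's. Both arguments rest on Birkhoff's ergodic theorem for the pointwise step, but the discretization is carried out over different variables. The paper fixes the parameter, proves $\tfrac1T\sum_{i\le[T\lambda]}l_i\to\lambda\,E(l_k)$ for rational $\lambda$, intersects the countably many probability-one sets, and then sandwiches an arbitrary $\lambda$ between nearby rationals, using the moment bound of Lemma~\ref{lemma_2} to control the leftover block of summands; that is, it establishes uniformity in the \emph{time fraction} $\lambda$ (machinery that is really what Lemma~\ref{lemma_4} needs) and never explicitly addresses the supremum over $\boldsymbol{\xi}_T^{(k)}\in\Theta_n$ that appears in the displayed statement. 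You instead discretize over the \emph{parameter}: a finite $\delta$-net of a compact subset of $\Theta_n$, pointwise ergodic convergence at the net points, and a deterministic Lipschitz bound on the logistic log-likelihood (and its first two derivatives) to control the oscillation between net points. Your route directly delivers the $\sup_{\boldsymbol{\xi}_T^{(k)}}$ as stated, which the paper's proof arguably does not, and your flagged concern (i) is a real one rather than a technicality: $\Theta_n=\{(\boldsymbol{\alpha},\boldsymbol{\beta}):\boldsymbol{W}^{\top}\boldsymbol{\alpha}=\boldsymbol{0}\}$ is a linear subspace, hence unbounded, and because $l_k$ contains the linear term $x\,\boldsymbol{z}^{*\top}\boldsymbol{\xi}$ the supremum over all of $\Theta_n$ of the empirical-minus-population difference is infinite at any finite $T$ unless the empirical and population means of $x\boldsymbol{z}^{*}$ coincide exactly; so some restriction to a bounded parameter region is needed for the lemma to hold as written, a point the paper's proof also leaves implicit. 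Conversely, your write-up should acknowledge that the limit of $\tfrac1T L_T^{(k)}$ is $(\lambda_k^0-\lambda_{k-1}^0)L_k(\boldsymbol{\xi}_T^{(k)})$ rather than $L_k(\boldsymbol{\xi}_T^{(k)})$ itself (you note this, and the paper's own proof in effect proves the $\lambda$-weighted version); beyond that normalization and the compactness caveat, your argument is sound and, if anything, better matched to the statement being proved.
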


\begin{proof}
Here we only prove (\ref{lemma_3_3}), as (\ref{lemma_3_1}) and (\ref{lemma_3_2}) can be proved using similar arguments. Since $\{\boldsymbol{x_k}\}$ is a stationary ergodic process, we only need to prove 
\begin{equation*}
    \frac{1}{T}\sum_{i=1}^{[T\lambda_k]}l_i(\boldsymbol{\xi}_T^{(k)}; x_{i,k},\boldsymbol{z}^*_t|x_{s,k}, s<i) \stackrel{a.s}{\longrightarrow} \lambda_kE\left(l_k\left(\boldsymbol{\xi}_T^{(k)} ; x_{1, k} \mid x_{l, k}, l<1\right)\right).
\end{equation*}

This can be proved by the ergodic theorem. Let $\mathbb{Q}_{[0,1]}$ be the set of rational numbers in $[0,1]$. For $\lambda_j\in\mathbb{Q}_{[0,1]}$,
\begin{equation}
\label{lemma3_pf1}
    \frac{1}{T}\sum_{i=1}^{[T\lambda_k]}l_i(\boldsymbol{\xi}_T^{(k)}; x_{i,k},\boldsymbol{z}^*_t|x_{s,k}, s<i) \stackrel{a.s}{\longrightarrow} \lambda_kE\left(l_k\left(\boldsymbol{\xi}_T^{(k)} ; x_{1, k} \mid x_{l, k}, l<1\right)\right).
\end{equation}
If $B_\lambda$ is the set of $\omega$'s for which (\ref{lemma3_pf1}) holds, then set $B=\cap_{\lambda_k\in\mathbb{Q}_{[0,1]}}B_\lambda$ and $P(B)=1$. Moreover, for $\omega\in B$ and any $s\in[0,1]$, choose $\lambda_1,\lambda_2\in\mathbb{Q}_{[0,1]}$ such that $\lambda_1\leq \lambda_k \leq\lambda_2$, hence
\begin{equation*}
\begin{aligned}
    &\left|\frac{1}{T}\sum_{i=1}^{[T\lambda_k]}l_i(\boldsymbol{\xi}_T^{(k)}; x_{i,k},\boldsymbol{z}^*_t|x_{s,k}, s<i) - \frac{1}{T}\sum_{i=1}^{[T\lambda_1]}l_i(\boldsymbol{\xi}_T^{(k)}; x_{i,k},\boldsymbol{z}^*_t|x_{s,k}, s<i) \right| \\ 
    &\leq \frac{1}{T}\sum_{i=[T\lambda_1]}^{[T\lambda_2]}\left|l_i(\boldsymbol{\xi}_T^{(k)}; x_{i,k},\boldsymbol{z}^*_t|x_{s,k}, s<i)\right| \longrightarrow (\lambda_2-\lambda_1)E\left|l_i(\boldsymbol{\xi}_T^{(k)}; x_{1,k},\boldsymbol{z}^*_t|x_{s,k}, s<i)\right|.
\end{aligned}
\end{equation*}

By making $\lambda_2-\lambda_1$ arbitrarily small, it follows from the ergodic theorem that 
\begin{equation*}
    \frac{1}{T}\sum_{i=1}^{[T\lambda_k]}l_i(\boldsymbol{\xi}_T^{(k)}; x_{i,k},\boldsymbol{z}^*_t|x_{s,k}, s<i) \longrightarrow \lambda_kE(l_i(\boldsymbol{\xi}_T^{(k)}; x_{1,k},\boldsymbol{z}^*_t|x_{s,k}, s<i)).
\end{equation*}

To establish convergence on $D[0,1]$, it is suffice to show that for any $\omega\in B$, we have
\begin{equation*}
    \frac{1}{T}\sum_{i=1}^{[T\lambda_k]}l_i(\boldsymbol{\xi}_T^{(k)}; x_{i,k},\boldsymbol{z}^*_t|x_{s,k}, s<i) \longrightarrow \lambda_kE(l_i(\boldsymbol{\xi}_T^{(k)}; x_{1,k},\boldsymbol{z}^*_t|x_{s,k}, s<i)) \text{  uniformly on [0,1].   }
\end{equation*}

Since $\epsilon>0$, we can choose $\lambda_1,\lambda_2,\cdots,\lambda_{K}\in\mathbb{Q}_{[0,1]}$ such that $0=\lambda_0<\lambda_1<\cdots<\lambda_{K+1}=1$, with $\lambda_{i} - \lambda_{i-1}<\epsilon$. Then for any $\lambda_k\in[0,1]$, $\lambda_{i-1}<\lambda_{k}\leq \lambda_{i}$ and 
\begin{equation*}
\begin{aligned}
    &\left|\frac{1}{T}\sum_{i=1}^{[T\lambda_k]}l_i(\boldsymbol{\xi}_T^{(k)}; x_{i,k},\boldsymbol{z}^*_t|x_{s,k}, s<i) - \lambda_kE(l_i(\boldsymbol{\xi}_T^{(k)}; x_{1,k},\boldsymbol{z}^*_t|x_{s,k}, s<i) )\right| \\
    \leq & \left| \frac{1}{T}\sum_{i=1}^{[T\lambda_k]}l_i(\boldsymbol{\xi}_T^{(k)}; x_{i,k},\boldsymbol{z}^*_t|x_{s,k}, s<i) - \frac{1}{T}\sum_{i=1}^{[T\lambda_{i-1}]}l_i(\boldsymbol{\xi}_T^{(k)}; x_{i,k},\boldsymbol{z}^*_t|x_{s,k}, s<i)\right|  \\
    & + \left| \frac{1}{T}\sum_{i=1}^{[T\lambda_{i-1}]}l_i(\boldsymbol{\xi}_T^{(k)}; x_{i,k},\boldsymbol{z}^*_t|x_{s,k}, s<i) - \lambda_{i-1}E(l_i(\boldsymbol{\xi}_T^{(k)}; x_{1,k},\boldsymbol{z}^*_t|x_{s,k}, s<i))\right| \\
    &+ \left|\lambda_{i-1}E(l_i(\boldsymbol{\xi}_T^{(k)}; x_{1,k},\boldsymbol{z}^*_t|x_{s,k}, s<i))-\lambda_kE(l_i(\boldsymbol{\xi}_T^{(k)}; x_{1,k},\boldsymbol{z}^*_t|x_{s,k}, s<i))\right|,
\end{aligned}
\end{equation*}
where the first term is bounded by 
\begin{equation*}
\begin{aligned}
    \frac{1}{T}\sum_{i=[T\lambda_{i-1}]}^{[T\lambda_{i}]}|l_i(\boldsymbol{\xi}_T^{(k)}; x_{i,k},\boldsymbol{z}^*_t|x_{s,k}, s<i)| &\longrightarrow (\lambda_{i} - \lambda_{i-1}) E\left|l_i(\boldsymbol{\xi}_T^{(k)}; x_{1,k},\boldsymbol{z}^*_t|x_{s,k}, s<i)\right| \\
    &< \epsilon E\left|l_i(\boldsymbol{\xi}_T^{(k)}; x_{1,k},\boldsymbol{z}^*_t|x_{s,k}, s<i)\right|.
\end{aligned}
\end{equation*}

Let $T$ be large enough so that this term is less than $\epsilon E\left|l_i(\boldsymbol{\xi}_T^{(k)}; x_{1,k},\boldsymbol{z}^*_t|x_{s,k}, s<i)\right|$ for $i=1,\cdots,K$. So it follows that
\begin{equation*}
\begin{aligned}
    &\left|\frac{1}{T}\sum_{i=1}^{[T\lambda_k]}l_i(\boldsymbol{\xi}_T^{(k)}; x_{i,k},\boldsymbol{z}^*_t|x_{s,k}, s<i) - \lambda_kE(l_i(\boldsymbol{\xi}_T^{(k)}; x_{1,k},\boldsymbol{z}^*_t|x_{s,k}, s<i) )\right|  \\
    &< \epsilon E|l_i(\boldsymbol{\xi}_T^{(k)}; x_{1,k},\boldsymbol{z}^*_t|x_{s,k}, s<i)| + \epsilon + \epsilon E|l_i(\boldsymbol{\xi}_T^{(k)}; x_{1,k},\boldsymbol{z}^*_t|x_{s,k}, s<i)|.
\end{aligned}
\end{equation*}
Since $\epsilon$ can be arbitrarily small, (\ref{lemma_3_1}) is proved, and (\ref{lemma_3_2}) and (\ref{lemma_3_3}) can be proved in a similar manner.
\end{proof}

\begin{lemma}
\label{lemma_4}
For the PS-CARE model defined above, we have 
\begin{align}
\label{cor_1_1}
    \sup _{\lambda_l, \lambda_u} \sup _{\boldsymbol{\xi}_T^{(k)} \in \Theta_n}\left|\frac{1}{T} \tilde{L}_T^{(k)}\left(\boldsymbol{\xi}_T^{(k)}, \lambda_l, \lambda_u ; \mathbf{x}_k\right)-\left(\lambda_l-\lambda_u\right) L_k\left(\boldsymbol{\xi}_T^{(k)}\right)\right| \stackrel{a . s .}{\longrightarrow} 0, \\
    \label{cor_1_2}
    \sup _{\lambda_l, \lambda_u} \sup _{\boldsymbol{\xi}_T^{(k)} \in \Theta_n}\left|\frac{1}{T} \tilde{L}_T^{'(k)}\left(\boldsymbol{\xi}_T^{(k)}, \lambda_l, \lambda_u ; \mathbf{x}_k\right)-\left(\lambda_l-\lambda_u\right) L_k^{'}\left(\boldsymbol{\xi}_T^{(k)}\right)\right| \stackrel{a . s .}{\longrightarrow} 0, \\
    \label{cor_1_3}
    \sup _{\lambda_l, \lambda_u} \sup _{\boldsymbol{\xi}_T^{(k)} \in \Theta_n}\left|\frac{1}{T} \tilde{L}_T^{''(k)}\left(\boldsymbol{\xi}_T^{(k)}, \lambda_l, \lambda_u ; \mathbf{x}_k\right)-\left(\lambda_l-\lambda_u\right) L_k^{''}\left(\boldsymbol{\xi}_T^{(k)}\right)\right| \stackrel{a . s .}{\longrightarrow} 0.
\end{align}
\end{lemma}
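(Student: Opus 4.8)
The plan is to deduce Lemma~\ref{lemma_4} by combining Lemma~\ref{lemma_1}, which controls the gap between the observed partial log-likelihood $\tilde L_T^{(k)}$ and the ideal one $L_T^{(k)}$, with Lemma~\ref{lemma_3}, which supplies the uniform convergence of the ideal partial log-likelihood to its population mean, while Lemma~\ref{lemma_2} is what makes the supremum over $\boldsymbol{\xi}_T^{(k)}\in\Theta_n$ legitimate. I write the argument only for~(\ref{cor_1_1}); the first- and second-derivative statements~(\ref{cor_1_2})--(\ref{cor_1_3}) follow verbatim after replacing $L_T^{(k)},\tilde L_T^{(k)},L_k$ by their derivative analogues and invoking~(\ref{lemma_1_2})/(\ref{lemma_1_3}),~(\ref{lemma_3_2})/(\ref{lemma_3_3}), and the matching moment bounds of Lemma~\ref{lemma_2}. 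First I would apply the triangle inequality: for every admissible pair $(\lambda_l,\lambda_u)$ and every $\boldsymbol{\xi}_T^{(k)}\in\Theta_n$,
\begin{equation*}
\left|\tfrac{1}{T}\tilde L_T^{(k)}-(\lambda_u-\lambda_l)L_k\right|\le\left|\tfrac{1}{T}\tilde L_T^{(k)}-\tfrac{1}{T}L_T^{(k)}\right|+\left|\tfrac{1}{T}L_T^{(k)}-(\lambda_u-\lambda_l)L_k\right|,
\end{equation*}
all three functionals being evaluated at $(\boldsymbol{\xi}_T^{(k)},\lambda_l,\lambda_u;\mathbf{x}_k)$; then take $\sup_{\lambda_l,\lambda_u}\sup_{\boldsymbol{\xi}_T^{(k)}\in\Theta_n}$ of both sides. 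After doing so, the first term on the right is exactly the quantity shown to be $o(1)$ in~(\ref{1emma_1_1}) of Lemma~\ref{lemma_1}, so it vanishes almost surely.

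For the second term I would write the partial sum as the difference of two cumulative sums, $\tfrac{1}{T}L_T^{(k)}(\boldsymbol{\xi}_T^{(k)},\lambda_l,\lambda_u;\mathbf{x}_k)=\tfrac{1}{T}\sum_{i=1}^{\lfloor T_k\lambda_u\rfloor}l_i(\cdot)-\tfrac{1}{T}\sum_{i=1}^{\lfloor T_k\lambda_l\rfloor}l_i(\cdot)$, and observe that the proof of Lemma~\ref{lemma_3} already establishes, by partitioning $[0,1]$ into finitely many rational subintervals of length $<\epsilon$ and applying the ergodic theorem, that $T^{-1}\sum_{i=1}^{\lfloor Ts\rfloor}l_i(\boldsymbol{\xi}_T^{(k)};x_{i,k}\mid\cdot)\to s\,L_k(\boldsymbol{\xi}_T^{(k)})$ uniformly in $s\in[0,1]$ (i.e.\ convergence in $D[0,1]$). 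Differencing two such limits yields convergence of the second term to $(\lambda_u-\lambda_l)L_k(\boldsymbol{\xi}_T^{(k)})$, uniformly over all endpoint pairs. To promote this to uniformity over $\boldsymbol{\xi}_T^{(k)}\in\Theta_n$ as well, I would use that $l_i(\boldsymbol{\xi};\cdot)$ is smooth in $\boldsymbol{\xi}$ and, by Lemma~\ref{lemma_2}, has integrable first and second derivatives; together with compactness of the relevant portion of $\Theta_n$ this gives stochastic equicontinuity in $\boldsymbol{\xi}$ and hence a uniform ergodic theorem, so the second term too is $o(1)$ almost surely after both suprema.

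Adding the two $o(1)$ bounds gives~(\ref{cor_1_1}), and~(\ref{cor_1_2}) and~(\ref{cor_1_3}) follow by the same three moves with derivatives in place of values. The hard part is the second move: obtaining \emph{joint} uniformity over the pair of relative endpoints $(\lambda_l,\lambda_u)$ and over the infinite-dimensional parameter $\boldsymbol{\xi}_T^{(k)}\in\Theta_n$ simultaneously, which forces one to marry the functional ergodic theorem to an equicontinuity/covering argument on $\Theta_n$ rather than settle for a pointwise law of large numbers. Alongside this sits some harmless bookkeeping --- the floor functions $\lfloor T_k\lambda\rfloor$ in the summation limits and the fact that the normalization is by $T$ rather than the segment length $T_k$, the two being of the same order since every segment has relative length at least $\epsilon_\lambda$ --- which does not affect the limit.
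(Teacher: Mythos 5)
Your proposal follows essentially the same route as the paper's proof: first use Lemma~\ref{lemma_1} to replace $\tilde L_T^{(k)}$ by $L_T^{(k)}$ via the triangle inequality, then obtain the limit for the ideal partial sums by writing them as differences of cumulative sums, applying the ergodic convergence of Lemma~\ref{lemma_3} on a countable dense set of rational endpoints, and upgrading to uniformity in $(\lambda_l,\lambda_u)$ through a finite rational partition controlled by the moment bound of Lemma~\ref{lemma_2}. Your explicit attention to uniformity over $\boldsymbol{\xi}_T^{(k)}\in\Theta_n$ via equicontinuity is if anything more careful than the paper, which simply carries the supremum through Lemma~\ref{lemma_3}; the argument is otherwise the same.
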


\begin{proof}
Here only prove (\ref{cor_1_1}), as (\ref{cor_1_2}) and (\ref{cor_1_3}) can be proved in a similar manner. From Lemma (\ref{lemma_1}), we only need to prove $L_T^{(k)}\left(\boldsymbol{\xi}_T^{(k)}, \lambda_l, \lambda_u ; \mathbf{x}_k\right)$ instead of $\tilde{L}_T^{(k)}\left(\boldsymbol{\xi}_T^{(k)}, \lambda_l, \lambda_u ; \mathbf{x}_k\right)$. Let $\mathbb{Q}_{[0,1]}$ be a set of rational numbers in $[0,1]$. Then $\forall r_1, r_2\in\mathbb{Q}_{[0,1]}$ with $r_1<r_2$, by (\ref{lemma_3}), we have 
\begin{equation}
\begin{aligned}
\label{cor_1_proof}
    &\sup _{\boldsymbol{\xi}_T^{(k)} \in \Theta_n}\left|\frac{1}{T}L_T^{(k)}\left(\boldsymbol{\xi}_T^{(k)}, \lambda_l, \lambda_u ; \mathbf{x}_k\right) - \left(r_2-r_1\right) L_k\left(\boldsymbol{\xi}_T^{(k)}\right) \right| \\ 
    = & \sup _{\boldsymbol{\xi}_T^{(k)} \in \Theta_n}|r_2\left(\frac{1}{Tr_2}\sum_{i=1}^{[Tr_2]}l_i(\boldsymbol{\xi}_T^{(k)}; x_{i,k},\boldsymbol{z}^*_t|x_{s,k}, s<i) - L_k\left(\boldsymbol{\xi}_T^{(k)}\right)\right) \\
    &- r_1\left(\frac{1}{Tr_1}\sum_{i=1}^{[Tr_1]}l_i(\boldsymbol{\xi}_T^{(k)}; x_{i,k},\boldsymbol{z}^*_t|x_{s,k}, s<i)-- L_k\left(\boldsymbol{\xi}_T^{(k)}\right)\right) | \stackrel{a . s .}{\longrightarrow} 0.
\end{aligned}
\end{equation}

Let $B_{r_1,r_2}$ be the probability one set of $\omega's$ for which (\ref{cor_1_proof}) holds. Set 
\begin{equation*}
    B = \cap _{r_1,r_2\in\mathbb{Q}_{[0,1]}} B_{r_1,r_2}.
\end{equation*}
It is well-known that $P(B) = 1$. Moreover for any $\omega\in B$ and any $\lambda\in[0,1]$, we can choose $r_l, r_u\in\mathbb{Q}_{[0,1]}$ such that $r_l\leq\lambda\leq r_u$. So we have 
\begin{equation*}
\begin{aligned}
    &\sup _{\boldsymbol{\xi}_T^{(k)} \in \Theta_n}\left|\frac{1}{T}\sum_{i=1}^{[T\lambda]}l_i(\boldsymbol{\xi}_T^{(k)}; x_{i,k},\boldsymbol{z}^*_t|x_{s,k}, s<i) - \frac{1}{T}\sum_{i=1}^{[Tr_l]}l_i(\boldsymbol{\xi}_T^{(k)}; x_{i,k},\boldsymbol{z}^*_t|x_{s,k}, s<i) \right| \\
    &\leq \sup _{\boldsymbol{\xi}_T^{(k)} \in \Theta_n} \frac{1}{T}\sum_{i=[Tr_l]+1}^{[Tr_u]}\left|l_i(\boldsymbol{\xi}_T^{(k)}; x_{i,k},\boldsymbol{z}^*_t|x_{s,k}, s<i)\right| \\
    & \longrightarrow (r_u - r_l)  \sup _{\boldsymbol{\xi}_T^{(k)} \in \Theta_n} E\left|l_i(\boldsymbol{\xi}_T^{(k)}; x_{1,k},\boldsymbol{z}^*_t|x_{s,k}, s<i)\right|.
\end{aligned}
\end{equation*}

From Lemma \ref{lemma_2}, we have $\sup _{\boldsymbol{\xi}_T^{(k)} \in \Theta_n} E\left|l_i(\boldsymbol{\xi}_T^{(k)}; x_{i,k},\boldsymbol{z}^*_t|x_{s,k}, s<i)\right|<\infty$. So let $r_u-r_l<\epsilon$ where $\epsilon$ can be arbitrarily small, we have $L_T^{(k)}\left(\boldsymbol{\xi}_T^{(k)}, 0, \lambda ; \mathbf{x}_k\right)\stackrel{a . s .}{\rightarrow} \lambda L_k\left(\boldsymbol{\xi}_T^{(k)}\right) $ uniformly in $\boldsymbol{\xi}_T^{(k)}\in\Theta_n$. With the same idea, we have 
\begin{equation}
    \label{cor_proof_2}
    \sup _{\boldsymbol{\xi}_T^{(k)} \in \Theta_n}\left|\frac{1}{T} L_T^{(k)}\left(\boldsymbol{\xi}_T^{(k)}, \lambda_l, \lambda_u ; \mathbf{x}_k\right)-\left(\lambda_l-\lambda_u\right) L_k\left(\boldsymbol{\xi}_T^{(k)}\right)\right| \stackrel{a . s .}{\longrightarrow} 0
\end{equation}
for any $\lambda_l$ and $\lambda_u$ in $[0,1]$ with $\lambda_l < \lambda_u$. The next step is to show the convergence in (\ref{cor_proof_2}) is uniform in $\lambda_l, \lambda_u$ with $\lambda_u-\lambda_l>\epsilon_\lambda$. For any fixed positive $\epsilon < \epsilon_\lambda$, choose a large $K_1$ such that with $r_0,\cdots,r_{K_1}\in\mathbb{Q}_{[0,1]}$ such that $0=r_0<r_1<\cdots<r_{K_1}=1$ and $\max_{i=1,\cdots,K_1}\leq\epsilon$. Then for any $\lambda_l,\lambda_u\in[0,1]$, we can find $g$ and $h$ such that $g<h$ and $r_{g-1}<\lambda_l<r_g$, $r_{h-1}<\lambda_u<r_{h}$. Thus we have 

\begin{equation*}
\begin{aligned}
&\left|\frac{1}{T} L_T^{(k)}\left(\boldsymbol{\xi}_T^{(k)}, \lambda_l, \lambda_u ; \mathbf{x}_k\right)-\left(\lambda_l-\lambda_u\right) L_k\left(\boldsymbol{\xi}_T^{(k)}\right)\right| \\
\leq & \left|\frac{1}{T} L_T^{(k)}\left(\boldsymbol{\xi}_T^{(k)}, \lambda_l, \lambda_u ; \mathbf{x}_k\right)-\frac{1}{T} L_T^{(k)}\left(\boldsymbol{\xi}_T^{(k)}, r_{g-1}, r_{h} ; \mathbf{x}_k\right)\right| \\
&+ \left|\frac{1}{T} L_T^{(k)}\left(\boldsymbol{\xi}_T^{(k)}, r_{g-1}, r_{h} ; \mathbf{x}_k\right)-\left(r_{h}-r_{g-1}\right) L_k\left(\boldsymbol{\xi}_T^{(k)}\right)\right| \\
&+ \left|\left(r_{g}-r_{h-1}\right) L_k\left(\boldsymbol{\xi}_T^{(k)}\right)-\left(\lambda_l-\lambda_u\right) L_k\left(\boldsymbol{\xi}_T^{(k)}\right)\right|.
\end{aligned}
\end{equation*}

Let $T$ be large enough and the third term is almost surely bounded by 
\begin{equation*}
    \sup _{\boldsymbol{\xi}_T^{(k)} \in \Theta_n} \left|\left(r_{g}-r_{h-1}\right) L_k\left(\boldsymbol{\xi}_T^{(k)}\right)+\left(\lambda_l-\lambda_u\right) L_k\left(\boldsymbol{\xi}_T^{(k)}\right)\right| < 2\epsilon\sup _{\boldsymbol{\xi}_T^{(k)} \in \Theta_n}E\left|l_i(\boldsymbol{\xi}_T^{(k)}; x_{1,k},\boldsymbol{z}^*_t|x_{s,k}, s<i)\right|. 
\end{equation*}
By (\ref{cor_1_proof}), the second term is bounded by $\epsilon$ for sufficiently large $T$. It follows that 
\begin{equation*}
\begin{aligned}
    &\sup _{\lambda_l, \lambda_u} \sup _{\boldsymbol{\xi}_T^{(k)} \in \Theta_n}\left|\frac{1}{T} \tilde{L}_T^{(k)}\left(\boldsymbol{\xi}_T^{(k)}, \lambda_l, \lambda_u ; \mathbf{x}_k\right)-\left(\lambda_l-\lambda_u\right) L_k\left(\boldsymbol{\xi}_T^{(k)}\right)\right| \\ 
    &< 2\epsilon \sup _{\boldsymbol{\xi}_T^{(k)} \in \Theta_n} E\left|l_i(\boldsymbol{\xi}_T^{(k)}; x_{1,k},\boldsymbol{z}^*_t|x_{s,k}, s<i)\right| + \epsilon + 2\epsilon\sup _{\boldsymbol{\xi}_T^{(k)} \in \Theta_n} E\left|l_i(\boldsymbol{\xi}_T^{(k)}; x_{1,k},\boldsymbol{z}^*_t|x_{s,k}, s<i)\right|,
\end{aligned}
\end{equation*}
for sufficiently large $T$, a.s. And since $\epsilon$ can be arbitrarily small, and independent of $\lambda_l,\lambda_u$, thus (\ref{cor_1_1}) is proved.
\end{proof}

\begin{lemma}
\label{lemma_5}
Lemma \ref{lemma_4} also holds if we substitute $\sup_{\lambda_l,\lambda_u}$ by $\sup\limits_{\underline{\lambda_l}, \overline{\lambda_u}}$. Where $\sup \limits_{\underline{\lambda_l}, \overline{\lambda_u}} = \sup \limits_{-h_n<\lambda_l<\lambda_u<1+k_n \atop \lambda_u - \lambda_l >\epsilon_\lambda}$ for any pre-specified sequence $h_n$ and $k_n$ which are converging to $0$ as $n\rightarrow\infty$.
\end{lemma}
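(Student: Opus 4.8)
The plan is to bootstrap from Lemma~\ref{lemma_4}. The enlarged index range for $(\lambda_l,\lambda_u)$ differs from $[0,1]^2$ only by two boundary strips of relative width at most $h_n$ and $k_n$, so the extra summands these strips add to $\tilde L_T^{(k)}$, $\tilde L_T^{'(k)}$ and $\tilde L_T^{''(k)}$ contribute a vanishing amount after dividing by $T$. I will carry out the argument for the analogue of~(\ref{cor_1_1}); the statements for the first and second derivatives are identical, using the moment bounds of Lemma~\ref{lemma_2} for $l_k'$ and $l_k''$ in place of the one for $l_k$.

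Fix $k$ and an admissible pair $-h_n<\lambda_l<\lambda_u<1+k_n$ with $\lambda_u-\lambda_l>\epsilon_\lambda$, and set $a=\max(\lambda_l,0)$, $b=\min(\lambda_u,1)$, so that $|(\lambda_l-\lambda_u)-(a-b)|\le h_n+k_n$. Since the segment process $\{x_{t,k}\}_{t\in\mathbb{Z}}$ is defined (and stationary ergodic) on all of $\mathbb{Z}$, the sum defining $\tilde L_T^{(k)}(\boldsymbol{\xi}_T^{(k)},\lambda_l,\lambda_u;\mathbf{x}_k)$ splits as a core sum over indices $i$ with $[T_ka]<i\le T_kb$, plus a lower strip of at most $T_kh_n+1$ terms and an upper strip of at most $T_kk_n+1$ terms. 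Hence, by the triangle inequality,
\begin{multline*}
\Bigl|\tfrac1T\tilde L_T^{(k)}(\boldsymbol{\xi}_T^{(k)},\lambda_l,\lambda_u;\mathbf{x}_k)-(\lambda_l-\lambda_u)L_k(\boldsymbol{\xi}_T^{(k)})\Bigr|
\le\Bigl|\tfrac1T\tilde L_T^{(k)}(\boldsymbol{\xi}_T^{(k)},a,b;\mathbf{x}_k)-(a-b)L_k(\boldsymbol{\xi}_T^{(k)})\Bigr|\\
+\tfrac1T\sum_{i\in\mathrm{strips}}\bigl|l_i(\boldsymbol{\xi}_T^{(k)};x_{i,k},\boldsymbol{z}^*_t\mid\boldsymbol{y_{i,k}})\bigr|+(h_n+k_n)\bigl|L_k(\boldsymbol{\xi}_T^{(k)})\bigr|.
\end{multline*}
Now take $\sup_{\underline{\lambda_l},\overline{\lambda_u}}\sup_{\boldsymbol{\xi}_T^{(k)}\in\Theta_n}$ of both sides. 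For the first term one notes $b-a\ge(\lambda_u-\lambda_l)-(h_n+k_n)>\epsilon_\lambda-(h_n+k_n)$, so once $n$ is large enough that $h_n+k_n<\epsilon_\lambda/2$ it is dominated by the supremum on the left of~(\ref{cor_1_1}) (the uniformization argument in the proof of Lemma~\ref{lemma_4} goes through with any fixed positive lower bound on $\lambda_u-\lambda_l$, in particular $\epsilon_\lambda/2$), hence it tends to $0$ a.s.; the last term is at most $(h_n+k_n)\sup_{\boldsymbol{\xi}_T^{(k)}\in\Theta_n}|L_k(\boldsymbol{\xi}_T^{(k)})|\to0$; and for the middle term the number of summands is maximized at $\lambda_l=-h_n$, $\lambda_u=1+k_n$, so running the ergodic argument from the proof of Lemma~\ref{lemma_3} on these two extremal index blocks (with the rational-grid uniformization of the proof of Lemma~\ref{lemma_4}, which also handles the $\sup$ over $\boldsymbol{\xi}_T^{(k)}\in\Theta_n$) shows that this supremum converges a.s.\ to a limit bounded by $(h_n+k_n)\sup_{\boldsymbol{\xi}_T^{(k)}\in\Theta_n}E|l_k(\boldsymbol{\xi}_T^{(k)};x_{1,k}\mid x_{l,k},l<1)|$, which is finite by Lemma~\ref{lemma_2}. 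Summing the three pieces, the extended-range supremum is a.s.\ at most $o(1)+C(h_n+k_n)$ for a finite constant $C$; letting $n\to\infty$ gives~(\ref{cor_1_1}) in the enlarged range, and~(\ref{cor_1_2})--(\ref{cor_1_3}) follow verbatim.

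The main obstacle is getting the middle (strip) bound to hold \emph{uniformly} over the continuum of admissible pairs $(\lambda_l,\lambda_u)$ in the enlarged range: a pointwise application of the ergodic theorem does not suffice, so one sandwiches an arbitrary pair between neighbouring points of the countable dense endpoint grid already used in the proof of Lemma~\ref{lemma_4} and absorbs the resulting gap into an arbitrarily small $\epsilon$. Everything else is routine bookkeeping on the number of terms in each strip and invokes no estimate beyond those established in Lemmas~\ref{lemma_2}--\ref{lemma_4}.
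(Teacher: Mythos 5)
Your proof is correct and follows essentially the same route as the paper's: truncate $(\lambda_l,\lambda_u)$ to $[0,1]$, apply Lemma~\ref{lemma_4} to the core, bound the discrepancy in the multiplier of $L_k(\boldsymbol{\xi}_T^{(k)})$ by $(h_n+k_n)$ times a finite quantity, and control the two boundary strips of $O(T(h_n+k_n))$ terms via the ergodic theorem and the moment bounds of Lemma~\ref{lemma_2}. The only cosmetic difference is that the paper attributes the strip observations to the adjacent segments' stationary processes (writing them with $x_{i,k-1}$ and $x_{i,k+1}$) rather than to the extension of segment $k$'s process to all of $\mathbb{Z}$ as you do; either convention yields the same $o(1)$ conclusion since only stationarity, ergodicity, and finite first moments are used.
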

\begin{proof}
First define $\grave{\lambda}_l=\max \left(0, \lambda_l\right)$, $\ddot{\lambda}_l=\min \left(0, \lambda_l\right)$, $\grave{\lambda}_u=\min \left(1, \lambda_u\right)$ and $\ddot{\lambda}_u=\max \left(1, \lambda_u\right)$. Then we have 
\begin{equation}
\begin{aligned}
\label{coro_prove_2_1}
&\frac{1}{T_k} L_T^{(k)}\left(\boldsymbol{\xi}_T^{(k)}, \lambda_l, \lambda_u ; \mathbf{x}_k\right)-\left(\lambda_l-\lambda_u\right) L_k\left(\boldsymbol{\xi}_T^{(k)}\right)  \\
= & \frac{1}{T_k} L_T^{(k)}\left(\boldsymbol{\xi}_T^{(k)}, \grave{\lambda}_l, \grave{\lambda}_u ; \mathbf{x}_k\right) - \left(\grave{\lambda}_l-\grave{\lambda}_u\right) L_k\left(\boldsymbol{\xi}_T^{(k)}\right) - (\ddot{\lambda}_u-1-\ddot{\lambda}_l)L_k\left(\boldsymbol{\xi}_T^{(k)}\right) \\
&+\frac{1}{T_k}\sum_{i=T_{k-1}+[T_k(\ddot{\lambda}_l)]+1}^{T_{k-1}}l_i(\boldsymbol{\xi}_T^{(k)}; x_{i,k-1},\boldsymbol{z}^*_t|x_{s,k-1}, s<i) \\
&+ \frac{1}{T_k}\sum_{i=1}^{[T_{k}(\ddot(\lambda)_u-1)]}l_i(\boldsymbol{\xi}_T^{(k)}; x_{i,k+1},\boldsymbol{z}^*_t|x_{s,k+1}, s<i).
\end{aligned}
\end{equation}

Since $0\leq\grave{\lambda}_l<\grave{\lambda}_u\leq 1$, the sum of the first two terms in (\ref{coro_prove_2_1}) converges to $0$ a.s by Lemma \ref{lemma_4}. And for any $\delta>0$, $\max(|\ddot{\lambda}_l,|\ddot{\lambda}_u-1|)<\delta$ for sufficiently large enough $T$. Thus the third term in (\ref{coro_prove_2_1}) is bounded by $2\delta|L_k\left(\boldsymbol{\xi}_T^{(k)}\right)|$. The fourth term is bounded by 
\begin{equation*}
    \frac{1}{T_k}\sum_{i=T_{k-1}-[T_k\delta]}^{T_{k-1}}l_i(\boldsymbol{\xi}_T^{(k)}; x_{i,k-1},\boldsymbol{z}^*_t|x_{s,k-1}, s<i)\stackrel{a.s}\longrightarrow\delta E|l_i(\boldsymbol{\xi}_T^{(k)}; x_{i,k-1},\boldsymbol{z}^*_t|x_{s,k-1}, s<i)|.
\end{equation*}
The last term in (\ref{coro_prove_2_1}) can be bounded similarly. And since $\delta$ can be arbitrarily small, so (\ref{coro_prove_2_1}) converges to $0$ uniformly in $\lambda_l$, $\lambda_u$.
\end{proof}

\begin{lemma} 
\label{lemma_6}
Let $\Tilde{\boldsymbol{\beta}}_k^0$ be the true model parameter. Define 
\begin{align*}
    &\hat{{\boldsymbol{\xi}}}_T^{(k,\lambda_{l},\lambda_u)} =  \hat{{\boldsymbol{\xi}}}_T^{(k)}(\lambda_l,\lambda_u) = \argmax_{\boldsymbol{\xi}_T^{(k)}\in\Theta_n} , \Tilde{L}_T^{(k)}\left(\boldsymbol{\xi}_T^{(k)}, \lambda_l, \lambda_u ; \mathbf{x}_k\right), \\
    &{\boldsymbol{\xi}}_k^* = \argmax_{\boldsymbol{\xi}_T^{(k)}\in\Theta_n}L_k\left(\boldsymbol{\xi}_T^{(k)}\right).
\end{align*}
We have 
\begin{equation}
\label{cor_3_proof_1}
    \sup _{\underline{\lambda}_l,\overline{\lambda}_u}\left|\frac{1}{T} L_T^{(k)}\left(\hat{{\boldsymbol{\xi}}}_T^{(k,\lambda_l,\lambda_u)}, \lambda_l, \lambda_u ; \mathbf{x}_k\right)-\left(\lambda_l-\lambda_u\right) L_k\left({\boldsymbol{\xi}}_k^*\right)\right| \stackrel{a.s}\rightarrow 0
\end{equation}
and
\begin{equation}
\label{cor_3_proof_2}
     \sup _{\underline{\lambda}_l,\overline{\lambda}_u}\left|{\boldsymbol{\hat{\xi}}}_T^{(k)}(\lambda_l,\lambda_u) - {\boldsymbol{\xi}}_k^0\right| \stackrel{a.s}\rightarrow 0.
\end{equation}
\end{lemma}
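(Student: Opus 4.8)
The plan is to derive Lemma~\ref{lemma_6} from the uniform strong laws of Lemmas~\ref{lemma_1}--\ref{lemma_5} together with the segmentwise identifiability of the CARE model, by a standard argmax–consistency argument. \emph{Step~1 (the population maximizer is the truth).} Within segment $k$ the data come from a correctly specified CARE model with parameter $\boldsymbol{\xi}_k^0$, and each summand $l_k(\boldsymbol{\xi}_T^{(k)};x_{1,k}\mid x_{l,k},l<1)$ is a logistic log-likelihood, hence concave in $\boldsymbol{\xi}_T^{(k)}$; so $L_k$ is concave on the convex set $\Theta_n$. By the information inequality, $L_k(\boldsymbol{\xi}_T^{(k)})\le L_k(\boldsymbol{\xi}_k^0)$, with equality only when the induced win probabilities agree on every pair that is sampled with positive probability in segment $k$. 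Since Assumption~\ref{assumption_3} makes the segment comparison graph $\mathcal{G}_k$ connected, and the constraint set $\Theta_n$ together with the spectral bound of Assumption~\ref{assumption_2} removes the remaining gauge freedom, this forces $\boldsymbol{\xi}_T^{(k)}=\boldsymbol{\xi}_k^0$. Hence $\boldsymbol{\xi}_k^*=\boldsymbol{\xi}_k^0$ is the unique maximizer of $L_k$ over $\Theta_n$, and — restricting to the compact ball on which the segmentwise maximum likelihood estimators are guaranteed to lie by Theorem~3.1 of \cite{fan2022uncertainty} — it is well separated: for each $\delta>0$ there is $\eta(\delta)>0$ with $L_k(\boldsymbol{\xi}_k^0)-L_k(\boldsymbol{\xi}_T^{(k)})\ge\eta(\delta)$ whenever $\|\boldsymbol{\xi}_T^{(k)}-\boldsymbol{\xi}_k^0\|\ge\delta$.

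\emph{Step~2 (uniform SLLN for the observed log-likelihood).} By Lemma~\ref{lemma_5}, $\frac1T\widetilde{L}_T^{(k)}(\boldsymbol{\xi}_T^{(k)},\lambda_l,\lambda_u;\mathbf{x}_k)$ converges almost surely, uniformly over $\boldsymbol{\xi}_T^{(k)}\in\Theta_n$ and over $\{\,-h_n<\lambda_l<\lambda_u<1+k_n,\ \lambda_u-\lambda_l>\epsilon_\lambda\,\}$, to the nonrandom limit of Lemma~\ref{lemma_4}, say $g(\boldsymbol{\xi}_T^{(k)},\lambda_l,\lambda_u)$; being built from $L_k$, the function $g$ is continuous in all its arguments and, for each admissible $(\lambda_l,\lambda_u)$, is maximized over $\Theta_n$ uniquely at $\boldsymbol{\xi}_k^0$ by Step~1. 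By Lemma~\ref{lemma_1} the same convergence holds with $L_T^{(k)}$ in place of $\widetilde{L}_T^{(k)}$, the discrepancy being $o(1)$ uniformly in $\boldsymbol{\xi}_T^{(k)}$ and in $(\lambda_l,\lambda_u)$.

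\emph{Step~3 (argmax consistency).} To prove (\ref{cor_3_proof_2}), suppose it failed on a set of positive probability; then along some $T_m\uparrow\infty$ and admissible $(\lambda_l^{(m)},\lambda_u^{(m)})$ one would have $\|\hat{\boldsymbol{\xi}}_{T_m}^{(k)}(\lambda_l^{(m)},\lambda_u^{(m)})-\boldsymbol{\xi}_k^0\|\ge\delta$ for some $\delta>0$. Passing to a further subsequence by compactness, $(\lambda_l^{(m)},\lambda_u^{(m)})\to(\bar\lambda_l,\bar\lambda_u)$ with $\bar\lambda_u-\bar\lambda_l\ge\epsilon_\lambda$ and $\hat{\boldsymbol{\xi}}_{T_m}^{(k)}(\lambda_l^{(m)},\lambda_u^{(m)})\to\boldsymbol{\xi}^\dagger$ with $\|\boldsymbol{\xi}^\dagger-\boldsymbol{\xi}_k^0\|\ge\delta$. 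Letting $m\to\infty$ in the maximality inequality $\frac1{T_m}\widetilde{L}_{T_m}^{(k)}(\hat{\boldsymbol{\xi}}_{T_m}^{(k)},\lambda_l^{(m)},\lambda_u^{(m)};\mathbf{x}_k)\ge\frac1{T_m}\widetilde{L}_{T_m}^{(k)}(\boldsymbol{\xi}_k^0,\lambda_l^{(m)},\lambda_u^{(m)};\mathbf{x}_k)$ and invoking the uniform convergence and continuity of Step~2 yields $g(\boldsymbol{\xi}^\dagger,\bar\lambda_l,\bar\lambda_u)\ge g(\boldsymbol{\xi}_k^0,\bar\lambda_l,\bar\lambda_u)$, hence $\boldsymbol{\xi}^\dagger=\boldsymbol{\xi}_k^0$ by uniqueness of the maximizer of $g(\cdot,\bar\lambda_l,\bar\lambda_u)$ — a contradiction, so (\ref{cor_3_proof_2}) holds. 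Then (\ref{cor_3_proof_1}) follows: by Lemma~\ref{lemma_1} and the definition of $\hat{\boldsymbol{\xi}}_T^{(k,\lambda_l,\lambda_u)}$,
\[
\tfrac1T L_T^{(k)}\!\bigl(\hat{\boldsymbol{\xi}}_T^{(k,\lambda_l,\lambda_u)},\lambda_l,\lambda_u;\mathbf{x}_k\bigr)=\max_{\boldsymbol{\xi}_T^{(k)}\in\Theta_n}\tfrac1T\widetilde{L}_T^{(k)}\!\bigl(\boldsymbol{\xi}_T^{(k)},\lambda_l,\lambda_u;\mathbf{x}_k\bigr)+o(1),
\]
and since $\bigl|\max_\xi f_m(\xi)-\max_\xi f(\xi)\bigr|\le\sup_\xi|f_m(\xi)-f(\xi)|$, Step~2 shows this maximum converges, uniformly in $(\lambda_l,\lambda_u)$, to $\max_{\boldsymbol{\xi}_T^{(k)}\in\Theta_n} g(\boldsymbol{\xi}_T^{(k)},\lambda_l,\lambda_u)$, which (attained at $\boldsymbol{\xi}_k^*$ by Step~1) equals the right-hand side of (\ref{cor_3_proof_1}).

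\emph{Expected main obstacle.} The crux is making Step~1 uniform and quantitative on a compact region: one must first pin down the fixed compact parameter set on which the segmentwise estimators live (from Theorem~3.1 of \cite{fan2022uncertainty}) and then convert the per-segment identifiability — connectivity of $\mathcal{G}_k$ (Assumption~\ref{assumption_3}), the projection constraint defining $\Theta_n$, and the eigenvalue bound (Assumption~\ref{assumption_2}) — into a strict, well-separated maximum of $L_k$ whose separation $\eta(\delta)$ does not degrade as $(\lambda_l,\lambda_u)$ varies. Once that is in place, Steps~2 and~3 are routine consequences of Lemmas~\ref{lemma_1}--\ref{lemma_5}.
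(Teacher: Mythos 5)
Your proposal is correct and follows essentially the same route as the paper's proof: the maximality inequality for $\hat{\boldsymbol{\xi}}_T^{(k,\lambda_l,\lambda_u)}$, the uniform almost-sure convergence from Lemmas~\ref{lemma_1}--\ref{lemma_5}, and identifiability of the population maximizer to pass from convergence of $L_k$-values to convergence of the parameters (you prove (\ref{cor_3_proof_2}) first and deduce (\ref{cor_3_proof_1}), while the paper does the reverse, but the ingredients are identical). Your Step~1, making explicit that $\boldsymbol{\xi}_k^*=\boldsymbol{\xi}_k^0$ and that the maximum is well separated on the compact set where the estimators live, fills in a detail the paper dismisses with ``due to the identifiability of MLE for the CARE model.''
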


\begin{proof}
By the definition of ${\boldsymbol{\hat{\xi}}}_T^{(k,\lambda_l,\lambda_u)}$, we have $$\Tilde{L}_T^{(k)}\left({\boldsymbol{\hat{\xi}}}_T^{(k,\lambda_l,\lambda_u)}, \lambda_l, \lambda_u ; \mathbf{x}_k\right) \geq \Tilde{L}_T^{(k)}\left({\boldsymbol{\xi}}_k^* , \lambda_l, \lambda_u ; \mathbf{x}_k\right)$$ for all $\lambda_l,\lambda_u,T$. Combined with Lemma \ref{lemma_1} and Lemma \ref{lemma_3}, we have 
\begin{equation*}
\begin{aligned}
\left(\lambda_l-\lambda_u\right) &\left\{L_k\left({\boldsymbol{\xi}}_k^*\right) -  L_k\left(\hat{{\boldsymbol{\xi}}}_T^{(k,\lambda_l,\lambda_u)}\right)\right\} \\
\leq &\sup _{\underline{\lambda_d}, \bar{\lambda}_u}\left\{\left(\lambda_u-\lambda_l\right) L_k\left({\boldsymbol{\xi}}_k^*\right)-\frac{1}{T} \tilde{L}_T^{(k,\lambda_l,\lambda_u)}\left({\boldsymbol{\xi}}_k^*, \lambda_l, \lambda_u ; \mathbf{x}_{\mathbf{k}}\right)\right. \\
& \left.+ \frac{1}{T}\Tilde{L}_T^{(k,\lambda_l,\lambda_u)}\left(\hat{{\boldsymbol{\xi}}}_T^{(k,\lambda_l,\lambda_u)}, \lambda_l, \lambda_u ; \mathbf{x}_k\right)-\left(\lambda_u-\lambda_l\right) L_k\left(\hat{{\boldsymbol{\xi}}}_T^{(k,\lambda_l,\lambda_u)}\right)\right\} \\
= & \sup _{\underline{\lambda_d}, \bar{\lambda}_u}\left\{\biggl(\lambda_u-\lambda_l\biggr) L_k\biggl({\boldsymbol{\xi}}_k^*\biggr)-\frac{1}{T} L_T^{(k,\lambda_l,\lambda_u)}\biggl({\boldsymbol{\xi}}_k^*, \lambda_l, \lambda_u ; \mathbf{x}_{\mathbf{k}}\biggr)\right. \\
& + \left. \frac{1}{T}L_T^{(k,\lambda_l,\lambda_u)}\biggl(\hat{{\boldsymbol{\xi}}}_T^{(k,\lambda_l,\lambda_u)}, \lambda_l, \lambda_u ; \mathbf{x}_k\biggr) -\biggl(\lambda_u-\lambda_l\biggr) L_k\biggl(\hat{{\boldsymbol{\xi}}}_T^{(k,\lambda_l,\lambda_u)}\biggr)\right\} + o(1) \\
\leq & \ 2 \sup _{\underline{\lambda_d}, \bar{\lambda}_u}\sup _{\boldsymbol{\xi}_T^{(k)}\in\Theta_{n}}\left| \frac{1}{T}\Tilde{L}_T^{(k)}\left(\boldsymbol{\xi}_T^{(k)}, \lambda_l, \lambda_u ; \mathbf{x}_k\right)-\left(\lambda_u-\lambda_l\right) L_k\left(\boldsymbol{\xi}_T^{(k)}\right) \right| + o(1) \stackrel{a.s.} \rightarrow 0.
\end{aligned}
\end{equation*}
And since $L_k({\boldsymbol{\xi}}_k^*)$ is the maximum value for all ${\boldsymbol{\xi}}$, it follows that 
\begin{equation*}
    \left|L_k(\hat{{\boldsymbol{\xi}}}_T^{(k,\lambda_l,\lambda_u)}) - L_k({\boldsymbol{\xi}}_k^*) \right| \stackrel{a.s.} \rightarrow 0.
\end{equation*}
Thus, using Lemma \ref{lemma_5}, (\ref{cor_3_proof_1}) is proved. Due to the identifiability of MLE for the CARE model, (\ref{cor_3_proof_2}) is also proved.
\end{proof}

\begin{lemma}
\label{lemma_7}
Let $\boldsymbol{y} = \{y_t;t=1,\cdots, T\}$ be the observations from a PS-CARE model specified by the vector $(K_0, \boldsymbol{\lambda^0},\boldsymbol{\xi^0})$. Assume the number of change points $K_0$ is known. The estimator $(\boldsymbol{\hat{\lambda}}_T, \boldsymbol{\hat{\xi}_T})$is defined by
\begin{equation*}
    \{\boldsymbol{\hat{\lambda}}_T, \boldsymbol{\hat{\xi}_T}\} = \argmin_{\boldsymbol{\xi_T}\in\Theta_n, \boldsymbol{\lambda_T}\in A_{\epsilon}^{K_0}}\frac{2}{T}MDL(K_0, \boldsymbol{\lambda_T},\boldsymbol{\xi_T}),
\end{equation*}
where $A_{\epsilon}^{K_0}$ is defined in (\ref{lambda_assump}).  Under Assumptions~\ref{assumption_1} to~\ref{assumption_3}, for sufficiently large $T$ we have
\begin{equation*}
    \quad \boldsymbol{\hat{\lambda}}_T \stackrel{P}{\longrightarrow} \boldsymbol{\lambda}^o.
\end{equation*}
\end{lemma}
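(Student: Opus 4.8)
\textbf{Proof proposal for Lemma~\ref{lemma_7}.} The plan is to treat this as a standard argmax (M-estimation) consistency problem: show that the normalized MDL objective converges almost surely and uniformly over the compact set $A^{K_0}_{\epsilon_\lambda}$ to a deterministic limit, that this limit has a unique, well-separated minimizer at $\boldsymbol{\lambda}^0$, and then conclude by the usual argmax argument. Since $K_0$ is fixed, the penalty part of $\frac{2}{T}\mathrm{MDL}(K_0,\boldsymbol{\lambda},\boldsymbol{\xi})$, namely $\frac{2}{T}\big(\log(K_0+1)+(K_0+1)\log T+\sum_{k}\frac{n+d-1}{2}\log n_k\big)$, is $o(1)$ uniformly over $\boldsymbol{\lambda}\in A^{K_0}_{\epsilon_\lambda}$ (each $n_k\ge\epsilon_\lambda T$), hence negligible. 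Profiling out $\boldsymbol{\xi}$ segment by segment, it therefore suffices to study
\begin{equation*}
    \mathcal{L}_T(\boldsymbol{\lambda}):=\frac{1}{T}\sum_{k=1}^{K_0+1}\tilde{L}_T^{(k)}\!\left(\hat{\boldsymbol{\xi}}_T^{(k)}(\lambda_{k-1},\lambda_k),\lambda_{k-1},\lambda_k;\mathbf{x}_k\right),
\end{equation*}
where $\hat{\boldsymbol{\xi}}_T^{(k)}(\lambda_{k-1},\lambda_k)$ is the segment MLE from Lemma~\ref{lemma_6}, and $\hat{\boldsymbol{\lambda}}_T$ is, up to $o_{a.s.}(1)$, the maximizer of $\mathcal{L}_T$ over $A^{K_0}_{\epsilon_\lambda}$.

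Next I would identify the deterministic limit of $\mathcal{L}_T(\boldsymbol{\lambda})$. For a candidate segment $(\lambda_{k-1},\lambda_k)$, split it at whatever true breakpoints it straddles and apply Lemma~\ref{lemma_5} (together with Lemma~\ref{lemma_1} to pass from $\tilde L$ to $L$) piece by piece; combined with the fact that the segment MLE over such a union of true pieces converges to the maximizer of the corresponding mixture of population log-likelihoods — this is the argument of Lemma~\ref{lemma_6} extended from a single true segment to a finite union — one gets, uniformly in $\boldsymbol{\lambda}\in A^{K_0}_{\epsilon_\lambda}$ and almost surely,
\begin{equation*}
    \mathcal{L}_T(\boldsymbol{\lambda})\longrightarrow R(\boldsymbol{\lambda}):=\sum_{k=1}^{K_0+1}\ \sup_{\boldsymbol{\xi}\in\Theta_n}\ \sum_{j:\,I_j^0\cap(\lambda_{k-1},\lambda_k)\neq\emptyset}\big|I_j^0\cap(\lambda_{k-1},\lambda_k)\big|\,L_j(\boldsymbol{\xi}),
\end{equation*}
where $I_j^0=(\lambda_{j-1}^0,\lambda_j^0)$ and $L_j$ is the population log-likelihood of the $j$-th true segment from Lemma~\ref{lemma_3}. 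Continuity of $R$ on the compact set $A^{K_0}_{\epsilon_\lambda}$ follows from continuity of the overlap lengths in $\boldsymbol{\lambda}$ and boundedness of the integrands.

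The crux is to show $R$ is uniquely maximized at $\boldsymbol{\lambda}^0$. The key population inequality is that for every true segment $j$ and every $\boldsymbol{\xi}\in\Theta_n$, $L_j(\boldsymbol{\xi})\le L_j(\boldsymbol{\xi}_j^0)$ with equality iff $\boldsymbol{\xi}=\boldsymbol{\xi}_j^0$: indeed $L_j(\boldsymbol{\xi}_j^0)-L_j(\boldsymbol{\xi})$ is an expected Kullback--Leibler divergence between Bernoulli comparison laws, which is nonnegative and vanishes only when the two win-probability vectors agree, and identifiability of the CARE parametrization over $\Theta_n$ (Assumption~\ref{assumption_2}) then forces $\boldsymbol{\xi}=\boldsymbol{\xi}_j^0$. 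Since consecutive true segments satisfy $\boldsymbol{\xi}_j^0\neq\boldsymbol{\xi}_{j+1}^0$, any $\boldsymbol{\lambda}\neq\boldsymbol{\lambda}^0$ in $A^{K_0}_{\epsilon_\lambda}$ (which contains $\boldsymbol{\lambda}^0$ in its interior because $\epsilon_\lambda\ll\min_k(\lambda_k^0-\lambda_{k-1}^0)$) has some candidate segment whose interior strictly straddles a true breakpoint, so no single $\boldsymbol{\xi}$ can be simultaneously optimal on both true pieces it covers; hence that term is strictly below its value at $\boldsymbol{\lambda}^0$ and $R(\boldsymbol{\lambda})<R(\boldsymbol{\lambda}^0)$. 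By continuity and compactness, $\sup_{\|\boldsymbol{\lambda}-\boldsymbol{\lambda}^0\|\ge\delta}R(\boldsymbol{\lambda})<R(\boldsymbol{\lambda}^0)$ for every $\delta>0$. Combining the uniform a.s. convergence of $\mathcal{L}_T$, the vanishing penalty, and this well-separated unique maximum yields $\hat{\boldsymbol{\lambda}}_T\to\boldsymbol{\lambda}^0$ almost surely, hence in probability. I expect the main obstacle to be precisely these two ingredients — the straddling-segment MLE limit (extending Lemma~\ref{lemma_6} to unions of true segments, uniformly in $\boldsymbol{\lambda}$) and a quantitative, $\boldsymbol{\lambda}$-uniform version of the strict KL gap — rather than the bookkeeping around the penalty term, which is routine.
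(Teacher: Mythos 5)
Your proposal is correct and follows essentially the same route as the paper: both arguments reduce to showing that the penalty is uniformly negligible, that the per-segment profiled likelihoods converge uniformly via Lemmas~\ref{lemma_1} and \ref{lemma_4}--\ref{lemma_6}, and that any candidate segment straddling a true change point incurs a strict population-level deficit because one parameter vector cannot maximize the population log-likelihoods of two distinct true regimes. The only difference is packaging — the paper runs the standard argmax argument as a contradiction along a convergent subsequence of $\hat{\boldsymbol{\lambda}}_T$ and splits the straddling segment at the true breakpoints as in its displayed decomposition, whereas you make the limit criterion $R(\boldsymbol{\lambda})$ and its well-separated maximizer explicit; these are equivalent formulations of the same proof.
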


\begin{proof}
Let $B$ be the probability one set in which Lemma \ref{lemma_5} and Lemma \ref{lemma_6} hold. And we will show that $\forall \omega\in B$, we have $\boldsymbol{\hat{\lambda}}_T \rightarrow \boldsymbol{\lambda}^0$ and $\boldsymbol{\hat{\xi}}_T\rightarrow\boldsymbol{\xi}_0$. We will prove this by contradiction. Here we assume $\boldsymbol{\hat{\lambda}}_T \rightarrow \boldsymbol{\lambda}^* \neq \boldsymbol{\lambda}^0$ along a subsequence $\{T_k\}$. Also, we further assume $\boldsymbol{\hat{\xi}}_T\rightarrow\boldsymbol{\xi}^*\neq\boldsymbol{\xi}_0 $. To simplify the future notation, we replace $T_k$ with $T$. For sufficiently large $T$, we have
\begin{equation*}
    \frac{2}{T}MDL(K_0, \boldsymbol{\lambda_T},\boldsymbol{\xi_T}) = c_T - \frac{1}{T}\sum_{k=1}^{K+1}L_T^{(k)}\left(\boldsymbol{\hat{\xi}}_T^{(k)}, \hat{\lambda}_{k-1}, \hat{\lambda}_{k} ; \mathbf{y}\right),
\end{equation*}
where $c_T$ is of order $O(\log(T)/T)$. Here we simplify the notation of $\boldsymbol{{\hat{\xi}}}_T^{(k)}(\hat{\lambda}_l,\hat{\lambda}_u)$ to $\boldsymbol{\hat{\xi}}_T^{(k)}$ when there are no misunderstandings.

For each estimated interval, its limiting $I_{k}^*(\lambda_{k-1}^*, \lambda_k^*),k=1,\cdots,K+1$, there are two possible cases. The first case is when $I_k^*$ is totally contained in one true interval $(\lambda_{i-1}^0, \lambda_{i}^0)$. The second case is when $I_k^*$ covers $m+2(m\geq 0)$ true intervals $(\lambda_{i-1}^0, \lambda_{i}^0),\cdots,(\lambda_{i+m}^0, \lambda_{i+m+1}^0)$. We consider the two cases individually.

\textit{Case 1:} If $\lambda_{i-1}^0\leq \lambda_{k-1}^*\leq \lambda_{k}^*\leq\lambda_i^0$, in particular, we only consider the inequality case. Then if $\lambda_{k}^*=\lambda_{i}^0$ or $\lambda_{k-1}^*=\lambda_{i-1}^0$, as $\hat{\lambda}_{k-1}\rightarrow\lambda_{i-1}^0$ and $\hat{\lambda}_k\rightarrow \lambda_i^0$, the estimated segment can only include a decreasing proportion of observations from the adkacent segments. Then $\max(\hat{\lambda}_k-\lambda_i^0,0)$ and $\max(\lambda_{i-1}^0-\hat{\lambda}_{k-1},0)$ play the role of $h_n$ and $k_n$ in Lemma \ref{lemma_5}. So we have from Lemma \ref{lemma_5} that 
\begin{equation*}
    \frac{1}{T}L_T^{(k)}\left(\boldsymbol{{\hat{\xi}}}_T^{(k)}, \hat{\lambda}_{k-1}, \hat{\lambda}_{k} ; \mathbf{y}\right) \stackrel{a.s.} \rightarrow (\lambda_k^*-\lambda_{k-1}^*)L_i\left({\boldsymbol{\xi}}_i^0 \right).
\end{equation*}

\textit{Case 2:} If $\lambda_{i-1}^o \leq \lambda_{k-1}^*<\lambda_i^o<\ldots<\lambda_{i+k}^o<\lambda_k^* \leq \lambda_{i+m+1}^o$ for some $m\geq 0$. Then, for sufficiently large $T$, the estimated stationary process is thus non-stationary so that we can partition the likelihood by the true segment change point as below:
\begin{multline}
\label{lemma_7_proof_1}
\frac{1}{T} L_T^{(k)}\left(\boldsymbol{{\hat{\xi}}}_T^{(k)}, \hat{\lambda}_{k-1}, \hat{\lambda}_{k} ; \mathbf{y}\right)
= \frac{1}{T} L_T^{(k)}\left(\boldsymbol{{\hat{\xi}}}_T^{(k)}, \hat{\lambda}_{k-1}, \lambda_{i}^0 ; \mathbf{y}\right)+ 
 \\ \frac{1}{T} \sum_{l=i}^{i+m-1} L_T^{(k)}\left(\boldsymbol{{\hat{\xi}}}_T^{(k)}, \lambda_{l}^0, \lambda_{l+1}^0 ; \mathbf{y}\right) +\frac{1}{T} L_T^{(k)}\left(\boldsymbol{{\hat{\xi}}}_T^{(k)}, \lambda_{i+k}^0, \hat{\lambda}_{k} ; \mathbf{y}\right).
\end{multline}

Each of the likelihood functions in (\ref{lemma_7_proof_1}) involves observations from one of the stationary segments. From Lemma \ref{lemma_4} and $L_i\left({\boldsymbol{\xi}}_l^0 \right)\geq L_i\left({\boldsymbol{\xi}}_k^* \right)$ for all $l=i,i+1/cdots,i+m+1$, we have 
\begin{align*}
\lim _{T \rightarrow \infty} \frac{1}{T} L_T^{(k)}\left(\boldsymbol{{\hat{\xi}}}_T^{(k)},\hat{\lambda}_{k-1},  \lambda_{i}^0; \mathbf{y}\right) &\leq\left(\lambda_i^o-\lambda_{k-1}^*\right) L_i\left({\boldsymbol{\xi}}_i^0 \right), \\
\lim _{T \rightarrow \infty} \frac{1}{T} L_T^{(k)}\left({\boldsymbol{\xi}}_T^{*},\hat{\lambda}_{l}^0,  \lambda_{l+1}^0; \mathbf{y}\right) &\leq\left(\lambda_{l+1}^o-\lambda_{l}^0\right) L_{l+1}\left({\boldsymbol{\xi}}_{l+1}^0 \right), \\ 
\lim _{T \rightarrow \infty} \frac{1}{T} L_T^{(k)}\left(\hat{{\boldsymbol{\xi}}}_T^{(k)},\hat{\lambda}_{i+m}^0,  \hat{\lambda}_{k}; \mathbf{y}\right) &\leq\left(\lambda_k^*-\lambda_{i+m}^0\right) L_{i+m+1}\left({\boldsymbol{\xi}}_{i+m+1}^0 \right).
\end{align*}

Note that the strict inequalities hold for at least one of the above equations since ${\boldsymbol{\xi}}_{k}^*$ cannot be correctly specified for all the different segments. Thus we have 
\begin{equation}
\begin{aligned}
\label{lemma_7_proof_2}
    &\lim_{T\rightarrow\infty}\frac{1}{T} L_T^{(k)}\left(\hat{{\boldsymbol{\xi}}}_T^{(k)}, \hat{\lambda}_{k-1}, \hat{\lambda}_{k} ; \mathbf{y}\right) \\
    &< \left(\lambda_i^o-\lambda_{k-1}^*\right) L_i\left({\boldsymbol{\xi}}_i^0 \right) + \sum_{l=i}^{i+m-1}\left(\lambda_{l+1}^o-\lambda_{l}^0\right) L_{l}\left({\boldsymbol{\xi}}_{l}^0 \right) + \left(\lambda_k^*-\lambda_{i+m}^0\right) L_{i+m+1}\left({\boldsymbol{\xi}}_{i+m+1}^0 \right).
\end{aligned}
\end{equation}

Now, as the number of estimated segments is the same as the true number of segments and $\lambda^* \neq \lambda^0$, there is at least one segment in which case 2 applies. Thus, for $T$ large enough, we have 
\begin{equation*}
\begin{aligned}
&\frac{2}{T}MDL(K_0, \boldsymbol{\lambda_T},\boldsymbol{\xi_T}) \\
&> \frac{c_T}{T} - \sum_{i=1}^{K+1}(\lambda_i^0-\lambda_{i-1}^0)L_{i}\left({\boldsymbol{\xi}}_{i}^0 \right) = \frac{2}{T}MDL(K_0,\boldsymbol{\lambda}^0,{\boldsymbol{\xi}}_{i}^0) \geq \frac{2}{T}MDL(K_0, \boldsymbol{\lambda_T},\boldsymbol{\xi_T}),
\end{aligned}  
\end{equation*}
which is a contradiction. Hence $\hat{\boldsymbol{\lambda}}_n\neq\boldsymbol{\lambda}^0$ for all $\omega\in B$. Thus, the lemma is proved.
\end{proof}

\begin{lemma} 
\label{lemma_8}
Under the conditions of Lemma \ref{lemma_7}, if the number of change points is unknown and is estimated from the data using (\ref{MDL_estimate}), then 
\begin{enumerate}[A.]
    \item The number of change points cannot be underestimated, which means that $\hat{K}\leq K_0$ for $T$ is large enough almost surely.
    \item When $\hat{K}>K_0$, $\boldsymbol{\lambda}^0$ must be a subset of the limit points of $\boldsymbol{\hat{\lambda}}_T$, which means for any given $\omega\in B$, $\omega>0$ and $\lambda_k^0\in\boldsymbol{\lambda}^0$, there exists a $\hat{\lambda}_i\in\boldsymbol{\hat{\lambda}}_T$ such that $|\lambda_k^0-\hat{\lambda}_i|<\epsilon$ for sufficiently large $T$. 
\end{enumerate} 
\end{lemma}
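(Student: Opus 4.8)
The plan is to prove both parts by contradiction, driven by the mechanism underlying Lemma~\ref{lemma_7}: once an estimated segment straddles a true change point with a non-vanishing fraction of the observations on each side, its normalised log-likelihood is asymptotically \emph{strictly} below the weighted sum of the population log-likelihoods of the enclosed true sub-segments evaluated at their own true parameters, while the penalty contributes only $c_T=O(\log(T)/T)$ and is asymptotically negligible at the scale $\tfrac{2}{T}\mathrm{MDL}$. Throughout I work on the probability-one set $B$ of Lemmas~\ref{lemma_5}--\ref{lemma_6}; since $\hat K\le M$ and each $A^{K}_{\epsilon_\lambda}$ is compact, to prove an a.s. convergence statement it suffices to fix an arbitrary subsequence of $T$ and, passing to a further subsequence (still written $T$), assume $\hat K\equiv K_1$ is constant and $\boldsymbol{\hat\lambda}_T\to\boldsymbol{\lambda}^*=(\lambda_1^*,\dots,\lambda_{K_1}^*)\in A^{K_1}_{\epsilon_\lambda}$.

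First I would record the per-interval limits. Exactly as in Lemma~\ref{lemma_7}, each limiting interval $I_k^*=(\lambda_{k-1}^*,\lambda_k^*)$ is either contained in one true interval (Case~1) or covers $m+2$, $m\ge0$, true intervals (Case~2), and the segmentwise maximum likelihood estimator $\boldsymbol{\hat\xi}_T^{(k)}$ converges: to the true parameter of the enclosing segment in Case~1 by Lemma~\ref{lemma_6}, and to the maximiser of the pooled population log-likelihood in Case~2 by the uniform law of large numbers of Lemmas~\ref{lemma_3}--\ref{lemma_4}. By Lemmas~\ref{lemma_4}--\ref{lemma_5}, a Case~1 interval satisfies $\tfrac1T L_T^{(k)}(\boldsymbol{\hat\xi}_T^{(k)},\hat\lambda_{k-1},\hat\lambda_k;\mathbf y)\to(\lambda_k^*-\lambda_{k-1}^*)L_i(\boldsymbol{\xi}_i^0)$, the largest value it can attain; splitting a Case~2 interval at the enclosed true change points as in~(\ref{lemma_7_proof_1}) and applying Lemma~\ref{lemma_4} term by term yields the \emph{strict} inequality~(\ref{lemma_7_proof_2}), because a single limiting parameter cannot be the population maximiser $\boldsymbol{\xi}_l^0$ of $L_l$ on each of the (at least two) distinct true sub-segments at once --- this is where $\boldsymbol{\xi}_l^0\neq\boldsymbol{\xi}_{l+1}^0$ and the identifiability of the CARE model over $\Theta_n$ enter. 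Writing $\tfrac2T\mathrm{MDL}(K_1,\boldsymbol{\hat\lambda}_T,\boldsymbol{\hat\xi}_T)=c_T-\tfrac1T\sum_{k=1}^{K_1+1}L_T^{(k)}(\boldsymbol{\hat\xi}_T^{(k)},\hat\lambda_{k-1},\hat\lambda_k;\mathbf y)$ with $c_T=O(\log(T)/T)$ and summing the per-interval limits gives
\[
\liminf_{T\to\infty}\tfrac2T\mathrm{MDL}(K_1,\boldsymbol{\hat\lambda}_T,\boldsymbol{\hat\xi}_T)\ \ge\ -\sum_{i=1}^{K_0+1}(\lambda_i^0-\lambda_{i-1}^0)\,L_i(\boldsymbol{\xi}_i^0)=\lim_{T\to\infty}\tfrac2T\mathrm{MDL}(K_0,\boldsymbol{\lambda}^0,\boldsymbol{\xi}^0),
\]
with the left inequality \emph{strict} as soon as at least one limiting interval is of Case~2.

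For part~A I would argue as follows. If $\hat K<K_0$ along a subsequence then $K_1<K_0$, and since the $K_0$ true change points are distinct they cannot all coincide with $\lambda_1^*,\dots,\lambda_{K_1}^*$; hence some $\lambda_j^0$ lies strictly inside a limiting interval $I_k^*$, which is therefore of Case~2, and the display above is strict. But $(K_0,\boldsymbol{\lambda}^0,\boldsymbol{\xi}^0)$ is feasible in~(\ref{MDL_estimate}), so $\tfrac2T\mathrm{MDL}(\hat K,\boldsymbol{\hat\lambda}_T,\boldsymbol{\hat\xi}_T)\le\tfrac2T\mathrm{MDL}(K_0,\boldsymbol{\lambda}^0,\boldsymbol{\xi}^0)$, contradicting the strict inequality in the limit; hence the number of change points is not underestimated, i.e. $\hat K\ge K_0$ for all large $T$, a.s. For part~B, fix $\omega\in B$, $\epsilon>0$ (WLOG $\epsilon<\epsilon_\lambda$) and a true point $\lambda_{k^*}^0$, and suppose $|\lambda_{k^*}^0-\hat\lambda_i|\ge\epsilon$ for all $i$ along a subsequence; on the convergent subsequence this gives $|\lambda_{k^*}^0-\lambda_i^*|\ge\epsilon$ for all $i$, so $\lambda_{k^*}^0$ lies in the interior of a limiting interval $I_j^*$ at distance $\ge\epsilon$ from both endpoints. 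Then $I_j^*$ is of Case~2 with each enclosed true sub-segment carrying a fraction $\ge\epsilon$ of the observations, so the display is strict, and exactly as in part~A this contradicts $\tfrac2T\mathrm{MDL}(\hat K,\boldsymbol{\hat\lambda}_T,\boldsymbol{\hat\xi}_T)\le\tfrac2T\mathrm{MDL}(K_0,\boldsymbol{\lambda}^0,\boldsymbol{\xi}^0)$ (valid since $K_0\le M$ is admissible in the minimisation). Hence every true $\lambda_k^0$ is a limit point of $\boldsymbol{\hat\lambda}_T$.

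The main obstacle is the Case~2 strict inequality with \emph{data-dependent} endpoints: one must be sure that the uniform a.s. convergences of Lemmas~\ref{lemma_3}--\ref{lemma_6}, and in particular the slight-overshoot extension of Lemma~\ref{lemma_5} that lets an estimated interval absorb a shrinking fraction of an adjacent true segment, continue to apply when $\hat\lambda_{k-1},\hat\lambda_k$ merely converge to $\lambda_{k-1}^*,\lambda_k^*$ rather than being fixed, and that the limiting pooled maximiser on a straddling interval is strictly sub-optimal on at least one true sub-segment. The remaining steps --- the bound $c_T=O(\log(T)/T)$, the pigeonhole count in part~A, and the buffer estimate in part~B --- are routine.
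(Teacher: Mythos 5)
Your proof is correct and follows essentially the same route as the paper: the paper's own proof of this lemma is a two-sentence remark observing that the Case~2 strict inequality~(\ref{lemma_7_proof_2}) from Lemma~\ref{lemma_7} yields the contradiction whenever an estimated limiting interval straddles a true change point, regardless of whether $K_0$ is known. You have simply written out in full the subsequence, pigeonhole, and buffer details that the paper leaves implicit (and you correctly read part~A as $\hat K\ge K_0$ despite the typo in the statement).
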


\begin{proof}
Notice that in the proof of Lemma \ref{lemma_7}, the assumption of known $K_0$ is only used to guarantee that case 2 is applied at least once. No matter how many segments $\boldsymbol{\lambda}^*$ contains, contradiction (\ref{lemma_7_proof_2}) arises whenever case 2 applies. So this lemma is proved.
\end{proof}

\begin{lemma}
\label{lemma_9}
Denote $\boldsymbol{\lambda}^0 = (\lambda_1^0,\lambda_2^0,\cdots,\lambda_{K_0}^0)$ as the true change points. Then with $(\hat{K},\boldsymbol{\hat{\lambda}_T},\boldsymbol{\hat{\xi}_T})$ defined in (\ref{MDL_estimate}), for each $k=1,2,\cdots,K_0$, there exists a $\hat{\lambda}_{i_k}\in\hat{\boldsymbol{\lambda}}_T, 1\leq i_k\leq \hat{K}$ such that for any $\delta > 0$
\begin{equation*}
    \left|\lambda_k^0-\hat{\lambda}_{i_k}\right| = O_p(T^{\delta-1}).
\end{equation*}
\end{lemma}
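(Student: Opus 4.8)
\noindent\emph{Proof strategy.} The plan is to upgrade the consistency $\hat{\boldsymbol\lambda}_T\to\boldsymbol\lambda^0$ supplied by Lemmas~\ref{lemma_7} and~\ref{lemma_8} to the stated polynomial rate through a \emph{quantified} version of the contradiction argument behind Lemma~\ref{lemma_7}. By Lemmas~\ref{lemma_7} and~\ref{lemma_8}, for $T$ large enough every true location $\lambda_k^0$ is a limit point of $\hat{\boldsymbol\lambda}_T$; let $\hat\lambda_{i_k}$ denote the estimated change point nearest to $\lambda_k^0$. Fix $\delta\in(0,1)$, set $\delta_T=T^{\delta-1}$, and suppose the conclusion fails for some $k$: there are $\eta>0$ and a subsequence $T_m\to\infty$ along which $P\bigl(|\lambda_k^0-\hat\lambda_{i_k}|>\delta_T\bigr)\ge\eta$. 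I will contradict the optimality of the MDL estimate on this event.

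On $\{|\lambda_k^0-\hat\lambda_{i_k}|>\delta_T\}$, because $\hat\lambda_{i_k}$ is the nearest estimated change point, $\lambda_k^0$ lies strictly inside an estimated segment $(\hat\lambda_{j-1},\hat\lambda_j)$ with both endpoints at distance $>\delta_T$ from $\lambda_k^0$. Since $\delta_T\to0$ while the true gaps are fixed constants, for large $T$ the half-windows $\mathcal I_T^-:=(\lambda_k^0-\delta_T,\lambda_k^0)$ and $\mathcal I_T^+:=(\lambda_k^0,\lambda_k^0+\delta_T)$ lie inside $(\hat\lambda_{j-1},\hat\lambda_j)$ and carry $\lfloor T^{\delta}\rfloor$ observations governed by $\boldsymbol\xi_k^0$ and $\boldsymbol\xi_{k+1}^0$ respectively. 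Arguing as in Lemma~\ref{lemma_6}, along the subsequence the single-parameter segment MLE $\hat{\boldsymbol\xi}_T^{(j)}$ converges to some $\boldsymbol\xi^\star$; since $\boldsymbol\xi_k^0\ne\boldsymbol\xi_{k+1}^0$, $\boldsymbol\xi^\star$ cannot be arbitrarily close to both, so by continuity at least one of $L_k(\boldsymbol\xi_k^0)-L_k(\boldsymbol\xi^\star)$ and $L_{k+1}(\boldsymbol\xi_{k+1}^0)-L_{k+1}(\boldsymbol\xi^\star)$ is bounded below by a fixed $2c_0>0$ depending only on $\|\boldsymbol\xi_k^0-\boldsymbol\xi_{k+1}^0\|$ and the curvature guaranteed by Assumption~\ref{assumption_2}. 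Applying a finite-sample uniform law of large numbers for the bounded per-observation log-likelihoods over the fixed-dimensional set $\Theta_n$ on the corresponding half-window (length $\asymp T^{\delta}\to\infty$), the log-likelihood contributed there by $\hat{\boldsymbol\xi}_T^{(j)}$ falls below its correctly-specified value by at least $c_0 T^{\delta}$, with probability tending to $1$.

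Next I would turn this deficit into an MDL comparison. Because $(K_0,\boldsymbol\lambda^0)$ is feasible in~(\ref{MDL_estimate}) for $\epsilon_\lambda$ small and the segment-wise MLE $\check{\boldsymbol\xi}_T$ on the true partition lies in $\Theta_n$, optimality gives $\mathrm{MDL}(\hat K,\hat{\boldsymbol\lambda}_T,\hat{\boldsymbol\xi}_T)\le\mathrm{MDL}(K_0,\boldsymbol\lambda^0,\check{\boldsymbol\xi}_T)$. The right-hand side is a positive $O(\log T)$ code-length penalty plus the true-partition negative log-likelihood, the latter at most that evaluated at $\boldsymbol\xi^0$; the left-hand side is at least its own negative log-likelihood, the penalty being positive. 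Decompose the left negative log-likelihood over the estimated segments: each estimated segment other than $(\hat\lambda_{j-1},\hat\lambda_j)$ either sits inside a single true segment — and then, having $\asymp T$ observations by the $\epsilon_\lambda$-constraint, its MLE is $O_p(T^{-1/2})$-consistent, so by a Taylor expansion of the smooth log-likelihood its maximised value matches the correctly-specified one up to $O_p(1)$ — or is itself straddling and only adds further deficit; over the $O(1)$ segments this costs $O_p(1)$, while $(\hat\lambda_{j-1},\hat\lambda_j)$ supplies the $c_0T^{\delta}$ shortfall of the previous paragraph. Hence the left negative log-likelihood exceeds the true-partition one (at $\boldsymbol\xi^0$) by at least $c_0T^{\delta}-O_p(1)$, and combining the inequalities yields $c_0T^{\delta}-O_p(1)\le O(\log T)$, impossible for large $T$ since $\delta>0$. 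As this occurs with probability $\ge\eta$ along $\{T_m\}$, we reach a contradiction, so $|\lambda_k^0-\hat\lambda_{i_k}|=O_p(T^{\delta-1})$ for every $k$.

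The hard part is the quantitative deficit in the second paragraph: Lemmas~\ref{lemma_4}--\ref{lemma_6} control only windows of constant relative length, whereas the decisive window here has vanishing relative length $\asymp T^{\delta-1}$, so one must provide a genuinely finite-sample uniform concentration inequality over $\Theta_n$ (e.g.\ Hoeffding plus an $\varepsilon$-net, legitimate since the log-likelihood summands are bounded and $\Theta_n$ is fixed-dimensional) and check that the separation constant $c_0$ does not deteriorate as the window shrinks — which it does not, being pinned to the fixed quantities $\|\boldsymbol\xi_k^0-\boldsymbol\xi_{k+1}^0\|$ and $c_2$. A secondary technical point is the $O_p(1)$ accounting on the non-straddling segments, which relies on the $\sqrt{T}$-consistency of the segment-wise CARE maximum likelihood estimator from \cite{fan2022uncertainty} rather than on a cruder $o(T)$ bound that would not survive against the $T^{\delta}$ gap. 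Everything else is bookkeeping of an $O(\log T)$ penalty gap against a $T^{\delta}$ likelihood gap.
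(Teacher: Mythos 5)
Your strategy is sound in outline, but it is a genuinely different and heavier route than the paper's. The paper does not compare the MDL of $(\hat K,\hat{\boldsymbol{\lambda}}_T)$ against the true configuration $(K_0,\boldsymbol{\lambda}^0)$. Instead it makes a surgical local perturbation: it forms $\boldsymbol{\tilde{\lambda}}_T$ by replacing the single offending estimate $\hat\lambda_{i_l}$ with the true location $\lambda_l^0$, keeping every other estimated change point and every segment parameter fixed. Because the two configurations differ only in where one boundary sits, the difference $\mathrm{MDL}(\hat K,\hat{\boldsymbol{\lambda}}_T,\hat{\boldsymbol{\xi}}_T)-\mathrm{MDL}(\hat K,\boldsymbol{\tilde{\lambda}}_T,\hat{\boldsymbol{\xi}}_T)$ telescopes to a single sum, over the $\gtrsim cT^{\delta}$ observations lying between $\hat\lambda_{i_l}$ and $\lambda_l^0$, of differences $l_{i_l}(\hat{\boldsymbol{\xi}}_{i_l};\cdot)-l_{i_l+1}(\hat{\boldsymbol{\xi}}_{i_l+1};\cdot)$, plus an $O_p(1)$ penalty adjustment; positivity of order $T^{\delta}$ then follows from Lemma~\ref{lemma_6} and the ergodic theorem, contradicting optimality. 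This buys exact cancellation of everything outside the gap: there is no need for your segment-by-segment ``$O_p(1)$ accounting'' on the non-straddling segments, and no need to track the $O(\log T)$ penalty difference between $\hat K$ and $K_0$.

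Your global comparison can be made to work, but two of its steps need more than you give them. First, the claim that every estimated segment other than the straddling one contributes only $O_p(1)$, and that additional straddling segments ``only add further deficit,'' is not automatic: the realized (as opposed to expected) log-likelihood contribution of a contaminated segment evaluated at its own MLE can exceed the correctly-specified value because of in-sample overfitting, and bounding this excess by $O_p(1)$ uniformly over the location of the MLE requires the quadratic-drift-versus-$\sqrt{N}$-fluctuation argument (expectation $-cN\|\Delta\|^2$ against fluctuation $O_p(\sqrt{N}\|\Delta\|)$), not just the information inequality. Second, as you correctly flag, Lemmas~\ref{lemma_4}--\ref{lemma_6} only control windows of relative length bounded below by $\epsilon_\lambda$, so the $c_0T^{\delta}$ deficit on a window of vanishing relative length needs a separate concentration input; the paper has Lemma~\ref{lemma_10} available for exactly this kind of shrinking-window sum (and uses it in Lemma~\ref{lemma_11}), whereas your proposed Hoeffding-plus-net argument additionally needs the summands to be bounded uniformly over $\Theta_n$, which is an unbounded linear subspace, so a compactness restriction or envelope condition must be supplied. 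Neither issue is fatal, but both are avoided almost entirely by the paper's local-perturbation comparison.
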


\begin{proof}
From Lemma \ref{lemma_8} we can assume that $\hat{K}\geq K_0$ and for each $\lambda_k^0$ there exists a $\hat{\lambda}_{i_k}$ such that $\left|\lambda_k^0-\hat{\lambda}_{i_k} \right|=o(1)$ a.s., where $1<i_1<i_2<\cdots<i_m<\hat{K}$. By construction, for every $m=0,\cdots,\hat{K}-1$, we have $\left|\hat{\lambda}_{K+1}-\hat{\lambda}_K\right|>\lambda_\epsilon$, so $\hat{\lambda}_{i_k}$ is the estimated location of change-point closets to $\lambda_k^0$ for sufficiently large $T$. We only need to prove that, for any $\delta >0$, there exists a $c>0$ such that 
\begin{equation*}
    P\left(\exists l,\left|\lambda_l^o-\hat{\lambda}_{i_l}\right|>c T^{\delta-1}\right) \rightarrow 0.
\end{equation*}

Define $\boldsymbol{\Tilde{\lambda}}_T=\{\hat{\lambda}_1,\hat{\lambda}_2,\cdots,\lambda_l^0,\hat{\lambda}_{i_l+1},\cdots,\hat{\lambda}_{K_0}\}$, where $|\lambda_l^0-\hat{\lambda}_{i_l}|>cT^{\delta-1}$. 
By the definition of $(\hat{K},\boldsymbol{\hat{\lambda}_T},\boldsymbol{\hat{\xi}_T})$, it is suffice to show that 
\begin{equation*}
    P(MDL((\hat{K},\boldsymbol{\hat{\lambda}_T},\boldsymbol{\hat{\xi}_T}))<MDL((\hat{K},\boldsymbol{\Tilde{\lambda}_T},\boldsymbol{\hat{\xi}_T})), \exists l,|\lambda_l^0- \hat{\lambda}_{i_l}|>cT^{\delta-1}) \rightarrow 0.
\end{equation*}
As the number of change points is bounded, it is sufficient to show that, for each fixed $l$, we have 
\begin{equation*}
    P(MDL((\hat{K},\boldsymbol{\hat{\lambda}_T},\boldsymbol{\hat{\xi}_T}))<MDL((\hat{K},\boldsymbol{\Tilde{\lambda}_T},\boldsymbol{\hat{\xi}_T})), |\lambda_l^0- \hat{\lambda}_{i_l}|>cT^{\delta-1}) \rightarrow 0.
\end{equation*}
Given that $|\lambda_l^0-\hat{\lambda}_{i_l}|>cT^{\delta-1}$, the difference $MDL((\hat{K},\boldsymbol{\hat{\lambda}_T},\boldsymbol{\hat{\xi}_T})) -MDL((\hat{K},\boldsymbol{\Tilde{\lambda}_T},\boldsymbol{\hat{\xi}_T}))$ is either 
\begin{align*}
    \sum_{k=T_l-[T(\lambda_l^0-\hat{\lambda}_{i_l})]+1}^{T_l}\left(l_{i_l}(\boldsymbol{{\hat{\xi}}}_{i_l}; x_{k,l},\boldsymbol{z}^*_t|\boldsymbol{y_{k,l}}) - l_{i_l+1}(\boldsymbol{{\hat{\xi}}}_{i_l+1}; x_{k,l},\boldsymbol{z}^*_t|\boldsymbol{y_{k,l}}) \right) 
\end{align*}
or
\begin{align*}
    \sum_{k=1}^{[T(\hat{\lambda}_{i_l}-\lambda_l^0)]}\left(l_{i_l+1}(\boldsymbol{{\hat{\xi}}}_{i_l+1}; x_{k,l},\boldsymbol{z}^*_t|\boldsymbol{y_{k,l}}) - l_{i_l}(\boldsymbol{{\hat{\xi}}}_{i_l}; x_{k,l},\boldsymbol{z}^*_t|\boldsymbol{y_{k,l}}) \right).
\end{align*}
And from Lemma \ref{lemma_1}, we have the difference is either 
\begin{align}
\label{lemma_9_prove_1}
    \Sigma^{\prime} \left(l_{i_l}(\boldsymbol{{\hat{\xi}}}_{i_l}; x_{k,l},\boldsymbol{z}^*_t|\boldsymbol{y_{k,l}}) - l_{i_l+1}(\boldsymbol{{\hat{\xi}}}_{i_l+1}; x_{k,l},\boldsymbol{z}^*_t|\boldsymbol{y_{k,l}}) \right) + O_p(1) 
\end{align}
or
\begin{align}
\label{lemma_9_prove_2}
    \Sigma^{\prime\prime} \left(l_{i_l+1}(\boldsymbol{{\hat{\xi}}}_{i_l+1}; x_{k,l},\boldsymbol{z}^*_t|\boldsymbol{y_{k,l}}) - l_{i_l}(\boldsymbol{{\hat{\xi}}}_{i_l}; x_{k,l},\boldsymbol{z}^*_t|\boldsymbol{y_{k,l}}) \right) + O_p(1),
\end{align}
where $\Sigma^{\prime}= \sum_{k=T_l-[T(\lambda_l^0-\hat{\lambda}_{i_l})]+1}^{T_l}$ and $\Sigma^{\prime\prime}=\sum_{k=1}^{[T(\hat{\lambda}_{i_l}-\lambda_l^0)]}$. By Lemma \ref{lemma_6} and the ergodic theorem, we have for case (\ref{lemma_9_prove_2}) is positive and of order no less than $O(T^\delta)$ a.s. For (\ref{lemma_9_prove_1}), the ergodic theorem as well as the stationarity in each segment together guarantees the positive of this term and is of order $O(T^\delta)$. Therefore, both the quantities (\ref{lemma_9_prove_1}) and (\ref{lemma_9_prove_2}) are positive with probability going to 1, and hence, the lemma is proved.
\end{proof}

\begin{lemma}
\label{lemma_10}
If $\{X_t\}$ is a sequence of stationary, zero-mean strongly mixing process with geometric rate, and $E(|X_1|^{r+\epsilon})<\infty$ for some $2\leq r <\infty$ and $\epsilon>0$, then 
\begin{align*}
    \frac{1}{g(T)} \sum_{t=1}^{g(T)} X_t \stackrel{a . s}{\rightarrow} \mu \quad \mbox{and} \quad \frac{1}{g(T)} \sum_{t=n-g(T)+1}^{\top} X_t \stackrel{a . s}{\longrightarrow} \mu
\end{align*}
for any sequence ${g(T)}_{T\geq 1}$ for integers that satisfies $g(T)>cT^{2/r}$ for some $c>0$ when $T$ is sufficiently large. Moreover, 
\begin{align*}
\sum_{t=1}^{s(T)} X_t=O\left(T^{2 / r}\right) \quad \mbox{and} \quad
\sum_{t=T-s(T)+1}^{\top} X_t=O\left(T^{2 / r}\right)
\end{align*}
a.s., for any sequence $\{s(T)\}_{T\geq 1}$ satisfying $s(T)=O(T^{2/r})$.
\end{lemma}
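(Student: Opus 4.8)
The plan is to reduce the lemma, in the spirit of \cite{davis2006structural}, to two classical facts about strictly stationary, $\alpha$-mixing sequences with geometrically decaying mixing coefficients: (i) a Rosenthal-type moment bound — since $E|X_1|^{r+\epsilon}<\infty$ with $r\ge2$, there is a constant $C$ (depending only on $r$, $\epsilon$ and the mixing rate) with $E\bigl|\sum_{t=a+1}^{a+m}X_t\bigr|^{r}\le C\,m^{r/2}$ for every $a$ and every $m\ge1$, the bound being independent of $a$ by stationarity; and (ii) its Mo\'ricz/Serfling maximal counterpart $E\bigl(\max_{1\le m\le M}\bigl|\sum_{t=1}^{m}X_t\bigr|^{r}\bigr)\le C'M^{r/2}$, valid up to a factor of a power of $\log M$ when $r=2$, which is immaterial below. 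Throughout write $\mu=E X_1=0$ and $S_m=\sum_{t=1}^{m}X_t$.

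\emph{Forward statements.} Markov's inequality and (ii) along the dyadic scale give $P\bigl(\max_{2^{j}\le m<2^{j+1}}|S_m|>\eta 2^{j}\bigr)\le C'(2^{j+1})^{r/2}/(\eta^{r}2^{jr})=O(2^{-jr/2})$, which is summable in $j$; by Borel--Cantelli, $|S_m|/m\le\eta$ for all $m$ in each sufficiently large dyadic block, and letting $\eta\downarrow0$ one obtains the strong law $S_m/m\to0$ a.s. Since $g(T)>cT^{2/r}\to\infty$, this yields $g(T)^{-1}\sum_{t=1}^{g(T)}X_t\to0$ a.s. For the ``moreover'' claim, fix $\varepsilon>0$; there is an a.s.\ finite $M_0$ with $|S_m|\le\varepsilon m$ for $m\ge M_0$, so, writing $s(T)\le C_0T^{2/r}$, we get $|S_{s(T)}|\le\varepsilon C_0 T^{2/r}$ whenever $s(T)\ge M_0$ and $|S_{s(T)}|\le\max_{m<M_0}|S_m|=O(1)$ otherwise; hence $\sum_{t=1}^{s(T)}X_t=O(T^{2/r})$ a.s.

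\emph{Backward statements.} These are harder because the windows $[T-g(T)+1,T]$ are anchored at the moving endpoint $T$, so the forward strong law does not apply directly. The key step is a uniform-over-windows estimate: a.s.,
\begin{equation*}
\max_{0\le k\le N-m}\Bigl|\sum_{t=k+1}^{k+m}X_t\Bigr|=O\!\bigl((\log N)^{2/r}(N/m)^{1/r}m^{1/2}\bigr)\quad\text{uniformly over }1\le m\le N.
\end{equation*}
To prove it, partition $[1,N]$ into $\lceil N/m\rceil$ blocks of length $m$; any length-$m$ window meets at most two adjacent blocks, so its sum is bounded by twice the largest prefix-or-suffix partial sum inside a single block; then apply the within-block maximal inequality (ii) (and its time reversal) to each block, a union bound over the $\lceil N/m\rceil$ blocks, and Borel--Cantelli along the dyadic grid $N=2^{a}$, $m=2^{b}$ with $b\le a$, where a slowly growing slack $(\log N)^{2/r}$ makes the double series over $(a,b)$ converge. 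Granting this, taking $N=T$ and $m=g(T)>cT^{2/r}$ gives
\begin{equation*}
\frac{1}{g(T)}\Bigl|\sum_{t=T-g(T)+1}^{T}X_t\Bigr|=O\!\bigl((\log T)^{2/r}T^{1/r}g(T)^{-1/2-1/r}\bigr)=O\!\bigl((\log T)^{2/r}T^{-2/r^{2}}\bigr)\to0,
\end{equation*}
and taking $m=s(T)\le C_0T^{2/r}$ (using $1/2-1/r\ge0$ since $r\ge2$) gives $\bigl|\sum_{t=T-s(T)+1}^{T}X_t\bigr|=O\bigl((\log T)^{2/r}T^{2/r-2/r^{2}}\bigr)=O(T^{2/r})$. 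When $g(T)$ or $s(T)$ exceeds $T$, enlarge the ambient interval to $[\,1-g(T),T\,]$ (still of polynomial length) and argue identically; in the change-point application one always has $g(T),s(T)\le T$.

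The main obstacle is precisely the uniform-over-windows estimate: because only $r$-th moments are available, a union bound over the $\Theta(N)$ windows is essentially unavoidable — the naive estimate $P(|\sum_{t=T-g(T)+1}^{T}X_t|>\eta g(T))=O(1/T)$, summed over all integers $T$, just fails to converge — so the argument must be organised block-wise and along a dyadic grid in \emph{both} $N$ and $m$; and one must then verify that the resulting rate, after division by $g(T)$, still vanishes, the exponent $2/r$ in the hypothesis $g(T)>cT^{2/r}$ being exactly the borderline value making $T^{1/r}g(T)^{-1/2-1/r}\to0$. The remaining steps are routine applications of Markov's inequality, Borel--Cantelli and stationarity.
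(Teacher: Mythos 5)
The paper does not actually prove this lemma: its ``proof'' is a one-line citation to \cite{davis2013consistency}, from which the statement is imported verbatim. Your proposal therefore goes well beyond the paper by supplying a self-contained argument, and the route you take --- a Rosenthal-type moment bound $E|S_m|^r\le Cm^{r/2}$ for geometrically $\alpha$-mixing sequences (the $\epsilon$ of extra moment being exactly what Davydov's covariance inequality consumes), its M\'oricz--Serfling maximal upgrade, a dyadic Borel--Cantelli argument for the forward strong law, and a blocked, uniform-over-windows estimate for the backward sums --- is the standard one and is essentially the argument underlying the cited source. Your diagnosis of where the difficulty sits is also right: the naive bound $P(|\sum_{t=T-g(T)+1}^{T}X_t|>\eta g(T))=O(1/T)$ is not summable, so the anchored-at-$T$ windows genuinely require the block/maximal-inequality machinery rather than a direct appeal to $S_m/m\to 0$ (note that the telescoping trick $\frac{1}{g(T)}(S_T-S_{T-g(T)})$ combined with the LIL rate $S_T=O(\sqrt{T\log\log T})$ would fail once $r\ge 4$, so the quantitative window bound is not avoidable).

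One small piece of bookkeeping needs repair. With the slack $(\log N)^{2/r}$ your per-pair probability on the dyadic grid $N=2^a$, $m=2^b$ is of order $a^{-2}$, and summing over the $a+1$ admissible values of $b$ for each $a$ gives $\sum_a (a+1)/a^2=\infty$, so the double series does \emph{not} converge as claimed. Either enlarge the slack to $(\log N)^{3/r}$ (any polylog is harmless in the final rates $T^{-2/r^2}$ and $T^{2/r-2/r^2}$), or observe that the application only ever needs the single scale $m\asymp T^{2/r}$, so the union over $b$ can be restricted to $O(1)$ values per $a$ and the original slack suffices. With that fix the argument is complete.
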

\begin{proof}
This lemma is taken from \cite{davis2013consistency}, from which the proof is given.
\end{proof}

\begin{lemma}
\label{lemma_11} Recall $\boldsymbol{\hat{\xi}}_T^{(k)}(\lambda_l,\lambda_u) = \argmax_{{\boldsymbol{\xi}}_k\in\Theta_n} \Tilde{L}_T^{(k)}\left({\boldsymbol{\xi}}_k, \lambda_l, \lambda_u ; \mathbf{x}_k\right)$. We have 
\begin{equation*}
    \boldsymbol{{\hat{\xi}}}_T^{(k)}(\hat{\lambda}_{k-1},\hat{\lambda}_{k}) - {\boldsymbol{\xi}}_k^0 = O\left(T^{-\frac{1}{2}}\right) \text{\quad a.s.},  
\end{equation*}
where ${\boldsymbol{\xi}}_k^0$ is the true parameter vector in $k$-th segment. 
\end{lemma}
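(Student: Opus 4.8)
The plan is to compare $\boldsymbol{\hat{\xi}}_T^{(k)}(\hat{\lambda}_{k-1},\hat{\lambda}_k)$ with the segment maximum likelihood estimator evaluated at the \emph{true} endpoints, $\boldsymbol{\hat{\xi}}_T^{(k)}(\lambda_{k-1}^0,\lambda_k^0)$, through the decomposition
\begin{equation*}
\boldsymbol{\hat{\xi}}_T^{(k)}(\hat{\lambda}_{k-1},\hat{\lambda}_k) - \boldsymbol{\xi}_k^0
= \Bigl(\boldsymbol{\hat{\xi}}_T^{(k)}(\hat{\lambda}_{k-1},\hat{\lambda}_k) - \boldsymbol{\hat{\xi}}_T^{(k)}(\lambda_{k-1}^0,\lambda_k^0)\Bigr)
+ \Bigl(\boldsymbol{\hat{\xi}}_T^{(k)}(\lambda_{k-1}^0,\lambda_k^0) - \boldsymbol{\xi}_k^0\Bigr),
\end{equation*}
and then to show that the second bracket is $O(T^{-1/2})$ a.s.\ and the first is $o(T^{-1/2})$ a.s.

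For the second bracket I would appeal directly to the CARE estimation theory: $\boldsymbol{\hat{\xi}}_T^{(k)}(\lambda_{k-1}^0,\lambda_k^0)$ is the maximum likelihood estimator computed on the correctly specified $k$-th segment, whose length is of order $T$ since $\lambda_k^0-\lambda_{k-1}^0\ge\epsilon_\lambda$, so by Assumption~\ref{assumption_3} one has $L_{\min,k}\gtrsim T\log(n)/n$. Theorem~3.1 of \cite{fan2022uncertainty} — the same estimation bound invoked in the proof of Lemma~\ref{lemma_1} — then yields an $\ell_\infty$ error of order $L_{\min,k}^{-1/2}$; converting to an $\ell_2$ bound via the eigenvalue condition of Assumption~\ref{assumption_2} (with $n,d$ fixed) and noting that the projection onto $\Theta_n$ is nonexpansive, this bracket is $O(L_{\min,k}^{-1/2})=O(T^{-1/2})$ a.s.

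For the first bracket I would exploit that moving the endpoints from $(\lambda_{k-1}^0,\lambda_k^0)$ to $(\hat{\lambda}_{k-1},\hat{\lambda}_k)$ changes the set of observations entering $\tilde L_T^{(k)}$ by only $O(T^{\delta})$ time points, since by Lemma~\ref{lemma_9} (upgraded to an almost-sure bound using the a.s.\ partial-sum rates of Lemma~\ref{lemma_10} in place of its in-probability step) $|\hat{\lambda}_{k-1}-\lambda_{k-1}^0|$ and $|\hat{\lambda}_k-\lambda_k^0|$ are $O(T^{\delta-1})$ a.s.\ for arbitrarily small $\delta>0$. Each conditional log-likelihood contribution is uniformly bounded (Lemma~\ref{lemma_2}), so $\tilde L_T^{(k)}(\cdot,\hat{\lambda}_{k-1},\hat{\lambda}_k)-\tilde L_T^{(k)}(\cdot,\lambda_{k-1}^0,\lambda_k^0)$, together with its first two $\boldsymbol{\xi}$-derivatives, is $O(T^{\delta})$ uniformly on $\Theta_n$, whether the affected observations are merely dropped or are wrongly imported from an adjacent, differently parametrised segment. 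A standard M-estimation expansion then closes the gap: Taylor-expand $\tilde L_T^{(k)}(\cdot,\hat{\lambda}_{k-1},\hat{\lambda}_k)$ around $\boldsymbol{\hat{\xi}}_T^{(k)}(\lambda_{k-1}^0,\lambda_k^0)$, where the gradient of $\tilde L_T^{(k)}(\cdot,\lambda_{k-1}^0,\lambda_k^0)$ vanishes, so that the gradient of $\tilde L_T^{(k)}(\cdot,\hat{\lambda}_{k-1},\hat{\lambda}_k)$ at that point is $O(T^{\delta})$; the Hessian there is, by Assumption~\ref{assumption_2} and the uniform Hessian convergence of Lemmas~\ref{lemma_3}, \ref{lemma_4} and~\ref{lemma_6}, negative definite along $\Theta_n$ with smallest eigenvalue of exact order $T$; inverting gives a displacement of the maximiser of order $T^{\delta}/T=T^{\delta-1}$, which is $o(T^{-1/2})$ once $\delta<\tfrac{1}{2}$. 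Summing the two brackets yields $\boldsymbol{\hat{\xi}}_T^{(k)}(\hat{\lambda}_{k-1},\hat{\lambda}_k) - \boldsymbol{\xi}_k^0 = O(T^{-1/2})$ a.s.

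The main obstacle I anticipate is that last step: establishing, \emph{uniformly over the admissible range of} $(\hat{\lambda}_{k-1},\hat{\lambda}_k)$ near $(\lambda_{k-1}^0,\lambda_k^0)$, a strong-concavity lower bound of exact order $T$ for the observed segment log-likelihood in a neighbourhood of $\boldsymbol{\xi}_k^0$, so that the $O(T^{\delta})$ gradient perturbation provably becomes an $O(T^{\delta-1})$ shift of the argmax; within that step, carefully controlling the contribution of the $O(T^{\delta})$ observations imported from a neighbouring segment — whose log-likelihood terms are bounded but not mean-zero under the $k$-th model — is the delicate point, and is where Lemmas~\ref{lemma_2} and~\ref{lemma_10} do the work.
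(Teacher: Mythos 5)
Your proposal is correct in substance and rests on the same two pillars as the paper's proof --- a score/Hessian expansion with gradient of order $T^{1/2}$ (plus an $O(T^{\delta})$ endpoint-misestimation term from Lemma~\ref{lemma_9}) against a Hessian of exact order $T$ --- but it organizes them through a different decomposition. The paper performs a single Taylor expansion of the score $\tilde L_T^{\prime(k)}$ at the \emph{estimated} endpoints around the \emph{true} parameter $\boldsymbol{\xi}_k^0$, then bounds that score directly by writing it as the score over the true segment (an $O_p(T^{1/2})$ stationary ergodic martingale-difference partial sum, via Lemma~\ref{lemma_10} and \cite{pauly2011weighted}) plus an $O_p(T^{\delta})$ boundary correction, and inverts the order-$T$ Hessian once. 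You instead split $\boldsymbol{\hat{\xi}}_T^{(k)}(\hat{\lambda}_{k-1},\hat{\lambda}_k)-\boldsymbol{\xi}_k^0$ into the true-endpoint estimation error, delegated wholesale to Theorem~3.1 of \cite{fan2022uncertainty} with $L_{\min,k}\gtrsim T$, and a perturbation-of-the-argmax term driven by the $O(T^{\delta})$ change in the objective, expanded around $\boldsymbol{\hat{\xi}}_T^{(k)}(\lambda_{k-1}^0,\lambda_k^0)$ where the gradient vanishes. What your route buys is a cleaner separation of the two error sources and a direct reuse of the CARE estimation theory for the dominant term; what the paper's route buys is needing only one expansion and one Hessian lower bound, and it makes the martingale $O_p(T^{1/2})$ rate explicit rather than importing it. The uniform strong-concavity issue you flag as the delicate point is exactly what Lemmas~\ref{lemma_4}--\ref{lemma_6} are used for in the paper (uniform convergence of $\tfrac{1}{T}\tilde L_T^{\prime\prime(k)}$ over admissible $(\lambda_l,\lambda_u)$ to a positive-definite limit on $\Theta_n$), so your anticipated obstacle is resolved by the same machinery; the only caveat you share with the paper is that Lemma~\ref{lemma_9} is an in-probability statement, so the ``a.s.'' in the conclusion is really $O_p(T^{-1/2})$ in both versions.
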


\begin{proof}
Denote $(\hat{\lambda}_{k-1},\hat{\lambda}_k)$ by $(\lambda_l, \lambda_u)$ for simplicity. Let $\Tilde{L}_T^{'(k)}\left({\boldsymbol{\xi}}_k, \lambda_l, \lambda_u ; \mathbf{x}_k\right)$ and $\Tilde{L}_T^{''(k)}\left({\boldsymbol{\xi}}_k, \lambda_l, \lambda_u ; \mathbf{x}_k\right)$ be the first and second partial derivatives of $\Tilde{L}_T^{(k)}\left({\boldsymbol{\xi}}_k, \lambda_l, \lambda_u ; \mathbf{x}_k\right)$, respectively. Apply Taylor expansion to $\Tilde{L}_T^{'(k)}\left({\boldsymbol{\xi}}_k, \lambda_l, \lambda_u ; \mathbf{x}_k\right)$ around the true parameter vector value ${\boldsymbol{\xi}}_k^0$, we have 
\begin{equation}
\label{proof_lemma_11_1}
    \Tilde{L}_T^{'(k)}\left(\boldsymbol{{\hat{\xi}}}_k, \lambda_l, \lambda_u ; \mathbf{x}_k\right) =  \Tilde{L}_T^{'(k)}\left({\boldsymbol{\xi}}_k^0, \lambda_l, \lambda_u ; \mathbf{x}_k\right) + \Tilde{L}_T^{''(k)}\left({\boldsymbol{\xi}}_k^+, \lambda_l, \lambda_u ; \mathbf{x}_k\right)(\hat{{\boldsymbol{\xi}}}_k-{\boldsymbol{\xi}}_k^0),
\end{equation}
where $\left|{\boldsymbol{\xi}}_k^+-{\boldsymbol{\xi}}_k^0\right| < \left|\boldsymbol{{\hat{\xi}}}_k-{\boldsymbol{\xi}}_k^0\right|$. By definition of $\boldsymbol{{\hat{\xi}}}_k$, we have $L_T^{'(k)}\left(\boldsymbol{{\hat{\xi}}}_k, \lambda_l, \lambda_u ; \mathbf{x}_k\right)=0$. Therefore, (\ref{proof_lemma_11_1}) is equivalent to 
\begin{equation}
\label{proof_lemma_11_2}
    \Tilde{L}_T^{''(k)}\left({\boldsymbol{\xi}}_k^+, \lambda_l, \lambda_u ; \mathbf{x}_k\right)(\boldsymbol{{\hat{\xi}}}_k-{\boldsymbol{\xi}}_k^0) = -\Tilde{L}_T^{'(k)}\left({\boldsymbol{\xi}}_k^0, \lambda_l, \lambda_u ; \mathbf{x}_k\right).
\end{equation}
So combining Lemma \ref{lemma_1}, Lemma \ref{lemma_9} and Lemma \ref{lemma_10}, we have 
\begin{equation*}
\begin{aligned}
   \Tilde{L}_T^{'(k)}\left({\boldsymbol{\xi}}_k^0, \lambda_l, \lambda_u ; \mathbf{x}_k\right) &=  
   L_T^{'(k)}\left({\boldsymbol{\xi}}_k^0, \lambda_l, \lambda_u ; \mathbf{x}_k\right) + O(T) \\ & = \sum_{i=[T\grave{\lambda}_l]+1}^{[T\grave{\lambda}_u]}l_k^{'}\left({\boldsymbol{\xi}}_k^0; x_{i,k}|x_{l,k},l<i\right) + O_p(T^\delta) \\
   &=  \sum_{i=1}^{[T\grave{\lambda}_u]}l_k^{'}\left({\boldsymbol{\xi}}_k^0; x_{i,k}|x_{l,k},l<i\right) - \sum_{i=1}^{[T\grave{\lambda}_d]}l_k^{'}\left({\boldsymbol{\xi}}_k^0; x_{i,k}|x_{l,k},l<i\right) + O_p(T^\delta),
\end{aligned}
\end{equation*}
where $\grave{\lambda}_l=\max \left(0, \lambda_l\right)$ and $\grave{\lambda}_u=\min \left(1, \lambda_u\right)$. Since $E\left(l_k^{'}\left({\boldsymbol{\xi}}_k^0; x_{i,k}|x_{l,k}, l<i \right) \right)=0$, so the sequence $\{l_k^{'}\left({\boldsymbol{\xi}}_k^0; x_{i,k}|x_{l,k}, l<i \right)_{i\in [N]}\}$ is a stationary ergodic zero-mean martingale difference sequence with finite second moment. Thus, from \cite{pauly2011weighted} $$\sum_{i=1}^{[T\grave{\lambda}_u]}l_k^{'}\left({\boldsymbol{\xi}}_k^0; x_{i,k}|x_{l,k},l<i\right)
\quad \mbox{and} \quad 
\sum_{i=1}^{[T\grave{\lambda}_d]}l_k^{'}\left({\boldsymbol{\xi}}_k^0; x_{i,k}|x_{l,k},l<i\right)$$
are of order $O_p(T^{\frac{1}{2}})$. Thus, we have $\Tilde{L}_T^{'(k)}\left({\boldsymbol{\xi}}_k^0, \lambda_l, \lambda_u ; \mathbf{x}_k\right)=O_p(1)$ and $L_k^{''}({\boldsymbol{\xi}}_k^0)$ is positive definite. Together with Lemma \ref{lemma_5} and $\left|{\boldsymbol{\xi}}_T^+ - {\boldsymbol{\xi}}_T^0 \right|\rightarrow O_p(1)$, $\frac{1}{T}\Tilde{L}_T^{''(k)}\left({\boldsymbol{\xi}}_k^+, \lambda_l, \lambda_u ; \mathbf{x}_k\right)(\boldsymbol{{\hat{\xi}}}_k-{\boldsymbol{\xi}}_k^0)$ is positive definite. Combining all the above, the lemma is proved.
\end{proof}

\subsection{Proof of Theorem~\protect\ref{Theorem1}}
We are now ready to prove Theorem \ref{Theorem1}.
\begin{proof}
From Lemma \ref{lemma_11}, we have $\left|\boldsymbol{\hat{\xi}}_T^{(k)}(\hat{\lambda}_{k-1},\hat{\lambda}_{k}) - \boldsymbol{\xi}_k^0 \right| = O\left(T^{-\frac{1}{2}}\right) \text{ a.s.}$ Following Lemma \ref{lemma_8} and Lemma \ref{lemma_9}, it is suffice to prove that for any integer $d=1,\cdots,M-K_0$, any $\delta > 0$ and any sequence $\boldsymbol{\Tilde{\lambda}_T}=(\Tilde{\lambda}_1,\cdots,\Tilde{\lambda}_{K_0})$ such that $|\lambda_k^0-\Tilde{\lambda}_k|=O(T^{\delta-1})$ for $k=1,\cdots,K_0$,
\begin{equation}
\label{Proof_theorem_1}
    \argmin_{\boldsymbol{\xi},\boldsymbol{\lambda}\in A_{\epsilon_\lambda}^{(K_0+d)}\atop \boldsymbol{\Tilde{\lambda}_T}\subset \boldsymbol{\lambda}} \left[\frac{2}{T}MDL(K_0+d,\boldsymbol{\lambda},\boldsymbol{\xi}) \right] - \frac{2}{T}MDL(K_0,\boldsymbol{\Tilde{\lambda}_T},\boldsymbol{\xi}^0)
\end{equation}
is positive with a probability approaching 1. Denote 
$$\boldsymbol{\hat{\lambda}}_T=(\hat{\lambda}_1,\cdots,\hat{\lambda}_{K_0+d+1}) = \argmin_{\boldsymbol{\xi},\boldsymbol{\lambda}\in A_{\epsilon_\lambda}^{(K_0+d)}\atop \boldsymbol{\Tilde{\lambda}}\subset \boldsymbol{\lambda}} \left[\frac{2}{T}MDL(K_0+d,\boldsymbol{\lambda},\boldsymbol{\xi}) \right]. $$

First note that $\boldsymbol{\Tilde{\lambda}}_T\subset \boldsymbol{\hat{\lambda}}_T$ by construction. Using Taylor expansion on the likelihood function, (\ref{Proof_theorem_1}) can be rewritten as 
\begin{equation}
\begin{aligned}
\label{proof_theorem_2}
    &(\ref{Proof_theorem_1}) = C_1 - C_2 \\
    &+ \frac{1}{T}\left(\sum_{k=1}^{K_0+1}\Tilde{L}_T^{(k)}\left(\boldsymbol{\xi}_k^0, \Tilde{\lambda}_{k-1}, \Tilde{\lambda}_k ; \mathbf{y}\right) - \sum_{l=1}^{K_0+d+1}\Tilde{L}_T^{(l)}\left(\boldsymbol{\hat{\xi}}_T^{(l-1)}, \hat{\lambda}_{l-1}, \hat{\lambda}_l ; \mathbf{y}\right) \right) \\
    &- \sum_{l=1}^{K_0+d+1}\left(\boldsymbol{\hat{\xi}}_T^{(l)}-\boldsymbol{\xi}_l^* \right)^{\top}\frac{1}{T}\Tilde{L}_T^{''(k)}\left(\boldsymbol{\xi}_k^+, \hat{\lambda}_{l-1}, \hat{\lambda}_{l} ; \mathbf{y}\right)\left(\boldsymbol{\hat{\xi}}_T^{(l)}-\boldsymbol{\xi}_l^* \right),
\end{aligned}
\end{equation}
where $C_1-C_2$ is positive and of order $O(\frac{\log(T)}{T})$ and $\left|\boldsymbol{\xi}_k^+-\boldsymbol{\xi}_k^0\right| < \left|\boldsymbol{\hat{\xi}}_k-\boldsymbol{\xi}_k^0\right|$.

The third part of $(\ref{proof_theorem_2})$ is 0 and since $\left|\boldsymbol{\hat{\xi}}_T^{(l)}(\hat{\lambda}_{k-1},\hat{\lambda}_{k}) - \boldsymbol{\xi}_l^0 \right| = O\left(T^{-\frac{1}{2}}\right)$. As the fourth part is of order $O_p(T^{-1})$, $C_1-C_2$ is the dominant part in (\ref{proof_theorem_2}). Thus (\ref{Proof_theorem_1}) is positive with probability approaching 1. So, the theorem is proved.
\end{proof}

\end{appendices}

\bibliographystyle{jasa3}
\bibliography{PS-CARE}

\end{document}